\newtheorem{theorem}{Theorem}
\newtheorem{proposition}{Proposition}[section]
\newtheorem{lemma}[proposition]{Lemma}
\newtheorem{corollary}[theorem]{Corollary}
\newtheorem{remark}[proposition]{Remark}
\newtheorem{definition}[proposition]{Definition}
\newcommand{\op}[1]{{\rm{#1}}}
\newcommand{\ep}{\epsilon}
\newcommand{\R}{\mathbb R}
\newcommand{\N}{\mathbb N}
\newcommand{\yg}{|y|^\gamma}
\newcommand{\w}{\mathcal W_\alpha}
\newcommand{\nab}{\nabla}
\newcommand{\loc}{{\text{loc}}}
\newcommand{\ve}{\varepsilon}
\DeclareMathOperator*{\argmin}{arg\,min}
\title{Equidistribution of jellium energy for Coulomb and Riesz Interactions}
\author[M. {Petrache}]{Mircea {Petrache}$^1$}
\address{$^1$ Max-Planck Institute for Mathematics,
Vivatstrasse 17, 53111, Bonn, Germany.}
\email{mircea.petrache@upmc.fr}
\author[S. {Rota Nodari}]{Simona {Rota Nodari}$^2$}
\address{$^2$ Institut de Math\'ematiques de Bourgogne (IMB), CNRS, UMR 5584,
Universit\'e Bourgogne Franche-Comt\'e,
F-21000 Dijon, France.}
\email{simona.rota-nodari@u-bourgogne.fr}
\thanks{The authors are grateful to  S. Serfaty and to S. Torquato for helpful discussions and useful comments.}
\begin{document}
\maketitle
\begin{abstract}
\noindent For general dimension $d$ we prove the equidistribution of energy at the micro-scale in $\mathbb R^d$, for the optimal point configurations appearing in Coulomb gases at zero temperature. At the microscopic scale, \emph{i.e.} after blow-up at the scale corresponding to the interparticle distance, in the case of Coulomb gases we show that the energy concentration is precisely determined by the macroscopic density of points, independently of the scale. This uses the ``jellium energy'' which was previously shown to control the next-order term in the large particle number asymptotics of the minimum energy. As a corollary, we obtain 
sharp error bounds on the discrepancy between the number of points and its expected average of optimal point configurations for Coulomb gases, extending previous results valid only for $2$-dimensional log-gases. For Riesz gases with interaction potentials $g(x)=|x|^{-s}, s\in]\min\{0,d-2\},d[$ and one-dimensional log-gases, we prove the same equidistribution result under an extra hypothesis on the decay of the localized energy, which we conjecture to hold for minimizing configurations. In this case we use the Caffarelli-Silvestre description of the non-local fractional Laplacians in $\mathbb R^d$ to localize the problem.
\end{abstract}

\section{Introduction}
A long-standing question and direction of research at the intersection of mathematics and physics is to ask how solving the minimization problem of sums of two-body interactions between a large number of particles, or more simply between a large number of points, can lead to ``collective behavior'' of the minimizers, in which some better order structure is seen to emerge. A type of emergent phenomenon, in which a more rigid structure for minimizers tends to diminish the overall complexity of the configurations and is observed empirically in a large number of situations, is usually termed ``crystallization''. This name refers in the most restrictive meaning to the appearance of periodic structures for minimizers (see the recent review \cite{blanclewin}). From a statistical physics viewpoint, the case in which we have a more ordered structure with higher correlations than a random one fits within the framework of crystallization.\\

The particular model which we consider here comes from the theory of Riesz gases already studied in \cite{gl13,ss2d,ss1d,rs,ps}. At zero temperature, we individuate and rigorously prove a \emph{rigidity phenomenon} which is a weak version of crystallization. A corollary of our results is that minimizers of the renormalized Coulomb gases are very uniform configurations (see Theorem~\ref{discrepancybound} below). The present work extends the result \cite{rns} valid for the $2$-dimensional Coulomb gases to the case of general dimension $d$ and of Riesz gases with power-law interactions with power $s\in[\min\{0,d-2\}, d[$, using the strategies for localizing the energy available in \cite{ps}, and inspired by \cite{cafsil}. In particular, the present result completes the parallel between the work of Rota Nodari and Serfaty \cite{rns} and the one of Alberti, Choksi and Otto \cite{aco} to general dimensions, for the case $s=d-2$. A consequence of this parallel and of the result of the present work, is the conjecture that \cite{aco} might have extensions to nonlocal interactions corresponding to Green functions for the fractional laplacian. In two dimensions the ``Abrikosov conjecture'' \cite{gl13, ps} valid in this range of exponents is that the renormalized energy is in fact minimized by a suitably rescaled copy of the triangular lattice $\mathbb Z + e^{i\pi/3}\mathbb Z$. An analogous conjecture holds for the minimizers of the Ohta-Kawasaki type model of \cite{aco}. It is believed that in high enough dimension the lattice structure is not characteristic of minimizing configurations.\\

Crystallization problems have up to now been solved only for specific short range interaction potentials (see \cite{The, BPT, HR, Sut, Rad} and references therein) that do not cover Coulomb forces, or in 1D systems \cite{BL, Kun, ss1d}. As a positive result, in \cite{gl13,ps} it was shown however that in dimension 2 and for the above range of exponents $s$, if the minimizer is a lattice, then it has to be the triangular one.\\

Recently, the study of one-component plasmas at positive temperature has received a lot of attention. Results to some extent parallel to ours in the case of $2$-dimensional Coulomb gases was provided by Bauerschmidt-Bourgade-Nikula-Yau \cite{bourgade} and Lebl\'e-Serfaty \cite{lebleserf2}, whose main result is interpretable as a quantification of discrepancy at the microscopic scale at positive temperature. Again in the $2$-dimensional case at positive temperature, bootstrapping techniques similar to ours have allowed Lebl\'e \cite{leb2d} to prove microscopic energy distribution results. Other results in the same spirit concern universality of the law of eigenvalues of random matrices \cite{ben1998large,bourgade2014edge,bourgade2012bulk,bourgade2014universality,bourgade2014local,valko2009continuum}. In the random setting the analogue of our discrepancy bounds (see Theorem~\ref{discrepancybound} below) are so-called charge fluctuation results, see also \cite{zabrodin2006large} and the references therein.

\subsection{General setting of the problem}
We now pass to the precise description of our problem. We study the equilibrium properties of  a system of $n$  points in the full space of dimension $d\ge 1$, interacting via Riesz kernel interactions and confined by an ``external field" or potential $V$. More precisely, we consider energies of the form  
\begin{equation} \label{Hn}
H_n(x_1, \cdots, x_n)=   \sum_{i\neq j} g (x_i-x_j)  +n \sum_{i=1}^n V(x_i)
\end{equation}
where  $x_1 , \cdots, x_n$ are $n$ points in $\mathbb R^d$ and the interaction kernel is given by either
\begin{equation}\label{kernel}
 g(x)=\frac{1}{|x|^{s}}\qquad \max(0, d-2)\le s<d, 
 \end{equation} 
or  
\begin{equation}\label{wlog}
g(x)=-\log |x| \quad \text{in dimension } d=1,
\end{equation} 
or 
\begin{equation}\label{wlog2d}
g(x) = - \log |x| \quad \text{in dimension } d=2.
\end{equation}
In the mean-field setting, the factor $n$ multiplying the one-body potential term has the role of giving equal influence to this term as compared to the two-body interaction term. If $V$ has some particular homogeneity, then often we can reduce to an energy of this form by an appropriate scaling. The case of $s\in [d-2,d[, s<0$, which is not treated here, can happen only for $d=1$ and this seems to be a more tractable situation. In particular the case $s=-1$, \emph{i.e.} $g(x)=|x|$, was shown to be completely solvable \cite{aizenmanmartin, Len1, Len2, BL, Kun}. Note that, in what follows, we will take the convention that $s=0$ when $g(x)=-\log|x|$, \emph{i.e.} in the cases \eqref{wlog} and \eqref{wlog2d}.
\par The reason why systems of particles with Coulomb and Riesz interactions are interesting in \emph{statistical physics} is that they represent the most basic model containing the long-range interaction potentials typical of the electrostatic potential. For studies in the Coulomb case see \cite{SM, LO, JLM, psmith}, and  see \cite{Ser} for a review. The possibility of changing the exponent $s$ allows to ``turn on'' or ``off'' the locality of the interactions. The case $s\ge d$ (also called \emph{hypersingular} case \cite{sk, bhsreview}) corresponds to interactions of more local nature. In \cite{gl13}, the precise energy of our form is linked to the study of vortex systems, that appear in classical and quantum fluids \cite{clmp, cy, cry} and in fractional quantum Hall physics \cite{Gir, RSY1, RSY2}.
\par Our interaction energy is also appearing in the theory of \emph{random matrix ensembles} \cite{agz, meh, dei}. Worth mentioning, especially the Ginibre ensembles, as was described in \cite{ginibre1965statistical, Wi2} and exploited by Dyson starting in \cite{dyson1} for $d=2,\ s=0$ and GOE or GUE for $d=1,\ s=0$, as described in \cite{forrester2010log} and the references therein. 
\par Another direction of study in which this type of energy appears is related to Smale's $7$th problem \cite{smalenextcentury}, which asks to find fast algorithms for minimizing our energy up to a very small error. Studies of this question are related to the optimal conditioning for interpolation points and to the theory of quadrature (see \cite{shub, sk, safftotik} and the references therein).\\

The leading order behavior of minimizers of $H_n$ is known: there holds
\begin{equation*}
\frac{1}{n} \sum_{i=1}^n\delta_{x_i}\rightharpoonup \mu_V,
\end{equation*}
where the convergence is the weak convergence of probability measures, and $\mu_V$ is the equilibrium measure, \emph{i.e.} the minimizer of the energy
\begin{equation}\label{defI}
I(\mu):=\int\int_{\mathbb R^d\times\mathbb R^d}g(x-y)d\mu(x)d\mu(y) + \int_{\mathbb R^d}V(x)d\mu(x)\ .
\end{equation}
The next-order behavior of $H_n$ and of its minimizers is observed at the scale $n^{-1/d}$ at which (after blow-up) the points become well-separated. As first observed in \cite{ss2d}, \cite{ss1d} via methods later extended in \cite{rs} and \cite{ps} to our general setting, if $\mu_V$ is the minimizer of $I$, then $H_n$ can be split into two contributions corresponding to a constant leading order term and a typically next order term as follows:
\begin{equation}\label{split}
H_n(x_1, \dots, x_n)= n^2 \mathcal I (\mu_V) +2 n \sum_{i=1}^n \zeta(x_i)+  n^{1+s/d}w_n(x_1, \dots, x_n)
\end{equation} in the case \eqref{kernel}
and respectively 
\begin{equation}\label{splitlog}
H_n(x_1,\dots, x_n)= n^2 \mathcal I (\mu_V) - \frac{n}{d} \log n  + 2n\sum_{i=1}^n \zeta(x_i) + nw_n(x_1, \dots, x_n)
 \end{equation}
in the cases \eqref{wlog} or \eqref{wlog2d}, where $w_n$ will be made explicit in Proposition \ref{splitting} and $\zeta $ is an ``effective potential'' (defined below in \eqref{defzeta}) depending only on $V$, which is nonnegative and vanishes on $\op{supp}(\mu_V):=\Sigma$. As shown in \cite{ss2d,ss1d,rs,ps}, $w_n$ has a limit $\mathcal W$ as $n\to \infty$, which is our renormalized energy. The precise definition of $\mathcal W$ is given in Section~\ref{secrenoren} below in terms of the potential generated by the limits of configurations blown-up at the scale $n^{1/d}$. The renormalization procedure consists in considering first a version of the energy where the charges are ``spread'' at scale $\eta$, and defining an energy $\mathcal W_\eta$ as the version of $\mathcal W$ where the self-interaction term of these spread charges, which becomes infinite as $\eta\to 0$, is removed. Then $\mathcal W$ is defined as $\lim_{\eta\to 0} \mathcal W_\eta$. Due to the splitting formulas \eqref{split}, \eqref{splitlog}, minimizers of $H_n$ converge, after blow-up, to minimizers of $\mathcal W$. As mentioned above, it is a hard mathematical conjecture corroborated by simulations and experimental evidence (the so-called ``Abrikosov conjecture'' in $2$-dimensions being the most celebrated case), that in low dimensions the minimum of $\mathcal W$ is achieved at simple crystalline configurations, \emph{i.e.} minimizers of $\mathcal W$ are expected to ressemble perfect lattices. In \cite{ss2d,ss1d,rs,ps} the analysis of the microscopic behavior of minimizers of $H_n$ was thus connected to the behavior of minimizers of $\mathcal W$ by allowing to rigorously formulate the crystallization conjecture in terms of $\mathcal W$.

\subsection{Statement of the main results}
We now state the main results of our paper. As said before, these results are the generalization of the result of \cite{rns} for the $2$-dimensional Coulomb gases to the case of general dimension $d$ and of Riesz gases with power-law interactions. More precisely, we prove that the renormalized energy $\mathcal W$ is equidistributed at the microscopic scale in an arbitrary square provided that the square is chosen sufficiently far away from $\partial \Sigma$. Moreover, we improve the result of \cite[Thm. 4]{ps}, where it was established that almost minimizers of $H_n$ tend to minimize $\mathcal W$ after blow-up at scale $n^{1/d}$ around \emph{almost every} point in $\Sigma$. Here we show that if we deal with a minimizer of $H_n$ this holds after blow-up around \emph{any} point sufficiently far from the boundary of $\Sigma$ (see Section \ref{assumptionsdefinitions} below for precise definitions). Note that for the case $k=1$ we require the extra condition \eqref{hypdecayfields}, which will be discussed in Section \ref{discussextrahyp} below. We conjecture that this hypothesis is automatically verified for sequences of minimizing configurations, but it seems to be out of reach of the present methods. We expect that fundamentally new methods and ideas will be needed for proving this conjecture.

\begin{theorem}\label{mainthm}
Let $(x_1,\cdots,x_n)$ be a minimizer of $H_n$. Let $\mu_V=m_V(x)dx,\ \mu_V'=m_V'(x')dx'$ be respectively the equilibrium measure and its blow-up at scale $n^{1/d}$. Let $\Sigma'$ be the support of $\mu_V'$ and suppose that $m_V\in C^{0,\alpha}(\Sigma)$ for some $\alpha\in ]0,1]$.\\

Let $E_n'=\nabla h_n'$ be the vector fields expressed as the gradient of the potentials of blow-up configurations corresponding to these minimizers, as in \eqref{hnp} below. If $k=1$,  assume that
\begin{equation}\label{hypdecayfields}
\lim_{t\to\infty}\lim_{R\to\infty}\lim_{n\to\infty}\frac{1}{|K_R|}\int_{K_R\times(\R\smallsetminus[-t,t])}\yg|E_n'|^2=0
\end{equation}
uniformly with respect to the choice of the centers of the hypercubes $K_R$.

If $k=0$, there exists $q\in ]0,1[$ such that for $a_n\in \Sigma'$ in the regime where $\op{dist}(K_\ell(a_n),\partial\Sigma')\ge n^{q/d}$, we have

\begin{equation}\label{mainlimit}
\lim_{\eta\to 0} \limsup_{\ell\to\infty} \limsup_{n\to\infty}\left|\frac{\mathcal W_\eta(E_n',K_\ell(a_n))}{|K_\ell|} -\frac{1}{|K_\ell|}\int_{K_\ell(a_n)}\min_{\mathcal A_{m_V'(x')}}\mathcal W dx'\right|=0\,.
\end{equation}

If $k=1$, for every $\varepsilon_0 >0$ there exists a convergence regime depending on \eqref{hypdecayfields} and compatible with the condition  
 $\op{dist}(K_\ell(a_n),\partial\Sigma')\ge \varepsilon_0 n^{1/d}$ for $a_n\in \Sigma'\times\{0\}$ such that \eqref{mainlimit} holds.
\end{theorem}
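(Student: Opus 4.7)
I would prove Theorem~\ref{mainthm} by combining a local $\Gamma$-liminf bound with a contradiction argument based on surgery and the minimality of $(x_1, \ldots, x_n)$, following the template of \cite{rns}. The easy half is the inequality
\begin{equation*}
\liminf_{\eta \to 0} \liminf_{\ell \to \infty} \liminf_{n \to \infty}\frac{\mathcal W_\eta(E_n', K_\ell(a_n))}{|K_\ell|} \geq \frac{1}{|K_\ell|}\int_{K_\ell(a_n)} \min_{\mathcal A_{m_V'(x')}} \mathcal W\, dx',
\end{equation*}
which follows directly from the lower bound results of \cite{ps}, using the Hölder regularity $m_V \in C^{0,\alpha}$ to treat the macroscopic density as locally constant at the microscopic scale and identify the local density of the blown-up configuration with $m_V'$.

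The main work is the matching upper bound. Suppose for contradiction that along some subsequence one has $\mathcal W_\eta(E_n', K_\ell(a_n))/|K_\ell| \geq |K_\ell|^{-1}\int_{K_\ell(a_n)} \min_{\mathcal A_{m_V'(x')}} \mathcal W\, dx' + \delta$ for some fixed $\delta > 0$. I would then construct a competitor configuration $(x_1',\ldots, x_n')$ that strictly lowers $H_n$, contradicting minimality. The surgery is the standard one: on a slightly enlarged cube $K_{\ell+\ell^\beta}(a_n)$ with $0<\beta<1$, replace $E_n'$ inside $K_\ell(a_n)$ with the field generated by a near-minimizer of $\mathcal W$ at density $m_V'(a_n)$ (such point realizations exist by the approximation lemmas of \cite{ps}), and in the annulus $K_{\ell+\ell^\beta}\setminus K_\ell$ use the screening construction of \cite{ps} to reconcile the inner near-minimizer with the outer field $E_n'$. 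The replacement saves $\delta\ell^d$ on the inner cube, and the screening annulus adds at most $O(\ell^{d-1+\beta})$ to the energy; for $\ell$ large and $\beta<1$ the net is a strict decrease, hence the contradiction. The hypothesis $\op{dist}(K_\ell(a_n),\partial\Sigma') \geq n^{q/d}$ ensures the surgery occurs well inside the bulk, so the outer configuration and the effective potential $\zeta$ can indeed be treated as they are in the minimization setup.

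In the Riesz case $k=1$, \emph{i.e.} $s\in]d-2,d[$ (and the 1D log case), the energy is nonlocal, so I carry out the same scheme in the Caffarelli-Silvestre extension \cite{cafsil} on $\R^{d+1}$ with weight $|y|^\gamma$. A microscopic cube $K_\ell(a_n)\subset\R^d$ lifts to the unbounded slab $K_\ell(a_n)\times\R$, which prevents a direct surgery. The role of hypothesis \eqref{hypdecayfields} is precisely to truncate the problem to a finite slab $K_\ell\times[-t,t]$ with vanishingly small discarded energy, uniformly in the center $a_n$; this is what allows the screening to be performed as a weighted degenerate-elliptic Neumann extension over an annulus $\times[-t,t]$, with a bound of the same surface-type order $O(\ell^{d-1+\beta})$.

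The principal obstacle is the screening estimate itself. In the Coulomb case ($k=0$) it reduces to solving a Neumann problem for $-\Delta$ with compatible boundary data coming from matching the inner and outer potentials; in the Riesz case ($k=1$) one has to solve the analogous weighted Neumann problem for $\op{div}(|y|^\gamma\nabla\cdot)$ on the finite slab, and the $O(\ell^{d-1+\beta})$ control requires both the decay captured by \eqref{hypdecayfields} and careful matching of the charge/density across the transition annulus so that the resulting modified configuration has the correct total number of points and respects the effective potential $\zeta$. Making the density match exactly while keeping the annular configuration close to a minimizer of $\mathcal W_{m_V'(a_n)}$ is the delicate quantitative point; everything else essentially reduces to combining the ingredients already developed in \cite{ps,rns}.
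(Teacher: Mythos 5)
Your overall architecture (a local lower bound plus a minimality-based comparison with a screened competitor) is the right one and matches the paper's in spirit: the paper encodes your ``surgery + contradiction'' step as Lemma~\ref{lemminenergy} combined with Propositions~\ref{propub} and~\ref{proplb}, and your reading of the role of \eqref{hypdecayfields} for $k=1$ is correct. But there is a genuine gap at the heart of the argument: you assume the screening annulus around $K_\ell(a_n)$ costs only $O(\ell^{d-1+\beta})$, and this presupposes a \emph{good boundary} at scale $\ell$, i.e.\ a slice $\partial K_\ell'$ on which $\int_{\partial K_\ell'\times[-t,t]^k}\yg|E_{n,\eta}'|^2$ is of surface order. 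The only a priori energy bound available for a minimizer is the global one \eqref{boundWeta} at scale $T=n^{1/d}$; a mean-value selection of a boundary slice inside a shell of width $\ell^b$ then yields (cf.\ \eqref{enbd1}) a bound of order $C_1\ve_1^{-1}\ell^{d-b}$ with $\ve_1=(\ell/n^{1/d})^d$, i.e.\ of order $n\,\ell^{-b}$, which is useless whenever $\ell\ll n^{1/d}$. This is exactly why the paper runs a multiscale bootstrap (Section~\ref{proofnocrenel}): it first proves the equidistribution estimate at scale $L^{(1)}\sim\ve_1^{1/d}n^{1/d}$, uses it as a \emph{new} a priori bound to select good boundaries at the next scale $L^{(2)}$, and iterates down to $\ell$; the condition $\op{dist}(K_\ell(a_n),\partial\Sigma')\ge n^{q/d}$ is there precisely to absorb the cumulative displacement $\sum_j l^{(j)}$ of the cubes along the iteration, not merely to keep the surgery ``in the bulk''. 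Without this iteration your contradiction argument does not close for the stated range of $\ell$.

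Relatedly, the ``easy half'' is not easy: the lower bound for an \emph{arbitrary fixed} cube does not follow directly from the $\Gamma$-liminf results of \cite{ps}, which control averages over blow-up centers and hence only \emph{almost every} cube; upgrading ``almost every'' to ``every'' is the whole point of the theorem. For a fixed cube one must again select a good boundary, screen to zero Neumann data on each cell of a partition (with $\int\rho\in\mathbb N$ arranged as in Lemma~\ref{subdivision}), correct the density, and periodize to compare with $\min_{\mathcal A_{m_V'}}\mathcal W$ (Proposition~\ref{proplb}). This requires the same a priori boundary control, so both halves of your argument hinge on the missing multiscale step.
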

In the above result it is natural to ask under which conditions we can interchange the renormalization limit $\eta\to0$ with the other ones, obtaining a result valid for $\mathcal W$ rather than for the family $\mathcal W_\eta$. Our proof strategy for the above result is to select ``good boundaries'',  and then use a screening procedure like in \cite{ps}, in order to compare different boundary conditions for the minimizers. In this case the requirement for a ``good boundary'' is that the field $E_\eta$ should not have a large concentration of energy on such boundaries.
\par Unfortunately the purely energetic considerations which we apply in our proof make it impossible to control whether or not the locations of the supports $\partial B(p,\eta),\ p\in\Lambda$ of the smeared charges $\delta_p^{(\eta)}$ appearing in the second term in \eqref{Weta} ``follow'' the energy concentration of $E_\eta$ locally near such good boundaries, and governed by the first term in \eqref{Weta}. In this sense the definition \eqref{Weta} of our energy is really just a global one, and it may happen that large discrepancies between the behaviors $\mathcal W_\eta(K_\ell(a))$ and $\int_{K_\ell(a)\times\mathbb R^k}\yg|E_\eta|^2$ occur for exceptional choices of $K_\ell(a)$. This lack of control prevents the exchange of the $\eta\to0$ limit with the $n,\ell\to\infty$ limit without further assumptions on $K_\ell(a)$.
\par However, if we allow ourselves to slightly perturb the cubes and if we use the charge separation result of Proposition~\ref{separation}, stated below, we can perform the desired interchange of limits for the perturbed hyperrectangles, and we obtain the following:
\begin{theorem}\label{secondmainthm}
Let $(x_1,\cdots,x_n)$ be a minimizer of $H_n$. Let $\mu_V=m_V(x)dx,\ \mu_V'=m_V'(x')dx'$ be respectively the equilibrium measure and its blow-up at scale $n^{1/d}$ as above. Let $\Sigma'$ be the support of $\mu_V'$ and suppose that $m_V\in C^{0,\alpha}(\Sigma)$ for some $\alpha\in ]0,1]$.\\

Let $E_n'=\nabla h_n'$ be a sequence of blown-up vector fields corresponding to these minimizers as in \eqref{hnp} below. If $k=1$,  assume that \eqref{hypdecayfields} holds uniformly with respect to the choice of the centers of the hypercubes $K_R$.

In either of the regimes valid for cases $k=0,\ k=1$ and linking $a_n,\ell,n$ like in Theorem~\ref{mainthm} there exists sets $\Gamma_n$ which can be expressed as bi-Lipschitz deformations $f_n:K_\ell(a_n)\to\Gamma_n$ such that $\|f_n-id\|_{L^\infty}\le 1$ and such that we have
\begin{equation}\label{mainlimitbis}
\limsup_{\ell\to\infty} \limsup_{n\to\infty} \left|\frac{\mathcal W(E_n',\Gamma_n)}{|\Gamma_n|} -\frac{1}{|\Gamma_n|}\int_{\Gamma_n}\min_{\mathcal A_{m_V'(x')}}\mathcal W dx'\right|=0\,.
\end{equation}
Moreover, we may assume that $\Gamma_n$ is a hyperrectangle.
\end{theorem}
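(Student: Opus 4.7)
The plan is to deduce Theorem~\ref{secondmainthm} from Theorem~\ref{mainthm} by exchanging the outer limit $\eta\to 0$ of \eqref{mainlimit} with the inner limits $\ell,n\to\infty$, at the cost of replacing the cube $K_\ell(a_n)$ by a slightly perturbed hyperrectangle $\Gamma_n$ whose boundary stays at positive distance from every point of the configuration. The reason one cannot apply Theorem~\ref{mainthm} directly with $\mathcal W$ in place of $\mathcal W_\eta$ is exactly the one already discussed after Theorem~\ref{mainthm}: for small $\eta$ the boundary $\partial K_\ell(a_n)$ may cut through smeared-charge balls $B(p,\eta)$, producing a mismatch between the two terms of \eqref{Weta} that cannot be bounded purely energetically. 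Once $\partial\Gamma_n$ lies at distance at least $\eta_n^\star$ from every point, however, for every $\eta<\eta_n^\star$ each smeared charge $\delta_p^{(\eta)}$ is either entirely inside or entirely outside $\Gamma_n$, so the splitting \eqref{Weta} restricted to $\Gamma_n$ is coherent and $\mathcal W_\eta(E_n',\Gamma_n)\to\mathcal W(E_n',\Gamma_n)$ as $\eta\to 0$, with a quantitative rate depending only on the number of points in a bounded neighborhood of $\partial\Gamma_n$.

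The construction of $\Gamma_n$ rests on Proposition~\ref{separation}, which supplies a lower bound $\eta_0>0$ on the minimal inter-charge distance of the blow-up configuration, uniformly in $n$ for minimizers of $H_n$. I pick a sequence $\eta_n^\star\to 0$ with $\eta_n^\star\cdot\ell^{d-1}\to 0$. For each of the $2d$ faces of $K_\ell(a_n)$ I parametrize the outward normal displacement by $t\in[-1,1]$: since the density of points is bounded, the number of points contained in a slab of width $3$ parallel to that face is $O(\ell^{d-1})$, so the set of bad $t$ for which the displaced face comes within $\eta_n^\star$ of some point has Lebesgue measure $O(\ell^{d-1}\eta_n^\star)=o(1)$. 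A Fubini/pigeonhole selection then produces shifts $t_1,\dots,t_{2d}\in[-1,1]$ whose simultaneous use yields a hyperrectangle $\Gamma_n$, together with the associated affine bi-Lipschitz map $f_n:K_\ell(a_n)\to\Gamma_n$, satisfying $\|f_n-\mathrm{id}\|_{L^\infty}\le 1$ and $\mathrm{dist}(\partial\Gamma_n,x_i')\ge\eta_n^\star$ for every point $x_i'$ of the blown-up minimizer. Since the $2d$ faces are shifted independently in their normal directions, $\Gamma_n$ is indeed a hyperrectangle.

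The symmetric difference $K_\ell(a_n)\triangle\Gamma_n$ is contained in a boundary slab of total volume at most $2d\,\ell^{d-1}$. Using the a~priori local upper bound on the $\mathcal W_\eta$-density for minimizers from \cite{ps}, together with the H\"older continuity of $m_V'$, which makes the map $x'\mapsto\min_{\mathcal A_{m_V'(x')}}\mathcal W$ bounded and continuous on $\Sigma'$, both $\mathcal W_\eta(E_n',\cdot)/|\cdot|$ and $|\cdot|^{-1}\int_{\cdot}\min_{\mathcal A_{m_V'(x')}}\mathcal W\,dx'$ agree on $K_\ell(a_n)$ and on $\Gamma_n$ up to an error of order $\ell^{-1}$, uniformly in $\eta$. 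Combining Theorem~\ref{mainthm} applied on $K_\ell(a_n)$, the identification $\lim_{\eta\to 0}\mathcal W_\eta(E_n',\Gamma_n)=\mathcal W(E_n',\Gamma_n)$ from the first paragraph, and a diagonal extraction setting $\eta=\eta_n^\star$, yields \eqref{mainlimitbis}. Both the $k=0$ and the $k=1$ regimes are handled the same way, since the hypothesis \eqref{hypdecayfields} is a global decay property insensitive to the unit-scale perturbation of the localization domain. The main difficulty is the tuning of $\eta_n^\star$: it must shrink fast enough for the Fubini selection to succeed within the unit displacement budget, yet slowly enough for the boundary correction $\mathcal W_\eta(E_n',\Gamma_n)-\mathcal W(E_n',\Gamma_n)$ to remain $o(|\Gamma_n|)$. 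This balance is made possible precisely by the uniform separation supplied by Proposition~\ref{separation}, without which neither the Fubini step nor the boundary-energy estimate could be made uniform in $n$.
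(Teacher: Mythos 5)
Your overall plan --- perturb the cube so that its boundary avoids the charges, then pass to the $\eta\to 0$ limit locally --- is the same as the paper's, but several of the quantitative claims you make along the way are either unjustified or incorrect in a way that would break the chain of estimates.

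First, the claim that $\mathcal W_\eta(E_n',\Gamma_n)\to\mathcal W(E_n',\Gamma_n)$ ``with a quantitative rate depending only on the number of points in a bounded neighborhood of $\partial\Gamma_n$'' is false as stated and, more importantly, is asserted rather than proved. The paper develops a dedicated decomposition (Proposition~\ref{prodecr} and Corollary~\ref{chircor}) showing that once $\op{dist}(\partial\Gamma_n,\Lambda)\ge\varepsilon$ and the charges are $2\eta$-separated, then for all $0<\alpha<\eta<\varepsilon$ one has $\mathcal W_\alpha(E_n',\Gamma_n)-\mathcal W_\eta(E_n',\Gamma_n)=I$ with $|I|\lesssim \min(\eta^{d-s},\eta^d|\log\eta|)\,\#(\Lambda\cap \Gamma_{n,\eta})$; the bound involves \emph{all} the points in a neighborhood of the full set $\Gamma_n$, not just those near $\partial\Gamma_n$ (it is the factor $\min(\eta^{d-s},\eta^d|\log\eta|)$, not a boundary count, that makes the estimate $O(\eta^{d-s})|\Gamma_n|$ and hence $o_{\eta\to 0}(1)$ after normalization). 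This Cauchy-in-$\alpha$ estimate, uniform in $n,\ell$, is the crux of the theorem and must actually be established.

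Second, the comparison between $K_\ell(a_n)$ and $\Gamma_n$ over the thin boundary slab is \emph{not} uniform in $\eta$: by Remark~\ref{sepcariche} the slab of volume $O(\ell^{d-1})$ contributes an error $O(g(\eta)\,\ell^{-1})$ after dividing by $|\Gamma_n|$, and $g(\eta)\to\infty$. This $g(\eta)$ factor is exactly what forces the correct order of limits.

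Third, and most seriously, your prescription $\eta_n^\star\to 0$ with $\eta_n^\star\ell^{d-1}\to 0$ followed by a ``diagonal extraction setting $\eta=\eta_n^\star$'' does not cleanly close the argument. The error in Theorem~\ref{mainthm} has the structure $(1+g(\eta))\,o_{\ell,n\to\infty}(1)+o_{\eta\to 0}(1)$, and if you insert $\eta=\eta_n^\star\sim\ell^{1-d}$ (the rate dictated by your Fubini selection) then $g(\eta_n^\star)$ grows like a positive power of $\ell$; you would need $o_{\ell,n\to\infty}(1)$ to decay faster than that power, which is not guaranteed by the parameter choices of Section~\ref{choiceparsec}. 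The paper sidesteps this entirely by using a \emph{fixed} separation $r_0/8$, obtained via the crenel deformation of Part~1 of Proposition~\ref{creneau}: since $\partial\Gamma_n$ stays $r_0/8$-away from \emph{every} charge independently of $n,\ell$, one can first fix $\eta<r_0/2$ small enough that the $\eta$-contributions to the error are below $\epsilon$, then send $\ell,n\to\infty$ with $g(\eta)$ now a constant, and only at the end send $\alpha\to 0$. Your construction achieves only a shrinking separation and hence cannot decouple $\eta$ from $n,\ell$ in this way. To repair it you would either need to adopt the crenel construction (accepting a non-hyperrectangular $\Gamma_n$ for the main estimate, with the hyperrectangle version handled separately as in Part 2 of Proposition~\ref{creneau}), or carefully verify that the explicit convergence rates of Section~\ref{choiceparsec} dominate the blow-up of $g(\eta_n^\star)$ --- a verification you do not carry out and which is not claimed anywhere in the paper.
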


Remark that in both theorems the result is slightly weaker in the case $k=1$. As we will explain below, this is due to the fact that we do not know the decay in the extra dimension $y$ of the energy vector fields $E'_n$ corresponding to minimizing configurations of points. This is why we need hypothesis \eqref{hypdecayfields}. If we were able to prove that the l.h.s. of \eqref{hypdecayfields} decays to zero as a negative power of $t$, then \eqref{mainlimit} and \eqref{mainlimitbis} would hold in the regime where $\op{dist}(K_\ell(a_n),\partial\Sigma')\ge n^{q/d}$ for some $q\in]0,1[$.

\subsection{Bound on discrepancy}\label{sechypuniform}
As a consequence of the $k=0$ case of Theorem~\ref{mainthm}, we deduce a decay of discrepancies, valid for $s=d-2$, and which precisely shows that minimizers of the Coulomb jellium energy have a controlled discrepancy in all dimensions:
\begin{theorem}[Discrepancy bound of jellium minimizers]\label{discrepancybound}
Assume that $s= d-2$, that there exist constants $\underline m, \overline m>0$ such that $\underline m\le m_V(x)\le \overline m$ for all $x\in\op{supp}(m_V)$. Further assume that we are in the regime in which \eqref{mainlimit} holds and that $E_n'$ satisfy the charge separation condition of Proposition~\ref{separation}. Then letting 
\[
\nu_n':=\sum_{i=1}^n \delta_{x_i'}
\]
we have a finite asymptotic bound of the discrepancy of the $\nu_n'$ with respect to $\mu_V'$ as follows:
\begin{equation}\label{discrepdecay}
\lim_{\eta\to 0}\limsup_{\ell\to\infty}\limsup_{n\to\infty}\frac{1}{\ell^{d-1}}\left|\nu_n'(K_\ell(a)) - \int_{K_\ell(a)} m_V'(x)dx\right|<\infty.
\end{equation}
\end{theorem}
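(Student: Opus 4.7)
Since $s=d-2$, the smeared potential $h_\eta$ associated with a minimizing configuration satisfies the Poisson equation $-\Delta h_\eta=c_d(\nu_n^{(\eta)}-\mu_V')$, where $\nu_n^{(\eta)}=\sum_i\delta_{x_i'}^{(\eta)}$ is the sum of smeared Diracs, while the discrepancy itself is $D(K_r(a)):=\nu_n'(K_r(a))-\int_{K_r(a)}m_V'$. For any cube $K_r(a)$ the divergence theorem yields
\begin{equation*}
c_d\bigl(\nu_n^{(\eta)}(K_r(a))-\mu_V'(K_r(a))\bigr) = -\int_{\partial K_r(a)}\nabla h_\eta\cdot \vec n_{K_r(a)}\,d\sigma,
\end{equation*}
which by Cauchy--Schwarz is bounded in absolute value by $Cr^{(d-1)/2}\bigl(\int_{\partial K_r(a)}|\nabla h_\eta|^2\bigr)^{1/2}$. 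The plan is: for each fixed $\eta>0$, select a good radius $r^*$ near $\ell$ on which the boundary energy is of size $\ell^{d-1}$; read off the discrepancy bound on $K_{r^*}(a)$; transfer back to $K_\ell(a)$ through a short annular layer. Since $D(K_\ell(a))$ is itself $\eta$-independent, the outer limit $\eta\to 0$ only confirms that the bound is genuinely uniform.

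\textbf{Good-radius averaging.} Theorem~\ref{mainthm} gives, for each fixed $\eta$, an upper bound $\int_{K_R(a)}|\nabla h_\eta|^2\leq M_\eta R^d$ for $n,R$ large in the regime of the statement. Applied to the thin annulus $K_{\ell+1}(a)\setminus K_\ell(a)$---either by estimating the increment $\mathcal W_\eta(E_n',K_{\ell+1}(a))-\mathcal W_\eta(E_n',K_\ell(a))$ together with the bulk density bounds $\underline m\leq m_V\leq\overline m$, or by tiling the annulus with $O(\ell^{d-1})$ unit cubes and summing---this sharpens to $\int_{K_{\ell+1}(a)\setminus K_\ell(a)}|\nabla h_\eta|^2\leq M_\eta'\ell^{d-1}$; the relevant subtraction of smeared self-energies is itself $O(\ell^{d-1})$, because the number of points in the annulus is controlled by Proposition~\ref{separation}. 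Fubini in the cube-radius variable then produces $r^*\in[\ell,\ell+1]$ with $\int_{\partial K_{r^*}(a)}|\nabla h_\eta|^2\leq M_\eta'\ell^{d-1}$, and the Cauchy--Schwarz estimate yields $\bigl|\nu_n^{(\eta)}(K_{r^*}(a))-\mu_V'(K_{r^*}(a))\bigr|\leq C_\eta\,\ell^{d-1}$.

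\textbf{Desmearing and transfer.} Two adjustments complete the argument. First, $\nu_n^{(\eta)}$ and $\nu_n'$ differ only on points within distance $\eta$ of $\partial K_{r^*}(a)$; the uniform separation $\eta_0>0$ of Proposition~\ref{separation} bounds the number of such points by $C\eta\,\ell^{d-1}/\eta_0^d=O(\ell^{d-1})$. Second, to pass from $K_{r^*}(a)$ to $K_\ell(a)$, the symmetric difference is an annular region of width at most $1$: its $\mu_V'$-mass is $\leq \overline m\,C\,\ell^{d-1}$ and the number of $x_i'$ inside is $\leq C\,\ell^{d-1}/\eta_0^d$, again by separation. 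Combining these three bounds gives $|D(K_\ell(a))|\leq C'\,\ell^{d-1}$ uniformly in $n$ large for each fixed $\eta$; sending $\eta\to 0$ then preserves the estimate and proves \eqref{discrepdecay}.

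\textbf{Main obstacle.} The decisive ingredient is Proposition~\ref{separation}: both the desmearing error and the $K_{r^*}\to K_\ell$ transfer rely on counting points in thin boundary layers, which is only possible at the sharp $\ell^{d-1}$ scale in the presence of a uniform interparticle minimum distance. A subtler technical point, buried in the good-radius step, is the extraction of an $O(\ell^{d-1})$ annulus energy bound from Theorem~\ref{mainthm}, whose volume-average statement a priori carries only an $o(\ell^d)$ error; this is achieved by exploiting the additivity of $\int|\nabla h_\eta|^2$ across adjacent cubes together with the bulk density bounds $\underline m\leq m_V\leq\overline m$, rather than by a black-box application of the equidistribution theorem.
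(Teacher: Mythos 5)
Your overall strategy --- integrate the equation over a cube, apply Gauss' theorem, find a ``good'' boundary slice by a mean-value argument, control the flux by Cauchy--Schwarz, and use the separation of Proposition~\ref{separation} to count points in thin boundary layers --- is the same as the paper's. But the pivotal quantitative step, the claim
\[
\int_{K_{\ell+1}(a)\setminus K_\ell(a)}|\nabla h_\eta|^2\le M'_\eta\,\ell^{d-1},
\]
is not justified by either of the two routes you propose, and this is a genuine gap. Route (a): Theorem~\ref{mainthm} controls $\mathcal W_\eta(E_n',K_\ell(a))$ only up to an error $|K_\ell|\,o_{\ell,n\to\infty}(1)=o(\ell^d)$; the difference $\mathcal W_\eta(E_n',K_{\ell+1})-\mathcal W_\eta(E_n',K_\ell)$ therefore carries an $o(\ell^d)$ uncertainty, which dominates $\ell^{d-1}$ no matter how you combine it with the density bounds or with the self-energy corrections of Remark~\ref{sepcariche} (which themselves contribute $O(g(\eta)\ell^d)$ if you try to pass between $\mathcal W_\eta$ and $\int|E_\eta|^2$ on the two cubes separately). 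Route (b): the equidistribution statement \eqref{mainlimit} is asymptotic in the cube size, so it gives no bound at all on $\int_{K_1(b)}|\nabla h_\eta|^2$ for unit cubes; a uniform local energy bound at scale $1$ is essentially what one would like to prove, not something available as input. Your ``Main obstacle'' paragraph correctly identifies the difficulty but the proposed resolution (``additivity plus bulk density bounds'') does not actually close it.

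The paper circumvents this by never working at unit scale or with a width-one annulus: it fixes an intermediate scale $\underline\ell$ with $1\ll\underline\ell\le\ell/4$, at which \eqref{mainlimit} together with Remark~\ref{sepcariche} gives the \emph{volume} bound $\int_{K_{\underline\ell}(b)}|E_n'|^2\le\bar C_1|K_{\underline\ell}|$ uniformly over centers $b$. It then covers $\partial K_{\ell-2\underline\ell}(a)$ by a Besicovitch family of such cubes and applies the mean-value argument \emph{inside each cube} over radii in $[\underline\ell,2\underline\ell]$, producing per-cube slices with boundary energy $\le C\bar C_1\underline\ell^{d-1}$; summing the $O((\ell/\underline\ell)^{d-1})$ fluxes gives a total flux $O(\ell^{d-1})$ through the boundary of a crenellated region $R_{\ell,\underline\ell}$. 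The price is that $R_{\ell,\underline\ell}$ differs from $K_\ell(a)$ by a layer of width $O(\underline\ell)$ rather than $O(1)$, contributing $O(\underline\ell\,\ell^{d-1})$ to the discrepancy --- still $O(\ell^{d-1})$ since $\underline\ell$ is fixed. Your argument can be repaired along the same lines by taking the annulus to have width $\sim\underline\ell$ and tiling it with $\underline\ell$-cubes before averaging over the cube radius; as written, with width $1$ and unit cubes, the good-radius step fails.
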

We note that for the $d=2$ case the above result is already present in \cite{rns}. A weaker version in which, still for $d=2$, the decay of the absolute value term in \eqref{discrepdecay} is shown to be $o(\ell^d)$ rather than $O(\ell^{d-1})$ like here, was also proved via Beurling-Landau densities, in \cite{aoc}.\\

It is interesting to compare the discrepancy bound \eqref{discrepdecay} with the notion of hyperuniform \emph{random} point configurations \cite{torstil, zacharytorquato, torquato2016hyperuniformity}. A random point configuration is called \emph{hyperuniform} if the variance $\sigma^2(\ell)$ of the number of points present in a window $K_\ell(x)$ of size $\sim \ell$ grows like a surface term, i.e. $\sigma^2(\ell)\sim \ell^{d-1}$. In our case, a notion of local number variance still can be defined by considering the number $N_{\ell,n}:=\nu_n'(K_\ell(n^{1/d}x))$ as a random variable on the probability $(\Omega, \mathcal B, P)$ where $\Omega=\op{supp}(\mu_V)$, $\mathcal B$ are the Borel sets of $\Omega$ and $P$ is the uniform measure on $\Omega$. In this case we are tempted to conjecture that configuratios are hyperuniform in the sense that asypmptotically in the regime $1\ll\ell\ll n$ the variance of $N_{\ell,n}$ grows like $O(\ell^{d-1})$. Proving this however will require to face new difficulties in order to produce a quantitative study of two-point density correlations analogous to \cite{torstil} for sequences of minimizers, and we leave this endeavor to future work.
\\

Our above control of discrepancies could also prove useful in making more rigorous the study of \emph{scars}, \emph{i.e.} the study of topological defects appearing in numerical simulations of point configurations on manifolds. In this case it is apparent by numerical simulations that defects arise, with a large literature focussing on the case of points distributed on the $2$-dimensional sphere \cite{bnt, scars0, scars1,scars2} and also on more general surfaces and manifolds as in \cite{scars4,scars3}. However in this case rigorous mathematical studies of the asymptotics of defects in the large-$n$ limit seem to be difficult, also due to the lack of a well-accepted and easy to control notion of ``defect''. In the case of the $2$-dimensional sphere, a first step in the study of the next-order term which may allow to obtain a version of our functional $\mathcal W$ on the sphere has been recently provided by B\'etermin-Sandier in \cite{betermin2014renormalized}. As the presence of scars may be detected by the presence of localized higher discrepancy regions, it seems that the above result, if transferred to the case of points living on compact manifolds, may provide a tool towards a rigorous proof of such asymptotics.

\section{Assumptions and main definitions}\label{assumptionsdefinitions}
\subsection{Hypotheses on $V$ and $\mu_V$}
The minimization of $I$, defined by \eqref{defI}, on $\mathcal P(\mathbb R^d)$, the space of probability measures on $\mathbb R^d$, is a standard problem in potential theory (see \cite{frostman} or \cite{safftotik} for $d=2$). In particular, if $V$ satisfies the following assumptions:
\begin{eqnarray}\label{assv1}
&V \mbox{ is l.s.c. and bounded below},\\ \label{assv2}
&\{x: V(x)<\infty\} \mbox{ has positive $g$-capacity},\\ \label{assv3}
&\lim_{|x|\to\infty} V(x)=+\infty, \quad \text{resp.} \  \lim_{|x|\to \infty} \frac{V(x)}{2}- \log |x|= + \infty \  \text{in cases }\eqref{wlog}-\eqref{wlog2d},
\end{eqnarray} 
then the minimum of $I$ over $\mathcal P(\mathbb R^d)$ exists, is finite and is achieved by a unique equilibrium measure $\mu_V$, which has a compact support $\Sigma$ of positive $g$-capacity. In addition $\mu_V$ is uniquely characterized by the fact that\footnote{Recall that using the usual notation of potential theory \cite{landkof}, here ``quasi everywhere'', abbreviated ``q.e.'', means ``up to sets of zero $g$-capacity''.}
\[
  \left\{\begin{array}{ll} h^{\mu_V} +\tfrac{V}{2} \ge c& \text{ q.e. }\ , \\ [2mm]
          h^{\mu_V}+\tfrac{V}{2}=c&\text{ q.e. on } \Sigma.
         \end{array}\right.
 \]
 where  $h^{\mu_V}(x):=\int g(x-y) d\mu_V(y)$ and $c:=I(\mu_V) - \int\tfrac{V}{2}d
 \mu_V$. 
 \par Note that $h^{\mu_V}$ can be characterized as the unique solution of a fractional obstacle problem, and the corresponding regularity theory \cite{sil,cs,crs} allows to obtain regularity results on $h^{\mu_V}$ and on the free-boundary $\partial \Sigma$ in terms of the regularity of $V$. We will write 
 
\begin{equation}\label{defzeta}
\zeta:=h^{\mu_V} +\tfrac{V}{2} - c\ge 0.
\end{equation}
Like in  \cite{ss2d,rs,ps}, it is assumed that $\mu_V$ is absolutely continuous with respect to the Lebesgue measure, with density also denoted by $m_V$, and in order to make the explicit constructions easier, we need to assume that this density is bounded and sufficiently regular on its support. More precisely, we make the following technical, and certainly not optimal, assumptions: 
\begin{eqnarray}\label{assumpsigma}
& \partial \Sigma\  \text{is} \ C^1 \\\label{assmu1}
& \text{$\mu_V$  has a density which is $ C^{0, \beta}$ in $\Sigma$, } \\\label{assmu2}
& \exists c_1, c_2, \overline m>0\text{ s.t.  }c_1\op{dist} (x, \partial \Sigma)^\alpha \le m_V(x) \le \min(c_2 \op{dist}(x, \partial \Sigma)^\alpha,  \overline m)<\infty \text{  in }\Sigma, \nonumber
\end{eqnarray}
with the conditions 
\begin{equation}
\label{condab}
0<\beta \le 1,\qquad 0\le \alpha \le \frac{2\beta d}{2d-s}.
\end{equation}
Of course if $\alpha<1$ one should take $\beta=\alpha$, and if $\alpha \ge 1$, one should take $\beta=1$ and $\alpha \le \frac{2d}{d-s}$.
These assumptions include the case of the semi-circle law  $\frac{1}{2\pi} \sqrt{4-x^2}1_{|x|<2}$ 
arising for the quadratic potential in \eqref{wlog}. In the Coulomb cases, a quadratic potential gives rise to the equilibrium  measure $c 1_B$ where $B\subset\mathbb R^d$ is a ball, a case also covered by our assumptions with $\alpha=0$. In the Riesz case, any compactly supported radial profile can be obtained 
as the equilibrium measure associated to some potential (see \cite[Corollary 1.4]{CGZ}). Our assumptions are thus never empty.

\subsection{Blowup, regularization and splitting formula}

The renormalized energy appears in \cite{ps} as a next order limit of $H_n$ after a blow-up is performed, at the inverse of the typical nearest neighbor distance between the points, \emph{i.e.} $n^{1/d}$. It is expressed in terms of the potential generated by the configuration $x_1, \cdots, x_n$ and defined by 
\begin{equation}
\label{defhn}
h_n(X)=  g * \left(\sum_{i=1}^n \delta_{(x_i,0)} - n {m_V}\delta_{\mathbb R^d}\right).
\end{equation}
Recall that the Riesz kernel $g$ is not the convolution kernel of a local operator, as in the Coulomb case $s=d-2$ or \eqref{wlog2d}, where $g$ is the kernel of the Laplacian. It is the kernel of a fractional Laplacian, which is a nonlocal operator. It turns out however that, as originally noticed in \cite{cafsil}, if $d-2<s<d$ then this fractional Laplacian operator can be transformed into a local but inhomogeneous operator of the form $\mathrm{div}(\yg \nabla\cdot)$ by adding one space variable $y\in \mathbb R$ to the space $\mathbb R^d$. The number $\gamma$ is chosen such that
\begin{equation}\label{choicegamma}
\gamma=s-d+2-k
\end{equation}
where $k$ will denote the dimension extension. We will take $k=0$ in all the Coulomb cases, \emph{i.e.} $s=d-2$ and $d\ge 3$ or \eqref{wlog2d}. In all other cases, we will need to take $k=1$. In the particular case of $s=d-1$ then $\gamma=0$, and this correspond to using a harmonic extension (see \cite{cafsil,ss1d,ps} for more details). Points in the space $\mathbb R^d$ will be denoted by $x$, and points in the extended space $\mathbb R^{d+k}$ by $X$, with $X=(x,y)$, $x\in \mathbb R^d$ and $y\in \mathbb R^k$.

\par For the blown-up quantities  we will use the following notation (with the convention $s=0$ in the cases \eqref{wlog} or \eqref{wlog2d}):
\begin{align}
& x'= n^{1/d} x \quad X'=n^{1/d} X \quad x_i'=n^{1/d} x_i\\ 
&  m_V'(x')= m_V(x)\\
& h_n'(X')= n^{-\frac{s}{d}} h_n(X).
\label{rescalh}
\end{align}
In particular if $\Sigma=\op{supp}(m_V),\Sigma'=\op{supp}(m_V')$ then there holds 
\begin{equation}
\label{supports}
\Sigma'=n^{\frac{1}{d}}\Sigma.\end{equation}
We note that $h_n$ and  $h_n'$ satisfy
\begin{equation}
\label{hn}
-\op{div} (\yg \nabla h_n) = c_{s,d} \Big( \sum_{i=1}^n \delta_{x_i} - n m_V \delta_{\mathbb R^d}\Big) \quad \text{in } \ \mathbb R^{d+k}\, , \end{equation}
\begin{equation}
\label{hnp}
-\op{div} (\yg \nabla h_n') = c_{s,d} \Big( \sum_{i=1}^n \delta_{x_i'} - m_V' \delta_{\mathbb R^d}\Big) \quad \text{in } \ \mathbb R^{d+k}\, ,  \end{equation}

In order to define our renormalized energy we need to truncate and regularize the Riesz (or logarithmic) kernel. We define the truncated kernel as in \cite{ps}, in the following way: for  $1>\eta>0$ and  $X\in \mathbb R^{d+k}$, let 
\begin{equation}
\label{feta} 
f_\eta(X)= \left(g(X)- g(\eta)\right)_+.\end{equation}
We note that the function  $f_\eta$ vanishes outside of $B(0, \eta)$ and satisfies that
 \begin{equation}
\label{defde}
\delta_0^{(\eta)}:= \frac{1}{c_{s,d}}\op{div} (\yg \nabla f_\eta)+ \delta_0
\end{equation}
 is a positive  measure supported on $\partial B(0, \eta)$, and which is such that  for any test-function $\varphi$, 
\begin{equation*}
\int \varphi \delta_0^{(\eta)}=\frac{1}{c_{s,d}} \int_{\partial B(0, \eta)}  \varphi (X)\yg g'(\eta) .
\end{equation*}
One can check that $\delta_0^{(\eta)}$ is the uniform measure of mass $1$ on $\partial B(0, \eta)$, and we may write 
\begin{equation}
\label{divf}
-\op{div} (\yg \nabla f_\eta) = c_{s,d} ( \delta_0 - \delta_0^{(\eta)})\quad \text{in} \ \mathbb R^{d+k}.
\end{equation}
We will also denote by $\delta_p^{(\eta)}$ the measure $\delta_0^{(\eta)} (X-p)$, for $p \in \mathbb R^d\times \{0\}$. 
In the Coulomb cases, \emph{i.e.} when $k=0$, then $\delta_0^{(\eta)}$ is the same as in \cite{rs}. If $h$ can be written in the form \eqref{defhn}, then we will also denote 
\begin{equation}
\label{defheta}
h_\eta:= h- \sum_{i=1}^n f_\eta (x-x_i).
\end{equation} 
\begin{remark}\label{rmktrunc}
For $h=h_n$ as in \eqref{defhn} the transformation from $h$ to $h_\eta$ amounts to truncating the kernel $g$, but only for the Dirac part of the r.h.s. Indeed, letting $g_\eta(x)=\min (g(x), g(\eta))$ be the truncated kernel, we have 
$$h_\eta= g_\eta* (\sum_{i=1}^n\delta_{x_i}) - g* (m_V\delta_{\mathbb R^d}).$$
\end{remark}
 In view of \eqref{divf}, $h_{n, \eta}$ and $h_{n, \eta}'$, defined from  $h_n$ and $h_n'$ via \eqref{defheta}, satisfy
\begin{equation}
\label{hne}
-\op{div} (\yg \nabla h_{n, \eta}) = c_{s,d} \Big( \sum_{i=1}^n \delta_{x_i}^{(\eta)} - n m_V \delta_{\mathbb R^d}\Big) \quad \text{in } \ \mathbb R^{d+k}\, ,  \end{equation}
\begin{equation}
\label{hnpe}
-\op{div} (\yg \nabla h_{n, \eta}') = c_{s,d} \Big( \sum_{i=1}^n \delta_{x_i'}^{(\eta)} - m_V' \delta_{\mathbb R^d}\Big) \quad \text{in } \ \mathbb R^{d+k}\, ,  \end{equation}
with the usual embedding of $\mathbb R^d $ into $\mathbb R^{d+k}$. We now recall the splitting formula from \cite{ps}. 
\begin{proposition}[Splitting formula]\label{splitting}
For any $n$, any $x_1, \cdots, x_n$ distinct points in $  \mathbb R^d\times\{0\}$, letting $h_n$ be as in \eqref{defhn} and  $h_{n,\eta}$ deduced from it via   \eqref{defheta}, we have in the case \eqref{kernel}
\begin{multline}
\label{propsplit}
H_n(x_1, \cdots, x_n) = n^2 I (\mu_V) +2 n\sum_{i=1}^n \zeta(x_i)
+  n^{1+\frac{s}{d}} \lim_{\eta\to 0} \frac{1}{c_{s,d}} \left( \frac{1}{n}\int_{\mathbb R^{d+k}}\yg |\nabla h_{n,\eta}'|^2 -c_{s,d}  g(\eta)\right),\end{multline} 
respectively in the cases \eqref{wlog}--\eqref{wlog2d}
\begin{multline}
H_n(x_1, \cdots, x_n) = n^2 I (\mu_V) +2 n\sum_{i=1}^n \zeta(x_i) - \frac{n}{d}\log n
+ n  \lim_{\eta\to 0}\frac{1}{c_{s,d}} \left( \frac{1}{n }\int_{\mathbb R^{d+k}}\yg |\nabla h_{n,\eta}'|^2 - c_{s,d} g(\eta)\right).\end{multline}
\end{proposition}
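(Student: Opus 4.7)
The splitting formula is a purely algebraic rewriting of $H_n$, made rigorous by a careful integration by parts. My first step is to use the Euler--Lagrange characterization of $\mu_V$ in the form $V=2(c-h^{\mu_V})+2\zeta$, together with the identity $2c=I(\mu_V)+\iint g\,d\mu_V d\mu_V$ (obtained by integrating the same Euler--Lagrange relation against $\mu_V$), to rewrite
\begin{equation*}
H_n = n^2 I(\mu_V) + 2n\sum_{i=1}^n \zeta(x_i) + \Big[\sum_{i\ne j} g(x_i-x_j) - 2n\sum_i h^{\mu_V}(x_i) + n^2 \iint g\, d\mu_V d\mu_V\Big].
\end{equation*}
The bracket is formally $\iint_{x\ne y} g(x-y)\,d\mu_n(x)d\mu_n(y)$ for the neutralized charge $\mu_n:=\sum \delta_{x_i}-nm_V\delta_{\mathbb R^d}$, and the proof reduces to identifying it with $\lim_{\eta\to 0}\frac{1}{c_{s,d}}\bigl(\int \yg|\nabla h_{n,\eta}|^2 - n c_{s,d} g(\eta)\bigr)$ (plus the logarithmic correction in the cases \eqref{wlog}--\eqref{wlog2d}).

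To establish this identification, I integrate by parts in \eqref{hne} on a ball $B_R\subset\mathbb R^{d+k}$ and let $R\to\infty$, using that the source of $h_{n,\eta}$ has zero net charge and compact support (so $h_n$, and a fortiori $h_{n,\eta}$, decays fast enough at infinity to kill the boundary term). This yields
\begin{equation*}
\frac{1}{c_{s,d}}\int_{\mathbb R^{d+k}} \yg|\nabla h_{n,\eta}|^2 = \sum_j \int h_{n,\eta}\, d\delta_{x_j}^{(\eta)} - n\int h_{n,\eta}\, dm_V.
\end{equation*}
By Remark \ref{rmktrunc} I may write $h_{n,\eta}(X)=\sum_i g_\eta(X-x_i)-n h^{\mu_V}(X)$ with $g_\eta=\min(g,g(\eta))$. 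The uniform-on-sphere structure of $\delta_{x_j}^{(\eta)}$ combined with the mean-value property of $g(\cdot-x_i)$ for the weighted operator $\op{div}(\yg\nabla\cdot)$ (valid for $\eta$ smaller than half the minimal pairwise distance, since $g(\cdot-x_i)$ satisfies the weighted homogeneous equation away from $x_i$ and $\delta_p^{(\eta)}$ is exactly the harmonic measure of $\partial B(p,\eta)$ for this operator) gives $\int g(\cdot -x_i)\,d\delta_{x_j}^{(\eta)}=g(x_j-x_i)$ for $i\ne j$, while the diagonal term is $g(\eta)$ by the very definition of $f_\eta$. Together with $\int h^{\mu_V}\,d\delta_{x_j}^{(\eta)}=h^{\mu_V}(x_j)+o_\eta(1)$ (from continuity of $h^{\mu_V}$ on $\Sigma$) and $\int h^{\mu_V}dm_V=\iint g\,d\mu_V d\mu_V$, summation over $j$ reproduces exactly the bracket above, plus the self-energy term $n g(\eta)$ which is subtracted in the limit.

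Finally, the passage from $h_{n,\eta}$ to $h_{n,\eta'}'$ is pure scaling. Setting $\eta'=n^{1/d}\eta$, a direct change of variables in $h_n'(X')=n^{-s/d}h_n(n^{-1/d}X')$ gives $\int \yg|\nabla h_{n,\eta}|^2 dX = n^{s/d}\int \yg|\nabla h_{n,\eta'}'|^2 dX'$. For Riesz kernels, $g(\eta)=n^{s/d}g(\eta')$, which converts the previously derived identity into the prefactor $n^{1+s/d}$ appearing in \eqref{propsplit}. For the logarithmic kernel, $g(\eta)=g(\eta')+\frac{1}{d}\log n$, and tracking this constant shift produces exactly the additional $-\frac{n}{d}\log n$ correction. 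The most delicate step in the whole plan is the second paragraph: one must simultaneously control the $R\to\infty$ truncation by a quantitative decay estimate for $\nabla h_n$ (neutrality and compact support of the source give $|\nabla h_n|\lesssim |X|^{-s-1-k}$ at infinity), verify the weighted mean-value identity for $g$, and absorb the $O(\eta)$ error from averaging $h^{\mu_V}$ on the small spheres, which in the fractional setting ($k=1$) requires the interior regularity of $h^{\mu_V}$ provided by assumption \eqref{assmu1}.
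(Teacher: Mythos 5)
Your proposal is correct and follows essentially the same route as the proof the paper relies on (the paper only recalls this proposition from \cite{ps}, where it is proved exactly by this combination of the Euler--Lagrange rewriting of $V$ via $\zeta$, the renormalized integration by parts against $\sum_i\delta_{x_i}^{(\eta)}-nm_V\delta_{\mathbb R^d}$ using $g*\delta_0^{(\eta)}=g_\eta$, and the final blow-up scaling with $\eta'=n^{1/d}\eta$ producing the $n^{1+s/d}$ prefactor, respectively the $-\frac{n}{d}\log n$ shift). The only slip is the stated decay $|\nabla h_n|\lesssim|X|^{-s-1-k}$: neutrality of the source actually gives $|\nabla h_n|\lesssim|X|^{-s-2}$, and it is this extra order of decay that is genuinely needed to kill the boundary term in the logarithmic cases $s=0$, which your argument correctly attributes to neutrality in any case.
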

One expects the repelling points $x_i$ to organise in a very uniform way, and thus that the interpoint distance asymptotically decreases like $n^{-1/d}$. The following is proven in \cite{ps}, by potential-theoretic methods \cite{landkof, bhs2} and using the maximum principle.
\begin{proposition}[Point separation, {\cite[Thm. 5]{ps}}]\label{separation}
Let $(x_1, \dots, x_n)$ minimize $H_n$. Then for each $i\in [1,n]$, $x_i \in \Sigma$, and for each $i\neq j$, it holds
$$|x_i-x_j| \ge \frac {r}{(n \max_x |m_V(x)|) ^{1/d}}, $$ where $r$ is some positive constant depending only on $s$ and $d$. 
\end{proposition}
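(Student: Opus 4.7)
The plan is to combine the Euler--Lagrange characterization of the equilibrium measure with a pointwise minimality condition for each $x_i$, and then to run an averaging argument in the extended space $\mathbb R^{d+k}$ using the degenerate-elliptic maximum principle for $-\op{div}(|y|^\gamma \nabla\cdot)$.

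\textbf{Step 1 (localization in $\Sigma$).} Since $(x_1,\dots,x_n)$ minimizes $H_n$, for every fixed $i$ the slot function
\[
\Phi_i(y):=\sum_{j\neq i} g(y-x_j)+nV(y)
\]
attains its minimum at $y=x_i$. Rewriting
\[
\Phi_i(y)=u_i(y)+nh^{\mu_V}(y)+nV(y), \qquad u_i(y):=\sum_{j\neq i}g(y-x_j)-nh^{\mu_V}(y),
\]
and invoking the Frostman--Euler--Lagrange condition $h^{\mu_V}+V/2\ge c$ with equality on $\Sigma$, I obtain $\Phi_i(y)=u_i(y)+2n\zeta(y)+\text{const}$. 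Since $\zeta\ge 0$ with $\zeta=0$ on $\Sigma$, if $x_i\notin\Sigma$ I can move $x_i$ to its nearest projection onto $\Sigma$, strictly lowering the $2n\zeta(x_i)$ contribution while modifying the $u_i$-term by a controlled amount, contradicting minimality. Hence each $x_i\in\Sigma$.

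\textbf{Step 2 (averaging inequality).} Fix $i\neq j$ and a radius $\rho>0$ to be optimized. The pointwise minimality of $x_i$ for $\Phi_i$, together with integration of the inequality $\Phi_i(x_i)\le\Phi_i(y)$ over a ball $B_\rho^{d+k}(X_i)$ in the extended space against the weight $|y|^\gamma$, yields
\[
\Phi_i(x_i)\le \frac{1}{\int_{B_\rho^{d+k}(X_i)}|y|^\gamma\,dZ}\int_{B_\rho^{d+k}(X_i)}\Phi_i(z)\,|y|^\gamma\,dZ.
\]
For the singular contribution from $x_j$, the scale-invariant estimate
\[
\frac{1}{\int_{B_\rho}|y|^\gamma}\int_{B_\rho^{d+k}(X_i)} g(z-X_j)\,|y|^\gamma\,dZ\;\le\; C\,g(\rho)
\]
holds uniformly in the position of $X_j$, because $g$ is integrable with respect to the weighted measure near its singularity when $s<d$. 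For the regular contributions I subtract the LHS and use that $u_i+nV$ has bounded weighted Laplacian on $B_\rho^{d+k}(X_i)\setminus\{X_j\}_{j\neq i}$, with upper bound $O(n\|m_V\|_\infty)$, so the oscillation of the averaged regular part over a ball of radius $\rho$ is $O(n\|m_V\|_\infty\rho^{2})$.

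\textbf{Step 3 (conclusion).} Combining these estimates I obtain
\[
g(|x_i-x_j|)\le C\,g(\rho)+C\,n\|m_V\|_\infty\,\rho^{2-s}\cdot\rho^{s},
\]
and choosing $\rho=r\,(n\|m_V\|_\infty)^{-1/d}$ for a sufficiently small $r=r(s,d)$ makes the second term strictly smaller than $g(|x_i-x_j|)$ unless $|x_i-x_j|\gtrsim \rho$. This is exactly the announced bound. The logarithmic cases are handled by the obvious modifications.

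\textbf{Main obstacle.} The delicate point is to carry out the averaging argument uniformly in the Riesz extended-space setting: the weight $|y|^\gamma$ degenerates on $\mathbb R^d\times\{0\}$, so the mean-value formula and the superharmonic comparison for the operator $-\op{div}(|y|^\gamma\nabla\cdot)$ must be invoked in their weighted forms (as developed in \cite{landkof,bhs2} and the Caffarelli--Silvestre literature). Once these weighted potential-theoretic tools are in place, the rest of the argument is a robust scaling computation.
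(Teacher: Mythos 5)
The paper does not prove this proposition: it is quoted verbatim from \cite[Thm.~5]{ps}, with the one-line indication that the proof there is ``by potential-theoretic methods and the maximum principle.'' Your strategy (slot function, Frostman conditions, weighted mean-value/comparison, scaling) is indeed of that family, and your Step~3 scaling is essentially the right computation. But as written the argument has genuine gaps. First, Step~1 does not work: projecting $x_i\notin\Sigma$ onto $\Sigma$ gains only $2n\zeta(x_i)$, which can be arbitrarily small when $x_i$ is close to $\partial\Sigma$ (since $\zeta$ is continuous and vanishes on $\Sigma$), while the change in $u_i$ under the projection is not estimated at all; nothing forces it to be smaller than the gain. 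The confinement step really requires the minimum principle for the weighted operator applied to $u_i$ on the unbounded component of the complement of $\Sigma\cup\{x_j\}$, where $u_i$ is $\op{div}(\yg\nabla\cdot)$-harmonic, blows up at the $x_j$'s and tends to $0$ at infinity.

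Second, Step~2 integrates the minimality inequality $\Phi_i(x_i)\le\Phi_i(z)$ over a ball of $\R^{d+k}$, but minimality only constrains $\Phi_i$ at points of $\R^d\times\{0\}$; for $z$ with nonzero $y$-coordinate the inequality is not available (and $V$ has no extension there). Worse, the claim that ``$u_i+nV$ has bounded weighted Laplacian'' is unjustified: $V$ is only assumed l.s.c.\ and bounded below, so one may not differentiate it. The only usable information is the one-sided Frostman inequality $h^{\mu_V}+V/2\ge c$ with equality q.e.\ on $\Sigma$, which means the comparison points $y$ must be taken in $\Sigma$ and the averaging/superharmonicity must be carried entirely by $u_i$ (whose weighted Laplacian is $c_{s,d}(n\mu_V-\sum_{j\ne i}\delta_{x_j})$ and hence controlled). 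Finally, even for $u_i$ the background charge is $n m_V\delta_{\R^d}$, a measure on the thin set $\R^d\times\{0\}$, not a function bounded by $n\overline m\yg$; the resulting oscillation over a ball of radius $\rho$ is $O(n\overline m\,\rho^{d-s})$, not $O(n\overline m\,\rho^2)$. With $\rho=r(n\overline m)^{-1/d}$ this correct error equals $r^d g(\rho)$ and the conclusion survives, but your stated $\rho^2$ bound is wrong in the $k=1$ case and hides the place where the admissible range of $s$ enters. These points are exactly the content of the cited proof and need to be supplied for the argument to close.
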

 The scale $\sim n^{-1/d}$ is then termed the \emph{microscopic scale} of our gases, and the two-scale reformulation of the energy $H_n$ as done in \cite{ss2d, rs, ps} involves separating the energy contributions from the macroscale and from this microscopic scale. In particular the distribution of points at the microscopic scale is governed by the renormalized energy $\mathcal W$ to be introduced below.
\subsection{The renormalized energy}\label{secrenoren}
Consider the formulas appearing in Proposition~\ref{splitting}. As $\zeta\ge 0$ and $\zeta=0$ on $\op{supp}(\mu_V)$, it acts as an effective potential, favouring the configurations where the points $x_i$ are in the support of $\mu_V$. The last term produces the next-order term of the energy, and justifies the definition of the renormalized energy $\mathcal W$ of an infinite configuration of points. This functional $\mathcal W$ is defined via the gradient of the potential generated by the point configuration, embedded into the extended space $\mathbb R^{d+k}$. That gradient is a vector field that we denote $E$ (like electric field, by analogy with the Coulomb case). Then $E$ will solve a relation of the form 
\begin{equation}
\label{eqe} 
-\op{div} (\yg E) = c_{d,s} \Big(\sum_{p \in \Lambda} \delta_p - m(x)\delta_{\mathbb R^d}\Big) \quad \text{in} \ \mathbb R^{d+k}.
\end{equation}
where $\Lambda$ is some discrete set in $\mathbb R^d \times \{0\}$ (identified with $\mathbb R^d$). Due to the fact that (as recalled in Proposition~\ref{separation}) the minimizers of our energy have separated charges, we restrict in the present work to fields $E$ corresponding to multiplicity-one charges, as opposed to general positive integer multiplicity case considered in \cite{rs, ps}. For any such $E$ (defined over $\mathbb R^{d+k}$ or over subsets of it), we define, by a formula generalizing \eqref{defheta},
\begin{equation}
\label{defeeta}
E_\eta :=  E- \sum_{p \in \Lambda } \nabla f_\eta (X-p).
\end{equation}
We will write $\Phi_\eta$ for the map that sends $E$ to $E_\eta$, and note that it is a bijection from the set of vector fields satisfying a relation of the form \eqref{eqe} to those satisfying a relation of the form 
\begin{equation}
\label{eqeeta} 
-\op{div} (\yg E_\eta) = c_{d,s} \Big(\sum_{p \in \Lambda}\delta_p ^{(\eta)}- m(x)\delta_{\mathbb R^d}\Big) \quad \text{in} \ \mathbb R^{d+k}.
\end{equation}
The class of vector fields on which we are going to concentrate is thus the following:
\begin{definition}[Admissible vector fields]
\label{defbam} 
Given a non-negative density function $m:\mathbb R^d\to\R^+$, we define the class $\mathcal{A}_m$ to be the class of gradient vector fields $E=\nabla h$ that satisfy
\begin{equation}\label{eqclam}
-\op{div} (\yg E) = c_{s,d} \Big( \sum_{p \in \Lambda} \delta_p-m(x)\delta_{\mathbb R^d} \Big) \quad \text in\ \mathbb R^{d+k}
\end{equation}
where $\Lambda $ is a discrete set of points in $\mathbb R^d\times \{0\}$.
\end{definition}
In case $m\in L^\infty_{loc}$, vector fields as above 
 blow up exactly in $1/|X|^{s+1}$ near each $p \in \Lambda$
 (with the convention $s=0$ for the cases \eqref{wlog}--\eqref{wlog2d}); 
 such vector fields naturally belong to the space $L^p_{loc}(\mathbb R^{d+k}, \mathbb R^{d+k})$ for $p<\frac{d+k}{s+1}$.

We are now in a position to define the renormalized energy for blow-up configurations like in \cite{rs,ps}. 
In the definition, we let $K_R$ denote the hypercubes $[-R/2,R/2]^d.$

\begin{definition}[Renormalized energy]
\label{renerg}
Let $E\in \mathcal{A}_m$ satisfy \eqref{eqe} and $f:\mathbb R^{d+k}\to \R^+$ be a measurable function. Then for $0<\eta<1$ we define 
\begin{equation}\label{Weta}
\mathcal W_\eta(E,f)=\int_{\mathbb R^{d+k}}\yg f |E_\eta|^2 - c_{d,s}g(\eta)\int_{\mathbb R^{d+k}}f\sum_{p\in\Lambda}\delta_p^{(\eta)}. 
\end{equation}
For $A\subset\mathbb R^d$ a Borel set we define $\mathcal W_\eta (E,A):=\mathcal W_\eta(E,1_{A\times\R^k})$ where $1_S$ is the characteristic function which equals $1$ on a set $S$ and $0$ outside $S$. We then define 
\begin{equation}\label{defW}
\mathcal W(E,A) = \lim_{\eta\to 0} \mathcal W_\eta(E,A),\qquad \mathcal W(E) = \lim_{\eta\to 0} \limsup_{R\to\infty}\frac{\mathcal W_\eta(E, K_R)}{R^d}.
\end{equation} 
\end{definition}

\begin{remark}
\label{smoothnonsmooth}
Note that if $\chi_{A,\epsilon}(x,y)=1_A * \rho_\epsilon^{(d)} (x)1_{[-R_\epsilon, R_\epsilon]^k}*\rho_\epsilon^{(k)}(y)$ are $C^\infty_c$ functions approximating $1_{A\times\R^k}$ where $R_\epsilon\to\infty$ as $\epsilon\to 0$ and $\rho_\epsilon^{(n)}(z)=\epsilon^{-n}\rho^{(n)}(z/\epsilon)$ denotes mollifiers based on a smooth radial probability density $\rho^{(n)}$ supported on the unit ball of $\R^n$ then we have $\mathcal W_\eta(E,A)=\lim_{\epsilon\to 0}\mathcal W_\eta(E,\chi_{A,\epsilon})$, by monotone convergence in \eqref{Weta}.
\end{remark}

The name renormalized energy (originating in Bethuel-Brezis-H\'elein \cite{bbh} in the context of two-dimensional Ginzburg-Landau vortices) reflects the fact that $\int \yg |\nabla h|^2 $ which is infinite, is computed in a renormalized way by first changing $h$ into $h_\eta$ and then removing the appropriate divergent terms $c_{s,d}g(\eta)$ corresponding to all points.

The above is a generalized version of the renormalized energy defined in \cite{ps}, and fits in the framework of the study of ``jellium energies'', for which we refer to \cite{blanclewin} and to the references therein. As in \cite{rs,ps} the next-order functional $\mathcal W$ differs from the one defined in previous works by Sandier-Serfaty \cite{gl13,ss1d} for the one and two-dimensional logarithmic interaction, essentially in the fact that the order of the limits $R\to \infty $ and $\eta\to 0$ is reversed. We refer to \cite{rs} for a further discussion of the comparison between the two.

In the case of constant $m$, by scaling we may always reduce to studying the class $\mathcal A_1$, indeed, 
if $E\in \mathcal A_m$ and $A$ is Borel, then $\hat{E} = m^{-\frac{s+1}{d}}   E(c_{s,d}\cdot m^{-1/d})\in \mathcal A_1$ \footnote{with the convention $s=0$ in the case \eqref{wlog}} and 
\begin{equation}
\label{scalingW}
\mathcal W_\eta(E,A)=m^{1+s/d} \mathcal W_{\eta m^{1/d}}  (\hat{E}, m^{1/d}A)  \qquad \mathcal W(E)= m^{1+s/d} \mathcal W(\hat{E}).
\end{equation} 
in the case \eqref{kernel}, and respectively
\begin{equation}
\label{scalinglog}
\mathcal W_\eta(E,A)= m\left(\mathcal W_{m\eta} (\hat{E}, m^{1/d}A)-\frac{2\pi}{d} \log m\right)\qquad
\mathcal W(E)= m\left(\mathcal W(\hat{E})-\frac{2\pi}{d} \log m\right)
\end{equation}
in the cases \eqref{wlog}--\eqref{wlog2d}.

\subsection{Discussion on the hypothesis~\eqref{hypdecayfields}}\label{discussextrahyp}

\subsubsection{Power-law bound}
For studying the case $k=1$ we need a further assumption regarding decay in the extra dimension (\emph{i.e.} as $|y|\to\infty$) of the energy vector fields $E_n'$ corresponding to minimizing configurations of points. It is tempting to conjecture that for fields $E'_n$ corresponding to minimizers there exist constants $a, C_2>0$ such that for each bilipschitz deformation of a cube $K\subset\mathbb R^d$ there holds 
\begin{equation}\label{decrvert} 
\int_{K\times (\mathbb R\setminus[-t,t])}|y|^\gamma |E_n'|^2 \le C_2|K|t^{-a}.
\end{equation}
We use here the notation $C_2$ in order to allow the reader to directly compare the bounds here with those from Propositions \ref{goodbdry} and \ref{screening} below.
Such decay is not true for general configuratons for which $\mathcal W(E)<\infty$, and it seems to be equivalent to a uniformity condition on the field-generating configurations. We note here that this condition holds in the case of lattice-like configurations:
\begin{lemma}[Power-like decay for lattices]
Assume that $\Lambda\subset\mathbb R^d$ is a Bravais lattice of density one and consider the admissible vector field $E'$ corresponding to Definition \ref{defbam} with this choice of $\Lambda$ and measure $m\equiv 1$. Then \eqref{decrvert} holds with $a=s-d+2>0$.
\end{lemma}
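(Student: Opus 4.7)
The plan is to exploit the exact lattice periodicity by Fourier decomposition in $x$, combined with an ODE analysis in the extended variable $y$. Since $\Lambda$ has density one, Poisson summation gives
\[
\sum_{p\in\Lambda}\delta_p - 1 = \sum_{\xi \in \Lambda^*\setminus\{0\}} e^{2\pi i \xi \cdot x}
\]
as distributions on $\R^d$, where $\Lambda^*$ is the dual lattice. Writing the potential $h$ generating $E'$ as $h(x,y)=\sum_\xi \hat h(\xi,y)\,e^{2\pi i \xi\cdot x}$, the zero mode cancels against the background constant density and for each nonzero $\xi \in \Lambda^*$ the Fourier coefficient $\hat h(\xi,\cdot)$ obeys the one-dimensional equation
\[
-\partial_y\!\left(\yg\,\partial_y \hat h(\xi,y)\right) + 4\pi^2|\xi|^2\,\yg\,\hat h(\xi,y) = c_{s,d}\,\delta(y).
\]

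Away from $y=0$ this is a Bessel-type ODE. After the scaling $u=2\pi|\xi|y$ it reduces to the parameter-free equation $-(u^\gamma H')'+u^\gamma H=0$, whose unique bounded solution is $H(u)=u^{(1-\gamma)/2}K_\nu(u)$ with $\nu=(1-\gamma)/2$, $K_\nu$ the modified Bessel function of the second kind. Thus $\hat h(\xi,y)=C(\xi)(2\pi|\xi|y)^{(1-\gamma)/2}K_\nu(2\pi|\xi|\,|y|)$, the constant $C(\xi)$ being fixed by the jump condition $[\yg\,\partial_y\hat h]_{0^-}^{0^+}=-c_{s,d}$; one checks $|C(\xi)|$ is polynomially bounded in $|\xi|$. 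Using $K_\nu(u)\sim\sqrt{\pi/(2u)}\,e^{-u}$ as $u\to\infty$ one obtains, for $|y|\gtrsim 1/|\xi|$,
\[
|\hat h(\xi,y)|^2 + (2\pi|\xi|)^{-2}\,|\partial_y\hat h(\xi,y)|^2 \le C\,|C(\xi)|^2\,|y|^{-\gamma}\,e^{-4\pi|\xi|\,|y|}.
\]
A reader preferring to avoid special functions can get the same exponential rate by the Riccati substitution $v=\partial_y\hat h/\hat h$, which reduces the equation to $v'=(2\pi|\xi|)^2-\gamma v/y-v^2$ and shows $v(y)\to-2\pi|\xi|$ at infinity.

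Assemble via Parseval on the torus $\R^d/\Lambda$: for $K$ a finite union of translates of the fundamental cell of $\Lambda$,
\[
\int_{K\times\{|y|>t\}}\yg|E'|^2\,dX = |K|\sum_{\xi\ne0}\int_{|y|>t}\yg\Bigl[4\pi^2|\xi|^2|\hat h(\xi,y)|^2 + |\partial_y\hat h(\xi,y)|^2\Bigr]dy,
\]
and the previous mode-by-mode estimate yields a series dominated by $\sum_{\xi\ne 0}|\xi|\,e^{-4\pi|\xi|t}$, which converges absolutely and decays exponentially in $t$ at a rate controlled by the smallest nonzero $|\xi|\in\Lambda^*$. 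An arbitrary bilipschitz image of a cube $K$ sits inside such a union of fundamental cells of comparable volume to $|K|$, so the desired power-law bound \eqref{decrvert} with $a=s-d+2$ follows \emph{a fortiori} from the stronger exponential estimate. The main technical delicacy, apart from the standard distributional justification of the Poisson-summation identity, is the analysis near $y=0$ where $\yg$ is degenerate or singular, and specifically the precise computation of the normalization $C(\xi)$; this can be sidestepped by the Riccati argument above since only the asymptotic decay rate, not the prefactor, is needed.
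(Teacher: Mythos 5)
Your proof is correct, but it takes a genuinely different route from the paper's. The paper works entirely in physical space: it writes $h(x_0,t)$ as the difference between a lattice sum and its continuum counterpart, recognizes this as the error of a Riemann-sum approximation of $\int(1+|x|^2)^{-s/2}$ on a grid of spacing $t^{-1}$, and reads off a pointwise bound $|\nabla h|\lesssim t^{d-s-2}$, which after integrating $|y|^\gamma|E'|^2$ over $|y|>t$ gives exactly the exponent $a=s-d+2$. You instead diagonalize in $x$ by Poisson summation and solve the resulting Bessel-type extension ODE mode by mode; since neutrality kills the $\xi=0$ mode and every nonzero mode is damped like $e^{-2\pi|\xi||y|}$, you obtain \emph{exponential} decay in $t$, which is strictly stronger than the required power law (and is the classically correct rate for neutral lattice sums, so the paper's ``sharp'' power bound should be read as sharp only within its Riemann-sum method). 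The trade-off: the paper's argument is shorter and needs no special functions, while yours gives the true rate and makes transparent that the mechanism is cancellation of the zero mode. Two small points to tidy in your write-up: (i) the passage from unions of fundamental cells to an arbitrary bilipschitz image of a cube is cleanest via the pointwise bound on $|E'|$ that your absolutely convergent, exponentially damped Fourier series already provides, since a covering argument alone only handles $|K|\gtrsim 1$; and (ii) you should record that $\nu=(1-\gamma)/2\in(0,1)$ because $\gamma=s-d+1\in(-1,1)$, which is what guarantees that $u^\gamma H'(u)$ has a finite nonzero limit as $u\to0^+$, so the jump condition determines $C(\xi)$ with $C(\xi)\propto|\xi|^{\gamma-1}$, indeed polynomially bounded as you claim.
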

\begin{proof}
We prove the result in the case $\Lambda=\mathbb Z^d$ but the same proof works for a general Bravais lattice of density one. We can calculate the norm of $E_{loc}:=\nabla h=\nabla(g * (\sum_{p \in \Lambda} \delta_p - \delta_{\mathbb R^d}))$ at a point $(x_0,y)$ with $|y|=t$. As $\Lambda$ is periodic, so is $E_{loc}$ and we may suppose that $x_0\in[0,1]^d$. Then we find, 
\begin{eqnarray*}
 h(x_0,y)&=& -|\mathbb S^{d-1}|\int_0^\infty \frac{r^{d-1}} {(t^2+r^2)^{s/2}} dr + \sum_{p\in\mathbb Z^d+x_0} \frac{1}{(t^2+|p|^2)^{s/2}}\\
 &=&-t^{d-s}|\mathbb S^{d-1}|\int_0^\infty \frac{\rho^{d-1}d\rho}{(1+\rho^2)^{s/2}} + t^{-s} \sum_{q\in t^{-1}(\mathbb Z^d+x_0)}\frac{1}{(1+|q|^2)^{s/2}},
\end{eqnarray*}
and we see that this expression is $t^{-s}$ times the error in the approximation for the Riemann integral of $(1+|x|^2)^{-s/2}$ given by the partition of $\mathbb R^d$ in cubic cells centered at $t^{-1}(\mathbb Z^d+x_0)$. To compute $\nabla h= \nabla (g *(\sum_{p \in \Lambda} \delta_p - \delta_{\mathbb R^d}))$ at the same point we must take into account the fact that cancellations occur: for the continuum counterpart the horizontal contributions to $\nabla h$ from points symmetric with respect to $x$ cancel, and the length of the vertical component is $t/(t^2+r^2)^{1/2}$. We thus find that $|E_{loc}|(x,t)$ is $t^{-s-1}$ times the Riemann sum approximation error for the integral of $|x|(1+|x|^2)^{-(s+3)/2}$ corresponding to the decomposition of $\mathbb R^d$ by cubic cells centered at $t^{-1}(\mathbb Z^d+x_0)$. It follows that uniformly in $x_0$ we have the sharp power decay bound $|\nabla h|\le t^{d-s-2}$ and thus (using also the relation $\gamma = s-d+2-k=s-d+1$ valid here)
\[
\int_{\mathbb R\setminus [-t,t]} |y|^\gamma |E_{loc}|^2(x_0,y) dy\le C t^{2d-2s-3+\gamma}= C t^{d-2-s},
\]
which implies \eqref{decrvert}.
\end{proof}

\subsubsection{Weaker decay bound}
In general we were not able to find a way to prove a bound of the form \eqref{decrvert} for our minimizers $E_n'$, and we leave the question of whether the lattice-configurations have the same decay power $a$ as the minimizers for future work. This can be seen as a uniformity conjecture on the minimizers, which therefore is a weaker version of the conjecture that lattices are minimizers.

 Note also that so far we didn't exclude that there exist distinct minimizing configurations with different decay of the energy. 
 
 We will denote, compatibly with the notation \eqref{decrvert}, that for a vector field $E$, a cube $K\subset \mathbb R^d$ and a ``height" $t$,
\begin{equation}\label{defc2tk}
 \frac{1}{|K|}\int_{K\times(\mathbb R\setminus[-t,t])}|y|^\gamma |E|^2 := C_2(E,t,K).
\end{equation}
In the above notation \eqref{decrvert} states that $C_2(E,t,K)= C_2 t^{-a}$, independently of $K$. We denote as follows some more global bounds, provided that they are finite.
\begin{eqnarray*}
C_2(E,t,L)&:=&\sup\left\{C_2(E,t,K_l(a)):\ l\in[L/2,2L],\ a\in\mathbb R^d\right\},\quad L>0,\\
C_2(E,t)&:=&\sup\left\{C_2(E,t,L): L >1\right\}.
\end{eqnarray*}
We then see that, by expressing the average over $K_{2l}(a)$ as the average of the $2^d$ averages over distinct subcubes $K_l(a')$ partitioning $K_{2l}(a)$ up to measure zero, we find
\begin{equation}\label{mediadimedie}
C_2(E,t,2L)\le C_2(E,t,L) \quad\text{ for all }\quad t>0.
\end{equation}
Recall that in \cite[Prop. 7.1]{ps} it was proved that there exists a minimizer $E$ of $\mathcal W_\eta$ over $\mathcal A_1$ which satisfies
\[
\lim_{t\to\infty}\lim_{R\to\infty}C_2(E,R,t) =0,
\]
and moreover (see \cite[Sect. 5]{ps}) a weak limit of a subsequence of rescaled minimizers $E_n'$ provides an $E$ with the above property for generic choices of the blow-up centers.

We need however to use the stronger fact that the choice of $E_n'$ corresponding to a blow-up sequence of minimizers in our problem, satisfies this type of uniform bound too as stated in hypothesis \eqref{hypdecayfields}. 

\section{Screening lemmas}

The following is the main tool for selecting the good boundaries in our constructions. This is used later in combination with our precise splitting formula of Proposition~\ref{prodecr} in performing the screening construction.
\begin{proposition}[Good boundary slices]\label{goodbdry}
Let $K_T$ be a rectangle of sidelenghts $\sim T$, let $\rho\in L^\infty(K_{T})$ and fix $\eta\in]0,1[$. If $\Lambda\subset\mathbb R^d$ is a discrete set, we define
\[
\nu=\sum_{p\in\Lambda}\delta_p,\quad \nu^{(\eta)} = \sum_{p\in\Lambda} \delta_p^{(\eta)}.
\]
Assume that $E=\nabla h$ as in \eqref{hnp} on the rectangle $K_T$, in particular
\[
-\op{div}(\yg E_\eta)=c_{d,s}\left(\nu^{(\eta)} - \rho(x)\delta_{\mathbb R^d}\right) \quad \text{ in }K_{T}\times\R^k.
\]
Further assume that $E$ is controlled in the following sense: 
\begin{equation}\label{bdsgbdry1}
       \frac{1}{T^d}\int_{K_T\times[-t, t]^k}|y|^\gamma|E_\eta|^2\le C_1
\end{equation}
for some positive constant $C_1$ and $t\in [0,T]$.

Let $\ve_1\in]0,1[, L_i\in [\ve_1^{1/d} T_i, T_i]$, $i=1,\ldots,d$ and $a\in \mathbb R^d\times \{0\}$ such that the rectangle $K_L(a)$ of sidelenghts $L_i$ is included in $K_T$ and, if $k=1$, assume that 
\begin{equation}\label{bdsgbdry2}
       \frac{1}{L^d}\int_{K_L\times\left(\mathbb R\setminus[-\frac12 t, \frac12 t]\right)}|y|^\gamma|E_\eta|^2\le C_2.
\end{equation}
for positive constant $C_2$ and $t\in [0,T]$.

Let $l\in ]0, L/3]$. There exists a rectangle $K_L'(a)\subset K_L(a)$ such that $\op{dist}(\partial K_L'(a), \partial K_L(a))\in[l, 2l[$ and a universal constant $C$ such that the following hold: 
\begin{eqnarray}
  \int_{\partial K_L'\times[-t, t]^k}\yg|E_\eta|^2 &\le& C C_1 \ve_1^{-1} l^{-1}L^d\ ,\label{enbd1}
\end{eqnarray}
and when $k=1$ there exists $t'\in [t/2, t]$ such that
\begin{equation}\label{decv}
\int_{K_L' \times \{-t', t'\}} \yg |E_\eta|^2 <  C C_2L^{d}t^{-1}.
\end{equation}
\end{proposition}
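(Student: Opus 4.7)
The plan is a standard Fubini plus mean-value selection. I parametrize a family of concentric sub-rectangles $K_L'(r)=\prod_i[a_i-L_i/2+r,\,a_i+L_i/2-r]$, $r\in[l,2l)$; the hypothesis $l\le L/3$ ensures non-degeneracy, and each face of $\partial K_L'(r)$ lies in one of the $2d$ hyperplanes $\{x_i=a_i\pm(L_i/2-r)\}$. The strategy is to select $r_0\in[l,2l)$ achieving \eqref{enbd1}, then in a separate mean-value step select $t'$ achieving \eqref{decv} for the resulting $K_L'(r_0)$.

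For \eqref{enbd1}, since $\partial K_L'(r)$ is contained in the union of its $2d$ bounding hyperplanes intersected with $K_L(a)$, Fubini in $r$ against the sliding hyperplane coordinate gives
\[
\int_l^{2l}\int_{\partial K_L'(r)\times[-t,t]^k}\yg|E_\eta|^2\,dr \le C_d \int_{K_L(a)\times[-t,t]^k}\yg|E_\eta|^2 \le C_d C_1 T^d \le C_d C_1\ve_1^{-1}L^d,
\]
with $C_d$ a dimensional constant accounting for the overlap multiplicity of the $2d$ face-slabs; the middle inequality uses \eqref{bdsgbdry1} together with $K_L\subset K_T$, and the last uses $L^d\ge\ve_1 T^d$ coming from $L_i\in[\ve_1^{1/d}T_i,T_i]$. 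A mean-value selection on $r\in[l,2l)$ (interval of length $l$) then yields an $r_0$ with $\int_{\partial K_L'(r_0)\times[-t,t]^k}\yg|E_\eta|^2\le 2C_d C_1\ve_1^{-1}l^{-1}L^d$, which is \eqref{enbd1}.

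For \eqref{decv} in the case $k=1$, I fix $r_0$ and apply Fubini in the $y$-direction:
\[
\int_{t/2}^t\int_{K_L'(r_0)\times\{-s,s\}}\yg|E_\eta|^2\,ds = \int_{K_L'(r_0)\times([-t,-t/2]\cup[t/2,t])}\yg|E_\eta|^2 \le \int_{K_L(a)\times(\R\setminus[-t/2,t/2])}\yg|E_\eta|^2 \le C_2 L^d,
\]
by \eqref{bdsgbdry2} and $K_L'(r_0)\subset K_L(a)$. A mean-value selection on $s\in[t/2,t]$ (interval of length $t/2$) produces $t'\in[t/2,t]$ with $\int_{K_L'(r_0)\times\{-t',t'\}}\yg|E_\eta|^2\le 4C_2L^d/t$, giving \eqref{decv}.

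There is no genuine obstacle: the argument is routine Fubini and mean value. The only points requiring care are that the sliding parametrization of the axis-aligned faces has Jacobian $1$, and that the $2d$ face-slabs swept out by the moving hyperplanes have overlap multiplicity bounded by a constant depending only on $d$, so that no extraneous geometric factor enters the bound. Compatibility of the two selections is automatic since the vertical bound depends on $r_0$ only through the containment $K_L'(r_0)\subset K_L(a)$.
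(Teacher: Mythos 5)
Your proof is correct and follows essentially the same route as the paper's, which is a two-line sketch invoking the mean value theorem on a sliding family of concentric boundaries (using \eqref{bdsgbdry1} and $L_i\ge\ve_1^{1/d}T_i$) and a second mean value selection in the vertical variable using \eqref{bdsgbdry2}. Your write-up simply makes the Fubini/overlap-multiplicity bookkeeping explicit.
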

\begin{remark}\label{sepcariche}
We recall that if the charges corresponding to a vector field $E$ are well-separated at distance $r_0$ then the value of $\mathcal W_\eta(E, A)$ over a domain $A$ is bounded in terms of the volume $|A|$. More precisely, if $C_{r_0}=|B_{r_0}|^{-1}$ then we have
\begin{equation}\label{boundsepcariche}
\int_{A\times\mathbb R^k}\yg|E_\eta|^2 - C_{r_0}c_{d,s}g(\eta)|A|\le\mathcal W_\eta(E,A)\le\int_{A\times\mathbb R^k}\yg|E_\eta|^2\ ,
\end{equation}
therefore in our error bounds on thin boundary layers of cubes the main difficulty is in estimating the $L^2_{\yg}$-norm of $E_\eta$ rather than the remainder of $\mathcal W_\eta$. In the setting of Proposition~\ref{goodbdry} we obtain the bound
\begin{equation}\label{boundcubisepcariche}
\frac{\mathcal W_\eta(E, K_L)}{|K_L|}\ge\frac{\mathcal W_\eta(E, K_L')}{|K_L'|} + C_{s,d,\overline m}(g(\eta)+1)\frac{l}{L}, 
\end{equation}
where we include in the notation the fact, mentioned in Proposition~\ref{separation} (see \cite[Thm. 5]{ps}) that $r_0$ depends on $\max_x m(x), s, d$ only.
\end{remark}
\begin{proof}
We consider the boundaries $\partial K_\tau(a)$ for $\tau\in[L-2l,L-l]$. Then the existence of $K_L'(a)$ such that \eqref{enbd1} holds, follows by applying the mean value theorem and using \eqref{bdsgbdry1} and the fact that $L_i\ge \ve_1^{1/d}T_i$. To obtain \eqref{decv} for $k=1$, we use a mean value principle and \eqref{bdsgbdry2} to find $t'\in[t/2,t]$ such that 
\[
\int_{K_L \times \{-t', t'\}} \yg |E|^2 <  2 C_2L^dt^{-1}, 
\]
after which \eqref{decv} follows.
\end{proof}

\subsection{Interchanging small energy boundary data by screening}
The following is a version of proposition 6.1 of \cite{ps} with a varying background measure instead of a constant one. We include the case where we pass with a small energy change from a good boundary datum on the inner hyperrectangle to zero on the outer one, and the case where such roles are inverted and we rather pass from a good boundary on the outer hyperrectangle to zero boundary datum on the inner one.

\begin{proposition}[Screening]\label{screening}

Let $K_T$ be a rectangle of sidelenghts $\sim T$, let $\rho$ be a $C^{0,\alpha}(K_{T+1})$ function with $\alpha>0$ for which there exist $ \underline \rho, \overline \rho>0$ such that $\underline \rho\le \rho(x) \le \overline \rho$. Fix $\eta\in]0,1[$, let $\Lambda\subset\mathbb R^d$ and assume there exists $r_0\in ]0,1/2]$ such that $\op{min}\{|p-q|:\ p\neq q \in \Lambda\} \ge \sqrt d r_0$. Assume that $E=\nabla h$ such that  
\[
-\op{div}(\yg E)=c_{d,s}\left(\sum_{p\in\Lambda}\delta_p - \rho(x)\delta_{\mathbb R^d}\right) \quad \text{ in }K_{T}\times\R^k,
\]
and that $E$ is controlled as in~\eqref{bdsgbdry1} with $C_1>0$. Moreover, let $ \ve_1\in]0,1[,\ L_i\in [\ve_1^{1/d}T_i, T_i]$ for $i=1,\ldots,d$ and $a\in \mathbb R^d\times \{0\}$ be such that the rectangle $K_L(a)$ of sidelenghts $L_i$ is included in $K_T$. If $k=1$, assume also that $C_2$ is such that \eqref{bdsgbdry2} holds, \emph{i.e.} that $C_2\ge C_2(E,t,L)$. Let $K_L'(a), l$ be as in the claim of Proposition~\ref{goodbdry}. 

There exists a constant $c_0$ such that if $L$ is sufficiently large and
\begin{equation}\label{condLt}
\left\{
\begin{array}{cl}
C_1^{\frac{1}{2}}t^{-\frac{d+1}{2}}L^{\frac{d}{2}}\ve_1^{-\frac12} l^{-\frac12}\le c_0&\text{ for } k=0\\
t^{\frac{\gamma-3}{2}}LC_2^{\frac{1}{2}}\le c_0 \text{ and }C_1^{\frac{1}{2}}t^{\frac{-d+\gamma}{2}}L^{\frac{d}{2}}\ve_1^{-\frac12}l^{-\frac12}\le c_0 &\text{ for } k=1
\end{array}
\right.,
\end{equation}
and if we denote 
\begin{equation}\label{deferrsc}
err_{sc}(\eta,t,L,l,\ve_1,C_1, C_2):=C_1\ve_1^{-1}l^{-1}g(\eta)+C_2 t^{-1}L+C_1\ve_1^{-1}l^{-1}  t+g(\eta)L^{-1}t ,
\end{equation}
then the following hold.

\begin{enumerate}
\item There exists $\bar K_L(a)$ such that $K_L'(a)\subset \bar K_L(a)$, $\op{dist}(\partial K_L'(a), \partial \bar K_L(a))\in [t,2t]$ and $\int_{\bar K_L(a)}\rho(x)\,dx\in \mathbb{N}$. There exist a vector field $\tilde E\in L^p_\loc(\bar K_L(a)\times \R^k,\mathbb R^{d+k})$ with $p<\min(2, \frac{2}{1+\gamma}, \frac{d+k}{s+1})$ and a subset $\tilde \Lambda\subset \bar K_L(a)$ such that 
\[
\left\{
\begin{aligned}
&-\op{div}(\yg \tilde E)=c_{d,s}\left(\sum_{p\in\tilde \Lambda}\delta_p - \rho(x)\delta_{\mathbb R^d}\right) & \text{ in }\bar K_{L}(a)\times\R^k\\
&\tilde E\cdot\vec{\nu}=0  & \text{ on }\partial\bar K_{L}(a)\times\R^k\\
&\tilde E\cdot\vec{\nu}=E\cdot \vec{\nu}  & \text{ on }\partial K'_{L}(a)\times [-t,t]^k\\
&\tilde E=E , \ \tilde \Lambda=\Lambda & \text{ in } K'_{L}(a)\times [-t,t]^k\\
\end{aligned}
\right..
\]
Moreover,
\begin{align}\label{energybigger}
\frac{1}{L^d} \int_{(\bar K_L\setminus K_L')\times\mathbb R^k}\yg|\tilde E_\eta|^2 \le C err_{sc}(\eta,t,L,l,\ve_1, C_1, C_2).
\end{align}
Finally, the minimal distance between points in $\tilde \Lambda\setminus \Lambda$, and between points in $\tilde \Lambda\setminus \Lambda$ and $\partial (\bar K_L(a)\setminus K_L(a)')$ is bounded below by $r_1$ with $r_1$ a positive constant depending only on $d, \underline \rho, \overline \rho$.  

\item There exists $\underline K_L(a)$ such that $\underbar K_L(a)\subset K_L'(a)$, $\op{dist}(\partial K_L'(a), \partial \underbar K_L(a))\in [t,2t]$ and $\int_{\underline K_L(a)}\rho(x)\,dx\in \mathbb{N}$. There exist a vector field $\tilde E\in L^p_\loc( K_L(a)\times \R^k,\mathbb R^{d+k})$ with $p<\min(2, \frac{2}{1+\gamma}, \frac{d+k}{s+1})$ and a subset $\tilde \Lambda\subset K_L(a)$ such that 
\[
\left\{
\begin{aligned}
&-\op{div}(\yg \tilde E)=c_{d,s}\left(\sum_{p\in\tilde \Lambda}\delta_p - \rho(x)\delta_{\mathbb R^d}\right) & \text{ in }K_{L}(a)\times\R^k\\
&\tilde E\cdot\vec{\nu}=0  & \text{ on }\partial\underline K_{L}(a)\times\R^k\\
&\tilde E\cdot\vec{\nu}=E\cdot \vec{\nu}  & \text{ on }\partial K'_{L}(a)\times [-t,t]^k\\
&\tilde E=E , \ \tilde \Lambda=\Lambda & \text{ in } (K_L(a)\setminus K'_{L}(a))\times [-t,t]^k\\
\end{aligned}
\right..
\]
Moreover,
\begin{align}\label{energybigger2}
\frac{1}{L^d} \int_{(K_L'\setminus \underline K_L)\times\mathbb R^k}\yg|\tilde E_\eta|^2 \le C err_{sc}(\eta,t,L,l,\ve_1,C_1, C_2).
\end{align}
Finally, the minimal distance between points in $\tilde \Lambda\setminus \Lambda$, and between points in $\tilde \Lambda\setminus \Lambda$ and $\partial (K_L'(a)\setminus \underline K_L(a))$ is bounded below by $r_1$.  

\end{enumerate}
\end{proposition}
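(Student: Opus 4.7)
The strategy mirrors the screening construction of \cite{ps}: keep the original field $E$ on the good inner region $K_L'(a)\times[-t,t]^k$, build an auxiliary field on a thin collar adjacent to the selected good boundary with a lattice-like charge distribution matching the local background $\rho$, and glue the pieces so that no spurious interface charge is produced, because matching Neumann data on $\partial K_L'$ guarantees divergence continuity. Applying Proposition~\ref{goodbdry} to the data provides the good boundary $\partial K_L'(a)$ with flux bound \eqref{enbd1} and, when $k=1$, the horizontal slice height $t'\in[t/2,t]$ with the cap bound \eqref{decv}; all the rest is built from this input.

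Next I would fix the integer and geometric constraints. Since $\rho\ge\underline\rho>0$ is continuous, sliding the outer face $\partial\bar K_L(a)$ through the range $\op{dist}(\partial K_L',\partial\bar K_L)\in[t,2t]$ varies $\int_{\bar K_L}\rho$ continuously by at least $\underline\rho t L^{d-1}$, which is $\gg 1$ for $L$ large, so by intermediate values one can select $\bar K_L$ with $\int_{\bar K_L}\rho\in\mathbb N$ and all geometric requirements satisfied; case (2) is identical with inner face $\partial\underline K_L$ sliding inward. Inside the collar $\bar K_L\setminus K_L'$ one now partitions into almost-cubic cells $Q_j$, each of sidelength comparable to $1$, tuned so that $\int_{Q_j}\rho=1$. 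This is possible because $\rho\in[\underline\rho,\overline\rho]$ and $\rho\in C^{0,\alpha}$; it yields screening points $\tilde\Lambda\setminus\Lambda=\{p_j\}$ with mutual separation and distance from $\partial(\bar K_L\setminus K_L')$ bounded below by $r_1(d,\underline\rho,\overline\rho)$.

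The field $\tilde E$ is then defined as $E$ on $K_L'\times[-t,t]^k$ and as $\nabla u$ on the collar times $[-t,t]^k$ (with a vertical extension in the $k=1$ case), where $u$ solves the weighted Neumann problem
\begin{equation*}
-\op{div}(\yg \nabla u)=c_{d,s}\Bigl(\sum_j \delta_{p_j}-\rho(x)\delta_{\mathbb R^d}\Bigr)\quad\text{in }(\bar K_L\setminus K_L')\times[-t,t]^k,
\end{equation*}
with $\nabla u\cdot\nu=E\cdot\nu$ on $\partial K_L'\times[-t,t]^k$, $\nabla u\cdot\nu=0$ on $\partial\bar K_L\times[-t,t]^k$ and, when $k=1$, $\nabla u\cdot \nu=0$ on the caps $\{|y|=t'\}$. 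Solvability is equivalent to the compatibility condition, which reduces exactly to the flux balance delivered by the integer constraint and the cellwise unit mass; the weighted elliptic theory for the Muckenhoupt weight $|y|^\gamma$ provides the required $L^2_{|y|^\gamma}$ estimates on $\nabla u$ in terms of the right-hand side and the boundary data, as in \cite{cafsil,ps}. Outside $\bar K_L\times[-t,t]^k$, for $k=1$, one extends $\tilde E$ by solving a second Neumann problem with source $-\rho$ continued suitably, choosing the extension so that its tail energy is controlled by the $g(\eta)L^{-1}t$ term.

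The energy estimate \eqref{energybigger} then decomposes into four contributions matching the four summands in \eqref{deferrsc}: (i) the flux term $\int_{\partial K_L'\times[-t,t]^k}\yg|E_\eta|^2\le CC_1\ve_1^{-1}l^{-1}L^d$ propagated inward by trace/Cauchy--Schwarz gives the contribution $C_1\ve_1^{-1}l^{-1}(g(\eta)+t)$; (ii) the cap flux bounded by \eqref{decv} gives $C_2t^{-1}L$ in the $k=1$ case; (iii) the self-energy of the $\sim tL^{d-1}$ screening charges regularized at scale $\eta$ accounts for the $C_1\ve_1^{-1}l^{-1}g(\eta)$ term once normalised by $L^d$; (iv) the vertical tail of the extension yields $g(\eta)L^{-1}t$. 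The smallness conditions \eqref{condLt} are precisely what makes the Neumann data small enough in the relevant norm for the elliptic estimate to close and for the cell construction to have admissible size. The main obstacle is the combined bookkeeping in the $k=1$ case: ensuring that the horizontal flux matching on the slab, the vertical matching at $|y|=t'$, and the far-field extension are mutually compatible while preserving the separation $r_1$; this is where \eqref{bdsgbdry2} and the first line of \eqref{condLt} are essential. Case (2) is obtained by exchanging inner and outer roles in this construction.
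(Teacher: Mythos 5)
You have the right skeleton: keep $E$ on $K_L'(a)\times[-t,t]^k$, populate the collar with unit $\rho$-mass cells (this is exactly the role of Lemma~\ref{subdivision}, which is the main new ingredient the paper needs to handle nonconstant $\rho$), fix the integer mass of $\bar K_L(a)$ by an intermediate-value argument, and match normal traces so no interface charge appears. The separation bound $r_1(d,\underline\rho,\overline\rho)$ for the new points also comes out correctly from unit-mass cells with sidelengths controlled by $\underline\rho,\overline\rho$.

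The genuine gap is in the energy estimate. You define $\tilde E$ on the collar as the solution of a single global weighted Neumann problem and assert that weighted elliptic theory gives the required $L^2_{\yg}$ bound in terms of the data. This cannot work as stated: the a priori estimate for a Neumann problem on the thin collar $(\bar K_L\setminus K_L')\times[-t,t]^k$ carries the Poincar\'e constant of that domain, which in the directions tangent to $\partial K_L'$ is of order $L$, and a factor $L^2$ destroys \eqref{energybigger}. Equivalently, the Neumann solution is the energy minimizer among admissible fields, so bounding its energy requires exhibiting an explicit low-energy competitor --- which is precisely the construction you are skipping. The actual proof (following \cite[Sec.~6]{ps}) is necessarily local and \emph{two-scale}: the collar is tiled by hyperrectangles $H_i$ of size $\sim t$ (not $\sim 1$); the flux of $E_\eta$ entering $H_i$ through $\partial K_L'$, together with the mass $n_i$ of smeared charges straddling the boundary, is absorbed \emph{inside} $H_i$ by shifting the background density by a constant $\bar m_i$ as in \eqref{barmi}, chosen so that $\int_{H_i}(\rho-\bar m_i)\in\mathbb N$; only then is $H_i$ subdivided by Lemma~\ref{subdivision} into unit-mass cells carrying one new point each, and the field is assembled from four explicitly estimated local pieces $E_{i,1},\dots,E_{i,4}$. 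The condition \eqref{condLt} is exactly what guarantees $|\bar m_i|\le\underline\rho/2$, so that the shifted density stays in $[\underline\rho/2,\overline\rho+\underline\rho/2]$ and the subdivision applies --- not a smallness condition ``for the elliptic estimate to close''. Your single-scale partition into unit cells would also fail at this point, since the boundary flux per unit cell need not be small compared with $\underline\rho$. (A minor further slip: the self-energy of the $\sim tL^{d-1}$ new charges accounts for the term $g(\eta)tL^{-1}$ of \eqref{deferrsc}, not for $C_1\ve_1^{-1}l^{-1}g(\eta)$, which instead comes from renormalizing near the good boundary where \eqref{enbd1} holds.)
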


The above result is obtained precisely along the lines of Proposition 6.1 in \cite{ps}, except for a series of modifications needed in order to accommodate the nonconstant $\rho$. Before describing the modifications of the proof, we prove a hyperrectangle subdivision lemma which generalizes \cite[Lem. 6.3]{ps} to the case of a controlled background measure. To re-obtain Lemma 6.3 of \cite{ps} from the statement below, one must take $\rho\equiv 1$ and then scale the units of length by $m^{-1/d}$.
\begin{lemma}[Subdivision of a hyperrectangle]
 \label{subdivision}
 Let  $H=[0,\ell_1]\times\cdots\times[0,\ell_d]$ be a $d$-dimensional hyperrectangle of sidelengths $\ell_i$ and let $\rho:H\to \R^+$ be a function such that for two constants $\underline\rho, \overline\rho$ there holds $0<\underline\rho\le\rho(x)\le \overline\rho$ for all $x\in H$ and assume $\int_H\rho= I\in \mathbb N^*$. Assume that for all $i$, $\ell_i\ge2/\underline\rho$. Fix a face $F$ of $H$. Then  there is a partition of $H$ into $I$  subrectangles $\mathcal R_j$,  such that the following hold
 \begin{itemize}
  \item  all rectangles have volume $1$,   
  \item the sidelengths of each $\mathcal R_j$ lie in the interval $[2^{-d}\overline \rho^{-d}\underline \rho^{d-1} ,2^d \underline\rho^{-d}\overline\rho^{d-1}]$,
  \item all the $\mathcal R_j$'s which have a face in common with $F$ have the same  sidelength in the direction perpendicular to $F$. 
 \end{itemize}
\end{lemma}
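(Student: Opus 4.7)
The plan is induction on the dimension $d$, slicing perpendicularly to the fixed face $F$ at each step and invoking the inductive hypothesis on each resulting slab.

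For the base case $d=1$: given $[0,\ell_1]$ with $\int_0^{\ell_1}\rho=I$, I would define points $0=t_0<t_1<\cdots<t_I=\ell_1$ by $\int_{t_{j-1}}^{t_j}\rho=1$, using the intermediate value theorem on the monotone continuous function $s\mapsto\int_0^s\rho$. The bounds $\underline\rho\le\rho\le\overline\rho$ immediately give $t_j-t_{j-1}\in[\overline\rho^{-1},\underline\rho^{-1}]$, which sits inside the claimed interval.

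For the inductive step $d\ge 2$, after relabeling I assume $F=\{x_d=0\}$, and write $B=[0,\ell_1]\times\cdots\times[0,\ell_{d-1}]$. The central construction is to choose a finite sequence $0=s_0<s_1<\cdots<s_m=\ell_d$ such that the slabs $H_j:=B\times[s_{j-1},s_j]$ have positive integer $\rho$-mass $I_j:=\int_{H_j}\rho$, with each thickness $\tau_j:=s_j-s_{j-1}$ constrained to a ``target window'' depending only on $\underline\rho,\overline\rho$ and $\min_i\ell_i$. This window is chosen so that (i) the base hypothesis $\ell_i\ge 2/(\underline\rho\tau_j)$ holds for all $i<d$ and all $j$, enabling the inductive hypothesis to be applied on each slab's base, and (ii) once combined with the sidelength range produced by the $(d-1)$-dimensional induction on the base, the resulting $d$-dimensional sidelengths all fall in the claimed interval $[2^{-d}\overline\rho^{-d}\underline\rho^{d-1},2^d\underline\rho^{-d}\overline\rho^{d-1}]$. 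I would construct the $s_j$'s greedily: given $s_{j-1}$, apply the intermediate value theorem to the continuous increasing function $s\mapsto\int_{s_{j-1}}^s\int_B\rho(x',u)\,dx'\,du$ and take the smallest $s_j$ in the target window at which the accumulated mass is an integer; the hypothesis $\ell_i\ge 2/\underline\rho$ ensures that $|B|$ is large enough for such integer crossings to be abundant inside the window.

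Having fixed the slicing, I set the marginal density $\tilde\rho_j(x'):=\int_{s_{j-1}}^{s_j}\rho(x',u)\,du$ on $B$, which satisfies $\underline\rho\tau_j\le\tilde\rho_j\le\overline\rho\tau_j$ and $\int_B\tilde\rho_j=I_j\in\mathbb{N}^*$, then apply the inductive hypothesis (with some arbitrary face $F'$ of $B$) to partition $B$ into $I_j$ sub-rectangles of $\tilde\rho_j$-mass $1$. Taking Cartesian products with $[s_{j-1},s_j]$ yields the desired $\rho$-mass $1$ subdivision of each $H_j$, and hence of $H$. The third bullet of the claim is automatic, since only $H_1$ meets $F$ and its sub-rectangles all share $x_d$-thickness $s_1$. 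The sidelength bound is a short algebraic check: induction provides $(d-1)$ sidelengths in $[2^{-(d-1)}\overline\rho^{-(d-1)}\underline\rho^{d-2}\tau_j^{-1},\,2^{d-1}\underline\rho^{-(d-1)}\overline\rho^{d-2}\tau_j^{-1}]$, while $\tau_j$ itself contributes the $d$-th sidelength; the admissible range for $\tau_j$ absorbs exactly the extra factor of $\overline\rho/\underline\rho$ needed to land in the claimed interval.

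The main obstacle I anticipate is the endpoint handling of the greedy construction: the accumulated mass must equal $I$ exactly at $s=\ell_d$, while the terminal slab still requires integer mass and thickness inside the window. I would handle this by allowing a minor mass redistribution between the last two slabs, slightly widening the admissible window for those two slabs only; this costs a bounded multiplicative constant in the sidelength bound that is absorbed by the generous $2^{\pm d}$ prefactors. Carrying out the case analysis across the different regimes of $\underline\rho,\overline\rho$ and $\ell_d$ (in analogy with the $\rho\equiv 1$ case of Lemma 6.3 in \cite{ps}) is the technical heart of the argument, but the structural skeleton above reduces the problem to a purely one-dimensional bookkeeping task.
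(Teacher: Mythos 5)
Your proposal is correct and follows essentially the same route as the paper's proof: induction on the dimension, slicing $H$ perpendicular to $F$ into integer-mass slabs of thickness of order one, projecting $\rho$ onto the base, and applying the inductive hypothesis to each slab's marginal density. The only difference is bookkeeping — the paper fixes a common slab mass $b=\lfloor\tilde\rho|F|\rfloor$ and locates the breakpoints by the mean value theorem (absorbing the remainder into one slab of mass in $[b,2b]$), whereas you greedily pick integer-mass crossings inside a thickness window and patch the last two slabs, which yields the same constants up to the stated $2^{\pm d}$ factors.
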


\begin{proof}
We proceed by induction on the dimension. In case $d=1$ the mean value theorem allows to successively find values $0=p_0<p_1<\cdots<p_I=\ell_1$ such that $\int_{p_i}^{p_{i+1}} \rho=1$. Then the lengths $p_{i+1}-p_i$ lie in $[\overline \rho^{-1}, \underline \rho^{-1}]$ due to the bounds on $\rho$. In case $d\ge 2$ we may assume without loss of generality that the special face $F$ is the one on $\{x_d=0\}$. Define
\[
\tilde\rho :=\frac{\int_H\rho}{|H|}\in[\underline\rho,\overline\rho], \quad b:=\left\lfloor\ell_d^{-1}\int_H\rho\right\rfloor=\left\lfloor\tilde\rho|F|\right\rfloor, 
\]
where $\lfloor a\rfloor:=\max(\mathbb Z\cap]-\infty,a])$, and partition $[0,\ell_d]$ into a certain number $a$ of segments $S_\alpha$ of $\rho$-mass $b$ and one extra segment $S_0$ of $\rho$-mass belonging to $[b,2b]$. Let $\bar\ell_\alpha$ be the lengths of these segments 
. We have $\bar\ell_\alpha|F|\underline\rho\le b\le\bar\ell_\alpha|F|\overline\rho$ so 
\[ 
\bar\ell_\alpha\in\frac{b}{|F|}[\overline\rho^{-1},\underline\rho^{-1}]=\frac{\lfloor\tilde\rho|F|\rfloor}{|F|}[\overline\rho^{-1},\underline\rho^{-1}]\subset\left[\frac{\underline \rho 
-\frac{1}{|F|}}{\overline\rho}\ ,\ \frac{\overline\rho}{\underline\rho}\right].
\]
Since $\ell_i\ge 2/\underline\rho$, we have $\underline\rho-|F|^{-1}\ge \frac12\underline\rho$ and thus
\begin{equation}\label{boundbarlalpha}
\bar\ell_\alpha\in\left[\frac12\frac{\underline\rho}{\overline\rho}\ ,\ \frac{\overline\rho}{\underline\rho}\right].
\end{equation}
Let $H_\alpha'$ be the hyperrectangles $[0,\ell_1]\times\cdots\times[0,\ell_{d-1}]\times S_\alpha$. Then pushing forward by the projection onto $F$ the restrictions $\mu\llcorner H_\alpha'$ gives measures $\mu_\alpha=\rho_\alpha dx$ of integer total mass equal to $b$. Note that we have bounds, $\rho_\alpha(x)\in\bar\ell_\alpha^{-1}[\underline\rho,\overline\rho]$. 
Now we apply the inductive hypothesis on $F$ with the measure $\rho_\alpha$ and we obtain a subdivision of $F$ into $(d-1)$-dimensional unit $\rho_\alpha$-mass hyperrectangles $\mathcal R_j'$. Then by definition of $\rho_\alpha$ the hyperrectangles $\mathcal R_j'\times S_\alpha$ are of unit $\rho$-mass. The sidelenghts of the $\mathcal R_j'$ then lie in 
\begin{equation}\label{boundotherlengths}
\bar\ell_\alpha[2^{1-d}\overline \rho^{1-d}\underline \rho^{d-2} ,2^{d-1} \underline\rho^{1-d}\overline\rho^{d-2}],
\end{equation}
which via \eqref{boundbarlalpha} gives the correct bound as in the thesis. We may perform the same procedure also for the last segment $S_0$, the only difference being that we proceed with a different value $\tilde b\in[b,2b[$ instead of $b$. The only effect of this change is that the bound \eqref{boundbarlalpha} on $\bar\ell_0$ is perturbed by a factor belonging to $[1,2[$, however combined with the bound \eqref{boundotherlengths} in the case $\alpha=0$, this still gives estimates within the range allowed in the thesis.
\end{proof}
\begin{proof}[How to modify the argument in \cite{ps} Section 6 to obtain the proof of Proposition~\ref{screening}:]
We consider only the first case where we desire to change our charges and vector fields only over $K_L'\subset \bar K_L$, which corresponds to the same setting as in \cite[Prop. 6.1]{ps}. The modifications needed for the case where we desire to modify the field and charges on $K_L'\setminus \underline K_L$ are straightforward and we leave them to the interested reader.\\

Section 6 of \cite{ps} is devoted to the proof of Proposition 6.1 there, which is the precise analogue of Proposition~\ref{screening} here. The main difference to \cite{ps} is that here we work with nonconstant $\rho$. The changes to be made are as follows. Here $\bar K_L, K_L'$ correspond to $K_R, K_R'$ of \cite[Prop. 6.1]{ps}, respectively.\\

In our case we work in the simplified situation where points in $\Lambda$ have a minimal separation of $r_0>0$ and multiplicity $1$ for all $p\in\Lambda$. Although most estimates work for more general $\Lambda$ as considered in \cite{ps} too, our present restrictions allow to formulate the conclusion in terms of $\mathcal W_\eta$ rather than in terms of the quantity $\int\yg|E_\eta|^2$ used in \cite{ps}.\\

\textbf{Part 1.} \textit{Changes in the lemmas of \cite[Sec. 6]{ps}:}
\begin{itemize}
\item In all of \cite[Sec. 6]{ps}, we replace at all instances the volume measures $|A|$ of subsets (usually hyperrectangles) $A\subset \mathbb R^d$, by the integrals $\int_A\rho$. For example the hypothesis $|K_R|\in\mathbb N$ of Proposition 6.1 is replaced by our hypothesis $\int_{\bar K_L}\rho\in\mathbb N$ here, which plays the same role. 
\item The vertical part of our domains is now endowed with a separate scale $t$. Thus $t$ plays the role that in \cite{ps} was that of $\ep^2R$, throughout the whole proof.

\item \cite[Lem. 6.3]{ps} has to be replaced by our Lemma~\ref{subdivision} here.
\item In \cite[Lem. 6.5]{ps} we replace the constant $m$ by a function $\rho$ such that $0<\underline\rho\le\rho(x)\le\overline\rho$ throughout $\mathcal R$. The constant $C$ in \cite[Lem. 6.5]{ps} now depends on $\underline\rho,\overline\rho$ as well.
\item \cite[Lem. 6.6]{ps} remains unchanged.
\end{itemize}
\textbf{Part 2.} \textit{Changes in the proof of \cite[Prop. 6.1]{ps}}:\\
We apply the following adaptations, besides the replacement of $d$-dimensional volumes by $\rho$-masses everywhere.
\begin{itemize}
\item The hypotheses analogous to \eqref{enbd1}, \eqref{decv} of our Proposition~\ref{screening} are the same as the estimates of step 1 of \cite[Prop. 6.1]{ps} except for the different choices of constants. 
\item The constant $C_0$ is defined only in case $k=1$. It is now defined using the width $t'\sim t$ given by the thesis of Proposition~\ref{goodbdry} rather than $\ell$ like in \cite{ps}. The role of the constant $C_0$ is to compensate the boundary datum $E_\eta\cdot \nu$ along $\partial K_L'\times[-t',t']$ by a constant boundary datum over $K_L'\times\{-t',t'\}$, thus we define:
\begin{equation*}
C_0:=(2(t')^\gamma|\bar K_L\setminus K_L'|)^{-1}\int_{(\bar K_L\setminus K_L')\times[-t',t']}\yg E_\eta\cdot \nu.
\end{equation*}
\item We tile a neighborhood of $\partial K_L'$ in $\bar K_L\setminus K_L'$ by hyperrectangles $H_i$ of size $\sim t$ and we define again $n_i$ like in \cite{ps} as the total mass of the smeared charges that intersect $\partial K_L'$ restricted to $H_i$.
\item For defining the hyperrectangles $H_i$ we first use a subdivision into size $\sim t$ strips of the $t$-neighborhood above, done along the lines of Proposition~\ref{subdivision}. Observe that measure $\rho + \sum_p\delta_p^{(\eta)}$ is larger than $\underline \rho$ and $x\mapsto \int_{\{x\}\times [-t,t]^k}\yg E_\eta\cdot \nu$ is by Cauchy-Schwartz inequality an $L^2_{loc}$ function on $\partial K_L'$, therefore up to perturbing the $H_i$ slightly (following which also the $n_i$ will change with continuity) we obtain
\[
\int_{H_i} \rho - c_{s,d}^{-1}\left(\int_{\partial D_0\cap \partial \tilde H_i}\yg E_\eta\cdot\nu + 2C_0 t^\gamma|H_i| + n_i\right)\in \mathbb N.
\]
\item Then we define the negative constants $-\bar m_i$ which play the same role as $(m_i-1)$ in \cite{ps}, as follows
\begin{equation}\label{barmi}
c_{d,s}\bar m_i|H_i| = \int_{\partial D_0\cap \partial \tilde H_i}\yg E_\eta\cdot\nu - 2C_0 t^\gamma|H_i| - n_i.
\end{equation} 
This implies that 
\[
\int_{H_i}(\rho - \bar m_i)\in \mathbb N.
\]
\item The requirement replacing $|m_i - 1|<1/2$ in \cite{ps} is that $|\bar m_i|\le \underline \rho/2$. Assuming this to hold, we then have $\rho - \bar m_i\in [\underline\rho/2, \overline\rho +\underline \rho/2]$.
\item The $\mathcal R_\alpha$ are produced via our new Lemma~\ref{subdivision}, applied to the density $\rho - \bar m_i$, which by the two previous points satisfies the hypotheses in that lemma.
\item The bounds on sidelengths of the $\mathcal R_\alpha$ described in \cite[par. following (6.43)]{ps} now contain another bounded factor $\overline\rho^{d-1}/\underline\rho^d$. 
\item The scale of the hyperrectangles in \cite{ps} was $\ell = \ep^2R$, whereas here we take the $H_i$ of size $\sim t$, as given in the good boundary Proposition~\ref{goodbdry}. 
\item We find an analogue of \cite[(6.45)]{ps} in our setting, by multiplying all the terms by a constant factor dependent on $\underline \rho, \overline\rho$, because $\int_{H_i}\rho\ge \underline\rho|H_i|$ and the factors $t^{-d}$ in the right hand side are obtained from the analogue of \cite[(6.34)]{ps}(we recall again that the parameter $t$ here corresponds to $\ell$ of \cite{ps} in this case) via the comparison $ct^d\le|H_i|\le Ct^d$, following from \cite[Lem. 6.3]{ps} in \cite{ps}, whereas in our setting, Lemma~\ref{subdivision} gives the extra factors as above.
\item Step 4 of the proof of \cite[Prop. 6.1]{ps} provides the conditions ensuring $|m_i-1|\le 1/2$. Our $-\bar m_i$ are defined precisely like $m_i-1$ in that proof, thus our corresponding desired bound $|\bar m_i|\le \underline \rho/2$ can be proved in the same way. The bounds we obtain change only by a constant depending on $\underline\rho, \overline\rho, d$. One simplification is that the sum of $\bar n_\alpha,\ \alpha\in I_i$ is in our case bounded by $C r_0^dt^{d-1}$ using the charge separation result of Proposition~\ref{separation}, therefore the corresponding term, which appears in our analogue of \cite[(6.45)]{ps} multiplied by a $t^{-d}$ factor, is bounded for $t$ large. The remaining estimates give precisely the condition \eqref{condLt}.
\item The construction of the four vector fields $E_{i,1},\ldots, E_{i,4}$ is conducted with the following modifications:
\begin{itemize}
\item The fields $E_{i,1}, E_{i,2}$ are defined exactly as in \cite{ps}.
\item The field $E_{i,3}$ has as the only change the replacement of the constant charge $1-m_i$ by the constant $\bar m_i$.
\item The field $E_{i,4}$ is still defined as a superposition of fields obtained like in \cite[Lem. 6.5]{ps} on the $\mathcal R_\alpha$ that cover $H_i$, but now we use the background negative charge $-(\rho -\bar m_i)$, to which we add a charge at the center of $\mathcal R_\alpha$. Note that the estimate of \cite[Lem. 6.5]{ps} depends only on elliptic estimates and holds for the new charges that we use here. The constant obtained in such estimate now depends on $\underline\rho, \overline\rho$.
\end{itemize}
\end{itemize}

\end{proof}

\subsection{Choice of parameters}\label{choiceparsec}

In this subsection we are going to establish a set of parameters, which ensure that \eqref{condLt} holds and that \eqref{energybigger}, \eqref{energybigger2} are $\ll 1$. In particular, this will imply a condition on $\varepsilon_1$, which says that a good bound on $err_{sc}$ can be expected only for $\ve_1$ not too small. 

\subsubsection{Case $k=0$}
In the case $k=0$, one can write $\varepsilon_1$ as a power of $T$ and choose $l$ and $t$ to be powers of $L$. More precisely, let $\varepsilon_1$, $l$ and $t$ be such that 
\begin{align*}
\varepsilon_1=T^{\left(\frac{1-\delta}{\delta}\right)d},\ l=L^{b}\ \text{and}\ t=L^{\theta}
\end{align*}
with $\delta>1$, $b<1$ and $\theta<1$. Recall that $L$ is such that $\varepsilon_1^dT=T^{1/\delta}\le L\le T$ which gives $T\le L^{\delta}$.
A straightforward calculation shows that  \eqref{condLt} holds if
$$
\theta\ge\frac{\delta d-b}{d+1}
$$
and 
\begin{equation*}
err_{sc}(\eta,t,L,l,\ve_1,C_1, 0)=C_1g(\eta)L^{(\delta-1)d-b}+C_1L^{\theta+(\delta-1)-b}+g(\eta)L^{\theta-1}.
\end{equation*}
As a consequence to ensure that $err_{sc}(\eta,t,L,l,\ve_1,C_1, 0)\ll 1$ we have to choose
$$
\theta<b+(1-\delta)d.
$$
Note that, since $\delta>1$, $b+(1-\delta)d<b<1$. To sum up, $\theta$ as to be chosen such that
\begin{equation}\label{condtheta}
\frac{\delta d-b}{d+1}\le \theta< b+(1-\delta)d.
\end{equation}
This is possible if and only if $\frac{\delta d-b}{d+1}< b+(1-\delta)d$. As a consequence $\delta$ has to be such that 
\begin{equation}\label{conddelta}
1<\delta<1+\frac{b(d+2)-d}{d(d+2)}
\end{equation}
which in particular gives a condition on the scale $\varepsilon_1T$. Such a $\delta$ exists if and only if $\frac{b(d+2)-d}{d(d+2)}>0$. Hence $b$ has to be chosen such that
\begin{equation}\label{condb}
b>\frac{d}{d+2}.
\end{equation}
Finally remark that, with this choice of parameters $(\delta-1)d-b<0$ and 
\begin{equation}\label{smallerrsckzero}
err_{sc}(\eta,t,L,l,\ve_1,C_1, 0)=(g(\eta)+1)(1+C_1)o_{L\to \infty}(1).
\end{equation}
\subsubsection{Case $k=1$} In the case $k=1$, the situation is more delicate since we do not know explicitly the decay rate of $C_2(E,t,L)$. Nevertheless, in what follows, we need to apply Proposition~\ref{screening} to a sequence of fields $E_n'$ such that $$\lim_{t\to+\infty}\lim_{L\to\infty}\lim_{n\to\infty} C_2(E_n',t,L)=0.$$ Hence, for all $\varepsilon_2>0$, for $L\gg \ve_2^{1/2}$ and $L,n$ large enough there holds
\begin{equation*}
       \frac{1}{L^d}\int_{K_L\times\left(\mathbb R\setminus[-\frac12 \ve_2^{1/2} L, \frac12 \ve_2^{1/2} L]\right)}|y|^\gamma|E_{n,\eta}'|^2\le \varepsilon_2.
\end{equation*}
Hence, by choosing $t=\ve_2^{1/2}L, l=\ve_2^{1/4}L$ and $C_2=\ve_2$ in Proposition~\ref{screening}, we have that the first condition in~\eqref{condLt} holds. The second one is satisfied if  
$$
C_1 \ve_2^{-\frac{2(d-\gamma)+1}{4}} L^{(\gamma-1)}\ve_1^{-1}\le c_0^2
$$
that is if
$$
L\ge c_0^{-\frac{2}{1-\gamma}}\ve_1^{-\frac{1}{1-\gamma}} \ve_2^{-\frac{2(d-\gamma)+1}{4(1-\gamma)}}.
$$
Moreover, 
\begin{equation*}
err_{sc}(\eta, \ve_2^{1/2} L,L, \ve_2^{1/4} L,\ve_1,C_1,  \ve_2)=C_1\frac{1}{\ve_1\ve_2^{1/4} L}g(\eta)+\ve_2^{1/2} +C_1\frac{\ve_2^{1/4}}{\ve_1}  +g(\eta)\ve_2^{1/2}.
\end{equation*}
To ensure that $err_{sc}(\eta, \ve_2^{1/2} L,L, \ve_2^{1/4} L,\ve_1,C_1,  \ve_2)\ll 1$, $\ve_1$ has to be such that 
\begin{equation}\label{condeps1}
1\ge \ve_1 \gg \ve_2^{\frac14}.
\end{equation}
In this case the above bounds on $L$ are implied by the condition
\begin{equation}\label{condLkone}
L\ge c_0^{-\frac{2}{1-\gamma}} \ve_2^{-\frac{(d-\gamma)+1}{2(1-\gamma)}}.
\end{equation}
To summarize, if \eqref{condeps1}, \eqref{condLkone} hold then
\begin{equation}\label{smallerrsckone}
err_{sc}(\eta, \ve_2^{1/2} L,L, \ve_2^{1/4} L,\ve_1,C_1,  \ve_2)=(1+g(\eta))(1+C_1)o_{L,n\to \infty}(1).
\end{equation}

\section{Proof of~\eqref{mainlimit} of Theorem~\ref{mainthm}}\label{proofnocrenel}
By using the fact that for a minimizer of $H_n$ all the point are in $\Sigma$, and in view of Proposition~\ref{splitting} and the result of \cite[Thm. 4]{ps}, we have the a priori bound
\begin{equation}\label{boundWeta}
\mathcal W_\eta(E'_n,\R^d)\le n \int_{\Sigma}\min_{\mathcal{A}_{m'_V(x)}} \mathcal W\,dx+o_{n\to+\infty}(n)+o_{\eta\to 0}(1).
\end{equation}
Moreover, in case $k=1$ we work under the assumption \eqref{hypdecayfields}. Finally, we will use that since $m_V$ is $C^{0,\alpha}$ in $\Sigma$, we have
\begin{equation}\label{eqgradmV}
\| m_V'\|_{C^{0,\alpha}(\Sigma')} \le \frac{\|m_V\|_{C^{0,\alpha}(\Sigma)}}{n^{\alpha/d}},
\end{equation}
and, whenever $K_T(a)\subset \Sigma'$,  $T\lesssim{n^{1/d}}$, $0\le \beta<\alpha\le 1$ there holds
\begin{equation}\label{petitesvar}
\| m'_V\|_{\mathcal C^{0,\alpha}(K_T(a))}T^{\beta}\le C\frac{1}{({n^{1/d}})^{\alpha-\beta}}\le o_{n\to+\infty}(1)
\end{equation}
The proof is based on a bootstrap argument as in \cite{rns} : by a mean value argument, using the a priori bound on the energy~\eqref{boundWeta} and $T=n^{1/d}$ as initial scale, we can find a square close to $K_\ell(a)$ which has a good boundary, \emph{i.e.} such that~\eqref{enbd1}, \eqref{decv} are satisfied (relative to $\ell$). This is only possible if $\ell$ is not too small compared to $n^{1/d}$, more precisely if $\ell_i\in [\ve_1^{1/d}n^{1/d}, n^{1/d}]$ for $i=1,\ldots,d$. If indeed $\ell$ is such that $\ell_i\in [\ve_1^{1/d}n^{1/d}, n^{1/d}]$ then we are essentially done: a comparison argument in the hypercube with the good boundary allows to conclude. More precisely, we use the following result. We note here that in this section we omit the index $n$ on $a$ in order to lighten up the notation.

\begin{proposition}\label{compgbdry} Let $C_1$ a positive constant. Let $t,L,l,\ve_1,\ve_2$ and $E'_n$ be as in Section~\ref{choiceparsec}. Assume that bounds \eqref{enbd1} and \eqref{decv} hold for $E'_n$ in $K'_L(a)\subset K_L(a)$ with $\op{dist}(\partial K_L'(a), \partial K_L(a))\in[l, 2l[$. Then 
\begin{equation}\label{compgbdryest}
\left|\frac{\mathcal W_\eta(E'_n,K'_L(a))}{|K'_L(a)|}-\frac{1}{| K_L(a)|}\int_{K_L(a)} \min_{\mathcal A_{m_V'(x)} }\mathcal W dx\right|\le (1+g(\eta))(1+C_1)o_{L,n\to\infty}(1)+o_{\eta\to0}(1).
\end{equation}
\end{proposition}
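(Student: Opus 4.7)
The plan is to establish both inequalities making up the absolute-value bound \eqref{compgbdryest}, via a two-sided comparison between $E'_n$ and explicit near-minimizers built from the screening procedure of Proposition~\ref{screening} and the hyperrectangle tiling of Lemma~\ref{subdivision}. The upper bound on $\mathcal W_\eta(E'_n,K'_L(a))$ will follow from the minimality of the $n$-point configuration $(x_1,\ldots,x_n)$ for $H_n$, via the splitting formula of Proposition~\ref{splitting}, in comparison with a competitor field made locally optimal on $K'_L(a)$. The lower bound will follow by applying Proposition~\ref{screening}, part 1, to produce from $E'_n$ a zero-normal-boundary-data field on a slightly larger box $\bar K_L(a)\times\R^k$ whose energy is then compared with the definition of $\min_{\mathcal A_{m_V'(x)}}\mathcal W$ on a fine subdivision of $K_L(a)$.

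For the upper bound, I apply Proposition~\ref{screening}, part 2, to $E'_n$ on $K'_L(a)$ using the good-boundary bounds \eqref{enbd1}, \eqref{decv}, obtaining a subcube $\underline K_L(a)\subset K'_L(a)$ and a field $\tilde E$ that coincides with $E'_n$ outside $\underline K_L(a)$, has zero normal boundary data on $\partial\underline K_L(a)\times\R^k$, and whose modification cost on $K'_L(a)\setminus\underline K_L(a)$ is controlled by \eqref{energybigger2}. Inside $\underline K_L(a)$, I tile by Lemma~\ref{subdivision} applied to $m_V'$ into cells $Q_j$ with $\int_{Q_j}m_V'\in\mathbb N$, on each of which $m_V'$ is $o(1)$-close to a constant value $m_V'(x_j)$ by \eqref{petitesvar}. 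On each $Q_j$ I place a near-minimizer of $\mathcal W$ in $\mathcal A_{m_V'(x_j)}$, glued to neighbouring cells through the zero-normal-data condition supplied by a further screening step. The resulting global field corresponds to a valid $n$-point configuration, and the splitting formula combined with the minimality of $H_n$ gives $\mathcal W_\eta(E'_n, K'_L(a))\le \mathcal W_\eta(\tilde E, K'_L(a))+o(L^d)$; by construction the right-hand side is bounded by $|K'_L(a)|\cdot\frac{1}{|K_L(a)|}\int_{K_L(a)}\min_{\mathcal A_{m_V'(x)}}\mathcal W\,dx + err_{sc}\cdot L^d$, and the bounds \eqref{smallerrsckzero}--\eqref{smallerrsckone} on $err_{sc}$ close the estimate.

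For the lower bound, I apply Proposition~\ref{screening}, part 1, to $E'_n$ on $K'_L(a)\times[-t,t]^k$ to obtain a field $\hat E$ on $\bar K_L(a)\times\R^k$ which equals $E'_n$ on $K'_L(a)\times[-t,t]^k$, has zero normal boundary data on $\partial\bar K_L(a)\times\R^k$, and whose modification cost on $\bar K_L(a)\setminus K'_L(a)$ is controlled by \eqref{energybigger}. I further subdivide $\bar K_L(a)$ using Lemma~\ref{subdivision} into cells $Q_j$ with $\int_{Q_j}m_V'\in\mathbb N$ on which $m_V'\approx m_j:=m_V'(x_j)$, and screen once more on each $Q_j$ to produce a field admissible in $\mathcal A_{m_j}$ with zero normal boundary data on $\partial Q_j$; extending to $\mathbb R^{d+k}$ by periodic translation of $Q_j$ then yields an admissible field in $\mathcal A_{m_j}$ whose energy density on $Q_j$ gives $\mathcal W_\eta(\hat E, Q_j)/|Q_j|\ge \min_{\mathcal A_{m_j}}\mathcal W - o(1)$ by the definition of the latter infimum. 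Summing, identifying the Riemann sum $\sum_j |Q_j|\min_{\mathcal A_{m_j}}\mathcal W$ with $\int_{K_L(a)}\min_{\mathcal A_{m_V'(x)}}\mathcal W\,dx + o(L^d)$ via \eqref{petitesvar}, and subtracting the boundary-layer contribution bounded by $err_{sc}\cdot L^d$, yields the lower bound.

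The main obstacle is to control all the screening errors consistently within the tolerance $(1+g(\eta))(1+C_1)o_{L,n\to\infty}(1)$, especially for $k=1$ where the vertical decay of $E'_n$ is known only through \eqref{hypdecayfields}; this forces the coupling $t=\varepsilon_2^{1/2}L$, $l=\varepsilon_2^{1/4}L$, $\varepsilon_1\gg\varepsilon_2^{1/4}$ fixed in Section~\ref{choiceparsec}. A secondary technical point is the integer-mass constraint $\int_{Q_j}m_V'\in\mathbb N$ required at every screening step, which is handled via the perturbation of cell sizes allowed by Lemma~\ref{subdivision} and is compatible with the local regularity of $m_V'$ given by \eqref{petitesvar}.
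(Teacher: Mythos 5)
Your proposal follows essentially the same route as the paper: minimality of $H_n$ via the splitting formula to compare $E_n'$ with local competitors sharing its normal boundary data (the paper's Lemma~\ref{lemminenergy}), Proposition~\ref{screening} part 2 plus a subdivision into integer-mass cells carrying screened near-minimizers for the upper bound (Proposition~\ref{propub}), and Proposition~\ref{screening} part 1 plus per-cell good boundaries, screening and reflection/periodization for the lower bound (Proposition~\ref{proplb}), with the parameter choices of Section~\ref{choiceparsec} absorbing $err_{sc}$. The only step you elide is the correction of the piecewise-constant cell backgrounds $m_V'(x_j)$ to the true density $m_V'(x)$ by auxiliary Neumann corrector fields $\nabla h_K$ controlled through elliptic estimates and \eqref{petitesvar} --- without it the glued competitor does not lie in $\mathcal A_{m_V'}$ and the periodized per-cell field does not lie in $\mathcal A_{m_j}$ --- but this is precisely what the paper does and your appeal to \eqref{petitesvar} points in the right direction.
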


Next, if $\ell_i$ is smaller than $\ve_1^{1/d}n^{1/d}$ for some $i$, we bootstrap the argument: we first obtain by the above argument a control of the energy and the number of points on a hypercube of size $\ve_1^{1/d}n^{1/d}$ containing $K_\ell(a)$, and then we re-apply the reasoning starting from that hypercube. This allows to go down to a smaller scale, and we iterate the procedure until we reach the desired value of $\ell$. This iteration will not cumulate error, its only main restriction is that the final square will have to be at a certain distance away from $\partial \Sigma'$, because of the repeated mean value arguments.

More precisely, we proceed as follows for the proof of~\eqref{mainlimit} of Theorem~\ref{mainthm}. Let $K_\ell(a)$ as in the statement of Theorem~\ref{mainthm}. We set
\begin{equation}\label{defC1}
C_1=\max_{m\in [\underline m,\overline m]}\min_{\mathcal A_m}\mathcal W+C_{s,d,\overline m}g(\eta)+1
\end{equation}
and
\begin{equation}\label{deflell}
l_{\ell}:=\left\{
\begin{aligned} &\ell^b & \text{if } k=0\\
&\varepsilon_2^{1/4} \ell & \text{if } k=1
\end{aligned}
\right.
\end{equation}
with $b<1$ and $\varepsilon_2$ chosen as in Section~\ref{choiceparsec}. 
Without loss of generality we may assume $\op{dist}( K_\ell(a), \partial\Sigma')\ge 3l_{\ell}$. In this case $\op{dist}( K_\ell(a), \partial\Sigma')\ge \max(d_n, 3l_{\ell})$ where 
$$
d_n:=\left\{
\begin{aligned} &n^{q/d} & \text{if } k=0\\
&\varepsilon_0 n^{1/d} & \text{if } k=1
\end{aligned}
\right..
$$
The proof for the case where $d_n\le \op{dist}(K_\ell(a), \partial \Sigma')< 3l_{\ell}$ would start by subdividing the cube $K_\ell(a)$ into cubes of the smallest size $L_{min}$ permitted by the choices of Section~\ref{choiceparsec}, namely $L_{min}\sim 1$ for $k=0$ and $L_{min}\sim \ve_2^{-\frac{2(d-\gamma)+1}{2(1-\gamma)}}$ for $k=1$. The smaller cubes have $3l_{L_{min}}\le d_n$ respectively if $n$ is large enough in case $k=0$ and if $\ve_2$ is small enough (which by Section~\ref{choiceparsec} means that $\ell,n$ must be large enough depending on \eqref{hypdecayfields}) in case $k=1$. Therefore the result for $\op{dist}( K_{L_{min}}(a), \partial\Sigma')\ge 3l_{L_{min}}$ holds for all cubes in the subdivision, and summing the bounds \eqref{mainlimit} for all these cubes we obtain \eqref{mainlimit} for $K_\ell(a)$ too.

We split the proof of Theorem~\ref{mainthm} into two cases.

{\bf Case $1$: $\ell+3l_{\ell}\ge \ve_1^{1/d}n^{1/d}$.} Let us then define the scale $L^{(1)}=\ell+3l_{\ell}$. Since we assumed that $\op{dist}( K_\ell(a), \partial\Sigma')\ge 3l_{\ell}$, we have $K_{\ell+3\l_{\ell}}(a)\subset \Sigma'$, and so there exists a center $a^{(1)}$ such that
$$
K_{\ell+3\l_{\ell}}(a)\subset K_{L^{(1)}}(a^{(1)})\subset \Sigma'.
$$ 
Then we apply Proposition~\ref{goodbdry} with $K_T$ replaced by $\Sigma'$ itself, $C_1$ given by \eqref{defC1} and $L=L^{(1)}$. Moreover $t,C_2,  \varepsilon_1$ and $l$ are chosen as in Section~\ref{choiceparsec}. This gives the existence of a square $K'_{L^{(1)}}(a^{(1)})\subset K_{L^{(1)}}(a^{(1)})$ such that $\op{dist}(\partial K'_{L^{(1)}}(a^{(1)}), \partial K_{L^{(1)}}(a^{(1)}))\in[l, 2l[$ and \eqref{enbd1} and \eqref{decv} are satisfied. Note that for $\ell$ large enough 
$$
K_{\ell}(a)\subset K'_{L^{(1)}}(a^{(1)})\subset K_{L^{(1)}}(a^{(1)})\subset \Sigma'.
$$
Then by applying Proposition~\ref{compgbdry}, we deduce
\begin{equation}\label{estimationproofth}
\left|\frac{\mathcal W_\eta(E'_n,K'_{L^{(1)}}(a^{(1)}))}{|K'_{L^{(1)}}|}-\frac{1}{| K_{L^{(1)}}|}\int_{K_{L^{(1)}}(a^{(1)})} \min_{\mathcal A_{m_V'(x)} }\mathcal W dx\right|\le (1+g(\eta))(1+C_1)o_{\ell,n\to\infty}(1)+o_{\eta\to0}(1).
\end{equation}

By using charges separation and the fact that $K_{\ell}(a)\subset K'_{L^{(1)}}(a^{(1)})$, we obtain the upper bound
\begin{align*}
\frac{\mathcal W_\eta(E'_n,K_{\ell}(a))}{|K_{\ell}|}&\le \frac{1}{|K_{L^{(1)}}|}\int_{K_{L^{(1)}}(a^{(1)})} \min_{\mathcal A_{m_V'(x)} }\mathcal W dx+g(\eta)^2o_{\ell,n\to\infty}(1)+o_{\eta\to0}(1)\\
&=\frac{1}{|K_{\ell}|}\int_{K_{\ell}(a)} \min_{\mathcal A_{m_V'(x)} }\mathcal W dx+g(\eta)^2o_{\ell,n\to\infty}(1)+o_{\eta\to0}(1).
\end{align*}
To obtain the lower bound, we may apply once again Proposition~\ref{goodbdry} at the scale $\ell$ to obtain a square $K'_\ell(a)\subset K_\ell(a)$ with a good boundary and such that 
$$
\frac{\mathcal W_\eta(E'_n,K_{\ell}(a))}{|K_{\ell}|}\ge \frac{\mathcal W_\eta(E'_n,K'_{\ell}(a))}{|K'_{\ell}|}-g(\eta)o_{\ell\to\infty}(1).
$$
Hence the application of Proposition~\ref{compgbdry} to the square $K'_\ell(a)$ leads to the desired result.

{\bf Case $2$: $\ell+3l_{\ell}< \ve_1^{1/d}n^{1/d}$.} Let $L^{(1)}=\ve_1^{\frac1d}n^{\frac1d}$. Since we have $\op{dist}( K_\ell(a), \partial\Sigma')\ge d_n$, we have $K_{\ell+d_n}(a)\subset \Sigma'$, and so there exists a center $a^{(1)}$ such that
$$
K_{\ell+d_n}(a)\subset K_{L^{(1)}}(a^{(1)})\subset \Sigma'.
$$ 
Then we apply Proposition~\ref{goodbdry} with $K_T$ replaced by $\Sigma'$ itself, $C_1$ given by \eqref{defC1} and $L=L^{(1)}$. Moreover $t,C_2,  \varepsilon_1^{(1)}$ and $l^{(1)}$ are chosen as in Section~\ref{choiceparsec}. This gives the existence of a square $K'_{L^{(1)}}(a^{(1)})\subset K_{L^{(1)}}(a^{(1)})$ such that $\op{dist}(\partial K'_{L^{(1)}}(a^{(1)}), \partial K_{L^{(1)}}(a^{(1)}))\in[l^{(1)}, 2l^{(1)}[$ and \eqref{enbd1} and \eqref{decv} are satisfied. Note that for $\ell$ large enough there holds
$$
K_{\ell+d_n-2l^{(1)}}(a)\subset K'_{L^{(1)}}(a^{(1)})\subset K_{L^{(1)}}(a^{(1)})\subset \Sigma'.
$$
Then by applying Proposition~\ref{compgbdry}, we have~\eqref{estimationproofth} as before. Next, if $\ell$ is large enough, the hypotheses of Proposition~\ref{goodbdry} are satisfied in $K'_{L^{(1)}}$ (which plays the role of $K_T$ in Proposition~\ref{goodbdry}), and with the same constant $C_1$. This a consequence of~\eqref{estimationproofth} for $\ell, n$ and $1/\eta$ large enough. We can thus re-apply Proposition~\ref{goodbdry} in $K'_{L^{(1)}}$
and with new subscale $L^{(2)}=\max(\ell+3l_\ell,(\ve_1^{(1)})^{1/d}L^{(1)})$. If $\ell+3l_\ell\ge (\ve_1^{(1)})^{1/d}L^{(1)}$, we conclude as in the Case $1$. If $\ell+3l_\ell< (\ve_1^{(1)})^{1/d}L^{(1)}$, we iterate the above procedure. In the end, we obtain a sequence of $L^{(j)}$ with $L^{(j)}=\max(\ell+3l_\ell,(\ve_1^{(j-1)})^{1/d}L^{(j-1)})$ and such that  
$$
K_{\ell+d_n-2\sum_{j=1}^{\bar j}l^{(j)}}(a)\subset K'_{L^{(\bar j)}}(a^{(\bar j)})\subset K_{L^{(\bar j)}}(a^{(\bar j)})\subset\ldots\subset K'_{L^{(1)}}(a^{(1)})\subset K_{L^{(1)}}(a^{(1)})\subset \Sigma'.
$$
where $\bar j$ is the smallest integer such that $\ell> \prod_{j=2}^{\bar j}(\ve_1^{(j-1)})^{1/d}(\ve_1 n)^{1/d}=L^{(\bar j)}$. Now, we have to ensure that $K_\ell(a)$ is a subset $K'_{L^{(\bar j)}}(a^{(\bar j)})$. This is true if $d_n-2\sum_{j=1}^{\bar j}l^{(j)}>0$. We proceed as follows.
\begin{itemize}
\item {\bf Case $k=0$.} In this case, by Section~\ref{choiceparsec}, it follows that $L^{(j)}=(L^{(j-1)})^{1/\delta}=(n^{1/d})^{\delta^{-j}}$ and $l^{(j)}=(L^{(j)})^b$. Bounding each time $L^{(j)}$ by $n^{1/(\delta d)}$ we have that $\sum_{j=1}^{\bar j}l^{(j)}\le  \bar j n^{b/(\delta d)}$. Moreover, $\bar j$ is finite. Indeed, $\bar j$ is defined as the smallest integer such that $\ell> \prod_{j=2}^{\bar j}(\ve_1^{(j-1)})^{1/d}(\ve_1 n)^{1/d}=(n^{1/d})^{\delta^{-\bar j}}$ which gives 
$$
\bar j= \left[\frac{\log\frac{\log n^{1/d}}{\log \ell}}{\log\delta}\right]. 
$$
Hence, to have  
$$
d_n-2\sum_{j=1}^{\bar j}l^{(j)}\ge n^{q/d}-2 \bar j n^{b/(\delta d)}>0
$$
it is enough to choose $1>q>\frac{b}{\delta}$.
\item {\bf Case $k=1$.}  By the $k=1$ cases in Section~\ref{choiceparsec}, it now follows that $l^{(j)}=\ve_2^{1/4}L^{(j)}$
and $L^{(j)}=\prod_{r=2}^{j}(\ve_1^{(r-1)})^{1/d}(\ve_1 n)^{1/d}$. So to deal with this case, we fix $\ve_1^{(r)}=\ve_2^{1/8}$ for all $r\in [\![ 1,\bar j ]\!]$. This is compatible with condition on $\ve_1$ in Section~\ref{choiceparsec}. As a consequence $L^{(j)}=\ve_2^{j/(8d)}n^{1/d}$ for all $j\in [\![ 1,\bar j ]\!]$ and 
$$
\sum_{j=1}^{\bar j}l^{(j)}=  \ve_2^{1/4} n^{1/d} \sum_{j=1}^{\bar j} (\ve_2^{1/(8d)})^j\le \ve_2^{1/4}\frac{\ve_2^{1/(8d)}}{1-\ve_2^{1/(8d)}}n^{1/d}.
$$
for all $\ve_2>0$ provided that $\ell$ and $n$ are large enough. Therefore if $d_n=\ve_0 n^{1/d}$ with $\ve_0>0$, then $d_n-2\sum_{j=1}^{\bar j}l^{(j)}>0$.
\end{itemize}

\subsection{Comparison argument on hypercubes with good boundary}

This subsection is devoted to the proof of Proposition~\ref{compgbdry}. We start with the following lemma.

\begin{lemma}\label{lemminenergy}
Let $(x_1,\cdots,x_n)$ be a minimizer of $H_n$. Let $\mu_V=m_V(x)dx,\ \mu_V'=m_V'(x')dx'$ be respectively the equilibrium measure and its blow-up at scale $n^{1/d}$ as above and let $E_n'=\nabla h_n'$ be a sequence of blown-up vector fields corresponding to these minimizers as in \eqref{hnp}. Let $\Sigma'$ be the support of $\mu_V'$. Then for all $\Omega \subset \Sigma'$, we have
\begin{equation}\label{minoveromega}
	\mathcal W_\eta(E'_n,\Omega)\le \mathcal W_\eta(E,\Omega)+o_{\eta\to 0}(1)
\end{equation}
for any $E=\nabla h$ which satisfies  
\begin{equation*}
\left\{
\begin{aligned}
&-\op{div} (\yg \nabla h) = c_{s,d} \Big( \sum_{i=1}^n \delta_{p'_i}-m'_V\delta_{\mathbb R^d} \Big) \quad& \text in\ \Omega\times\R^k\\
&E\cdot \vec{\nu}=E'_n\cdot \vec{\nu} \quad& \text on \ \partial \Omega \times \R^k
\end{aligned}
\right.
\end{equation*}
with all the points $p_i'$ in $\Sigma'$. 
\end{lemma}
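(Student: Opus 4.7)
The plan is to leverage the global minimality of $H_n$ through a gluing construction. Given the competitor $E$ on $\Omega\times\R^k$ with the prescribed Neumann boundary data $E\cdot\vec\nu=E'_n\cdot\vec\nu$, I define the global vector field $\tilde E$ on $\R^{d+k}$ by $\tilde E:=E$ on $\Omega\times\R^k$ and $\tilde E:=E'_n$ on $(\R^d\setminus\Omega)\times\R^k$. The matching of normal traces makes $\op{div}(\yg\tilde E)$ well-defined across $\partial\Omega\times\R^k$. Applying Gauss's law to $\Omega\times\R^k$ (with decay of the fields as $|y|\to\infty$ in the case $k=1$), the boundary condition forces the number of charges of $E$ inside $\Omega$ to equal that of $E'_n$. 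Hence the rescaled configuration $\tilde x=(\tilde x_1,\dots,\tilde x_n)$, obtained by keeping the original $x_i\notin\Omega$ and replacing the inner ones by $p_i=n^{-1/d}p'_i$, is a valid $n$-point configuration with all points lying in $\Sigma$ by hypothesis.

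Next, apply Proposition~\ref{splitting} to both $x$ and $\tilde x$ and use $H_n(x)\le H_n(\tilde x)$. The macroscopic $n^2 I(\mu_V)$ terms (and the $-(n/d)\log n$ term in the logarithmic cases) are identical, and the $\zeta$-contributions vanish because all points of both configurations lie in $\Sigma$. Since both configurations carry exactly $n$ points, the $nc_{s,d}g(\eta)$ counterterm cancels in the comparison, yielding
\begin{equation*}
\int_{\R^{d+k}}\yg|\nabla h'_{n,\eta}|^2\le \int_{\R^{d+k}}\yg|\nabla h'_{\tilde x,\eta}|^2+o_{\eta\to 0}(1).
\end{equation*}
Because $\tilde E_\eta$ and $\nabla h'_{\tilde x,\eta}$ solve the same divergence equation on $\R^{d+k}$ with identical right-hand side, their difference is $\yg$-divergence-free; an integration by parts with decay at infinity gives the Dirichlet minimality $\int\yg|\nabla h'_{\tilde x,\eta}|^2\le \int\yg|\tilde E_\eta|^2$. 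Concatenating the two bounds yields $\int\yg|E'_{n,\eta}|^2\le \int\yg|\tilde E_\eta|^2+o_{\eta\to 0}(1)$.

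To localize, I split $\R^{d+k}=(\Omega\times\R^k)\sqcup((\R^d\setminus\Omega)\times\R^k)$. On the complement, $\tilde E$ coincides with $E'_n$ and the underlying charges are the same; away from an $\eta$-thin neighborhood of $\partial\Omega$ the truncations $(\cdot)_\eta$ agree too, so these contributions cancel exactly. On $\Omega\times\R^k$ similarly $\tilde E$ coincides with $E$ and the charges are the $p'_i$. Subtracting from both sides the equal quantity $c_{s,d}g(\eta)\sum_{p\in\Lambda\cap\Omega}\int_{\Omega\times\R^k}\delta^{(\eta)}_p$ (equal since the number of charges of $E$ and $E'_n$ inside $\Omega$ coincide by the Gauss's-law step above) produces the $\mathcal W_\eta$-comparison \eqref{minoveromega}.

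The main technical obstacle is the cross-boundary bookkeeping of the truncation $(\cdot)_\eta$: for charges lying within distance $\eta$ of $\partial\Omega$ the smoothed measure $\delta_p^{(\eta)}$ straddles $\partial\Omega$, and $\tilde E_\eta$ may differ from $E_\eta$ or $E'_{n,\eta}$ on balls of radius $\eta$ near the interface. Here the charge separation provided by Proposition~\ref{separation} is decisive: the number of charges of $E$ or $E'_n$ within distance $\eta$ of $\partial\Omega$ is bounded by $C|\partial\Omega|\eta$ in the blown-up coordinates, so for $\eta$ smaller than the positive distance from all of these finitely many charges to $\partial\Omega$ the cancellation is exact, and in any case the residual mismatch is $o_{\eta\to 0}(1)$, which accounts precisely for the error term in \eqref{minoveromega}.
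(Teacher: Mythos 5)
Your proposal is correct and follows essentially the same route as the paper's own proof: global minimality of $(x_1,\dots,x_n)$ combined with the splitting formula of Proposition~\ref{splitting}, gluing the competitor $E$ to $E_n'$ outside $\Omega$ via the matching normal traces, and localizing by noting that the contributions outside $\Omega$ coincide. You in fact supply details the paper leaves implicit (the Gauss's-law charge count, the projection onto gradients relating the glued field to the true potential of the modified configuration, and the $\eta$-truncation bookkeeping near $\partial\Omega$), but the underlying argument is the same.
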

The proof of the lemma has some similarity to \cite[Sec. 5.2]{rns}.
\begin{proof} Let $(x_1,\cdots,x_n)$ minimize $H_n$ and $E_n'=\nabla h_n'$ be as in \eqref{hnp}. Since all points are in $\Sigma$ by Proposition~\ref{separation}, in view of Proposition~\ref{splitting}, by minimality of $(x_1,\cdots,x_n)$, we have
\begin{equation}\label{comparebareen}
	\lim_{\eta \to 0}\mathcal W_\eta(E'_n,\R^d)\le \lim_{\eta\to 0}\mathcal W_\eta(\bar E,\R^d)
\end{equation}
for any $\bar E=\nabla h$ which satisfies  
$-\op{div} (\yg \nabla h) = c_{s,d} \Big( \sum_{i=1}^n \delta_{p'_i}-m'_V\delta_{\mathbb R^d} \Big)$ in $\R^{d+k}$
with all the points $p_i'$ in $\Sigma'$. Note that we may extend $E$ as in the statement (provided $\int_\Omega \mu'_V$ is such that such an $E$ exists) by $E_n'$ outside $\Omega$ in order to obtain a competitor $\bar E$ as above. Moreover as a consequence the energy contributions to \eqref{comparebareen} of $\bar E, E'_n$ outside $\Omega$ coincide. This proves our result.
\end{proof}

As a consequence of Lemma~\ref{lemminenergy}, we have that, for all $\Omega\in \Sigma'$,
\begin{equation}
 \frac{\mathcal W_\eta(E'_n,\Omega)}{|\Omega|}=\min_{E\in \mathcal{A}_{m'_V},\ E\cdot\vec{\nu}=E'_n\cdot\vec{\nu} } 
\frac{\mathcal W_\eta(E,\Omega)}{|\Omega|} + o_{\eta\to 0}(1).
\end{equation}
We now use parameters as in Section~\ref{choiceparsec}

Next, we remark that the vector field $E'_n$ satisfies the hypotheses of Proposition~\ref{screening}. Hence, there exists $\underline K_L(a)\subset K'_L(a)\subset K_L(a)$ such that by \eqref{smallerrsckzero} or \eqref{smallerrsckone} 
\begin{align}\label{ubmaincube}
\frac{\mathcal W_\eta(E'_n,K'_L(a))}{|K'_L(a)|}\le \sigma_{0}(\underline K_L(a);m'_V)+\left(1+g(\eta)\right)o_{L\to \infty}(1) + o_{\eta\to 0}(1),
\end{align}
with $\sigma_{0}(K;m'_V)=\min_{E\in \mathcal{A}_{m'_V},\ E\cdot\vec{\nu}=0 } \frac{\mathcal W_\eta(E,K)}{|K|}$. To prove the claim \eqref{ubmaincube} we apply Proposition~\ref{screening}, which allows to construct a vector field $\tilde E$ and a subset $\tilde \Lambda\subset K_L(a)$ such that 
\[
\left\{
\begin{aligned}
&-\op{div}(\yg \tilde E)=c_{d,s}\left(\sum_{p\in\tilde \Lambda}\delta_p - m'_V(x)\delta_{\mathbb R^d}\right) & \text{ in }K_{L}(a)\times\R^k\\
&\tilde E\cdot\vec{\nu}=0  & \text{ on }\partial\underline K_{L}(a)\times\R^k\\
&\tilde E\cdot\vec{\nu}=E'_n\cdot \vec{\nu}  & \text{ on }\partial K'_{L}(a)\times [-t,t]^k\\
&\tilde E=E'_n , \ \tilde \Lambda=\Lambda & \text{ in } (K_L(a)\setminus K'_{L}(a))\times [-t,t]^k\\
\end{aligned}
\right..
\]
and
\begin{align*}
\frac{1}{L^d} \int_{(K_L'\setminus \underline K_L)\times\mathbb R^k}\yg|\tilde E_\eta|^2 \le C err_{sc}(\eta,t,L,l,\ve_1,C_1, C_2).
\end{align*}
Due to the choices of parameters like in Section~\ref{choiceparsec}, the quantity $err_{sc}$ is bounded by $(1+g(\eta))(1+C_1)o_{L,n\to \infty}(1)$. The bound \eqref{ubmaincube} then follows using the bound on the separation of charges and a packing argument to bound the number of charges in $K_L'\setminus\underline K_L$ by $C|K_L'\setminus\underline K_L|\le CtL^{d-1}$.

Hence, by defining $\tilde E=\argmin_{E\in \mathcal{A}_{m'_V},\ E\cdot\vec{\nu}=0 }\frac{\mathcal W_\eta(E, \underline K_L)}{| \underline K_L|}$ on $ \underline K_L(a)$ and by using Lemma~\ref{lemminenergy}, we obtain 
$$
\frac{\mathcal W_\eta(E'_n, K'_L(a))}{|K'_L(a)|}\le \frac{\mathcal W_\eta(\tilde E, K'_L(a))}{| K'_L(a)|}.
$$  
By the same reasoning, there exists $K'_L(a)\subset \bar K_L(a)$, a subset $\tilde \Lambda \subset \bar K_{L}(a)$ and a vector field $\tilde E'_n$ such that 
\[
\left\{
\begin{aligned}
&-\op{div}(\yg \tilde E'_n)=c_{d,s}\left(\sum_{p\in\tilde \Lambda}\delta_p - m'_V(x)\delta_{\mathbb R^d}\right) & \text{ in }\bar K_{L}(a)\times\R^k\\
&\tilde E'_n\cdot\vec{\nu}=0  & \text{ on }\partial\bar K_{L}(a)\times\R^k\\
&\tilde E'_n\cdot\vec{\nu}=E'_n\cdot \vec{\nu}  & \text{ on }\partial K'_{L}(a)\times [-t,t]^k\\
&\tilde E'_n=E'_n , \ \tilde \Lambda=\Lambda & \text{ in } K'_{L}(a)\times [-t,t]^k\\
\end{aligned}
\right.
\]
and 
\begin{align}\label{lbmaincube}
\frac{\mathcal W_\eta(E'_n,K'_L(a))}{|K'_L(a)|}\ge &\frac{\mathcal W_\eta(\tilde E'_n, \bar K_L(a))}{|\bar K_L(a)|}-err_{sc}(\eta,t,L,l,\ve_1,C_1, C_2(E_n',t,L))\\
&=\frac{\mathcal W_\eta(\tilde E'_n, \bar K_L(a))}{|\bar K_L(a)|}-\left(1+g(\eta)\right)(1+C_1)o_{L,n\to \infty}(1).
\end{align}
The last step required in order to conclude the proof, is to find an upper bound for $\sigma_{0}(\underline K_L(a);m'_V)$ and a lower bound for $\frac{\mathcal W_\eta(\tilde E'_n, \bar K_L(a))}{|\bar K_L(a)|}$. This is done in the two next subsections.
\subsubsection{Upper bound}
To obtain an upper bound for $\sigma_{0}(K;m'_V)=\min\limits_{E\in \mathcal{A}_{m'_V},\ E\cdot\vec{\nu}=0 } \frac{\mathcal W_\eta(E,K)}{|K|}$, we proceed as in \cite{rns}. In particular we have the following proposition.

\begin{proposition}\label{propub} 
Let $a\in \R^d$ and $R>0$ such that $K_R(a)\subset \Sigma'$. Let $\alpha\in]0,1]$ and $\rho$ be a non-negative $\mathcal C^{0,\alpha}(K_R(a))$ function for which there exists $\underline \rho, \overline \rho >0$ such that $\underline \rho\le \rho(x)\le \overline \rho$. Let $K_R(a)$ be such that $\int_{K_R(a)}\rho(x)\,dx \in \N$. Then there exists a constant $C>0$ depending only on $d,s$ such that for each fixed $\beta\in]0,1+\alpha[$ we have 
\begin{align}\label{upperbound}
\sigma_{0}(K_R(a);\rho)\le&\,\frac{1}{|K_R|}\int_{K_R(a)}\min_{\mathcal{A}_{\rho(x)}} \mathcal W\,dx +C (g(\eta) +1) o_{R\to \infty}(1)+o_{\eta\to0}(1)+C R^{2\beta}\|\rho\|^2_{C^{0,\alpha}(K_R(a))}\nonumber\\
&+C R^{\beta}\|\rho\|_{C^{0,\alpha}(K_R(a))}\left[c_{d,s}\overline \rho g(\eta)+1+(g(\eta) + 1)o_{R\to \infty}(1) +o_{\eta\to0}(1)\right]^{1/2}.
\end{align}	
\end{proposition}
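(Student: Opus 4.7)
The plan is to construct an explicit admissible competitor field $E^*\in\mathcal A_\rho$ on $K_R(a)$ with zero normal flux on $\partial K_R(a)$, by subdividing $K_R(a)$ into sub-rectangles, using constant-density minimizers on each, and gluing them together through the screening Proposition~\ref{screening}. This closely parallels the strategy of \cite{rns} but adapted to our variable-density, higher-dimensional and Riesz setting.

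First I would apply Lemma~\ref{subdivision} to partition $K_R(a)$ into sub-rectangles $\mathcal R_j$ of integer $\rho$-mass, with sidelengths comparable to a scale $L=R^\theta$ (with $\theta\in]0,1[$ related to the parameter $\beta$), whose ratios are uniformly controlled in terms of $\underline\rho,\overline\rho$. On each $\mathcal R_j$ I would pick a reference point $x_j$ and choose $E_j$ nearly achieving $\min_{\mathcal A_{\rho(x_j)}}\mathcal W$. Next, I would apply Proposition~\ref{screening}(1) with variable background $\rho$ on each $\mathcal R_j$: this produces a field $\tilde E_j$ on a slight enlargement of $\mathcal R_j$ with zero normal flux on the outer boundary, coinciding with $E_j$ on an interior slice, and satisfying the variable-density equation with background $\rho(x)$. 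By the parameter choices of Section~\ref{choiceparsec}, the screening error $err_{sc}$ sums over the $(R/L)^d$ sub-rectangles to at most $(1+g(\eta))o_{R\to\infty}(1)\cdot|K_R|$. Since each $\tilde E_j$ has vanishing normal flux on $\partial\mathcal R_j$, concatenating them yields a global field $E^*$ admissible for $\sigma_0(K_R(a);\rho)$.

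To conclude, I would estimate
\[
\sigma_0(K_R(a);\rho)\le\frac{\mathcal W_\eta(E^*,K_R(a))}{|K_R|}\le\frac{1}{|K_R|}\sum_j|\mathcal R_j|\min_{\mathcal A_{\rho(x_j)}}\mathcal W+\text{(screening + Hölder errors)}+o_{\eta\to 0}(1),
\]
then use the scaling identities \eqref{scalingW}--\eqref{scalinglog} together with the Hölder continuity of $\rho$ to replace $\min_{\mathcal A_{\rho(x_j)}}\mathcal W$ by $\min_{\mathcal A_{\rho(x)}}\mathcal W$. Since the minimum scales smoothly in the density, this substitution costs $|\rho(x)-\rho(x_j)|\le L^\alpha\|\rho\|_{C^{0,\alpha}}$ linearly, which after integration against the reference energy and use of Cauchy--Schwarz yields exactly the cross term $CR^\beta\|\rho\|_{C^{0,\alpha}}[c_{d,s}\overline\rho g(\eta)+1+\cdots]^{1/2}$. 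The purely quadratic correction $CR^{2\beta}\|\rho\|^2_{C^{0,\alpha}}$ arises from bounding the $L^2_{|y|^\gamma}$-norm of the potential correction that the screening procedure inserts to adjust from the constant density $\rho(x_j)$ to the variable $\rho(x)$; by elliptic estimates this norm is controlled by the square of the density oscillation $L^{2\alpha}\|\rho\|^2_{C^{0,\alpha}}$ times the volume, and the exponent $2\beta$ in the final bound encodes the choice of sub-scale $L$ relative to $R$.

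The main obstacle will be the careful accounting of the density-variation error in Step 3 and Step 5. Unlike the constant-density setting of \cite{ps}, here the reference field $E_j$ is not exactly admissible for the equation with background $\rho(x)$, so both the first-order (linear in oscillation) and second-order (quadratic, via Cauchy--Schwarz) corrections must be tracked explicitly. The condition $\beta\in]0,1+\alpha[$ reflects the trade-off between choosing $L$ large enough for the screening in Section~\ref{choiceparsec} to apply and small enough that the oscillation $L^\alpha\|\rho\|_{C^{0,\alpha}}$ times $L$ remains small compared to $R$; the precise verification that all errors vanish in the desired regime reduces to the parameter analysis already carried out in Section~\ref{choiceparsec}.
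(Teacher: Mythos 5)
Your plan is in substance the paper's own argument: partition $K_R(a)$ into integer $\rho$-mass subrectangles at a mesoscale $r=R^{\lambda}$ with $\lambda\le\beta/(1+\alpha)$, place screened near-minimizers of $\mathcal W$ for a \emph{constant} reference density on each piece, correct the background to the variable $\rho(x)$ by an auxiliary potential controlled by elliptic estimates, expand by Cauchy--Schwarz to produce exactly the quadratic term $R^{2\beta}\|\rho\|^2_{C^{0,\alpha}}$ and the cross term $R^{\beta}\|\rho\|_{C^{0,\alpha}}[\cdots]^{1/2}$, and finally convert $\sum_j|\mathcal R_j|\min_{\mathcal A_{\rho_j}}\mathcal W$ into $\int\min_{\mathcal A_{\rho(x)}}\mathcal W$ via the scaling identities and H\"older continuity.

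One step as literally written does not go through: you cannot ``apply Proposition~\ref{screening}(1) with variable background $\rho$'' to a field $E_j$ chosen in $\mathcal A_{\rho(x_j)}$, because that proposition requires its input to already solve $-\mathrm{div}(\yg E)=c_{d,s}(\sum_p\delta_p-\rho(x)\delta_{\mathbb R^d})$ with the \emph{same} $\rho$, and its output coincides with the input on the interior slice, so the mismatch between $\rho(x_j)$ and $\rho(x)$ would survive there and $E^*$ would not lie in $\mathcal A_\rho$. The paper instead screens at constant density (the average $\rho_K=\fint_K\rho$, via \cite[Prop.~6.1]{ps}) to get $E_K$ with zero Neumann data on $\partial K\times\mathbb R^k$, and then \emph{separately} adds $\nabla h_K$, where $h_K$ solves the Neumann problem $-\mathrm{div}(\yg\nabla h_K)=c_{s,d}(\rho_K-\rho(x))\delta_{\mathbb R^d}$ on $K\times[-r,r]^k$; elliptic estimates give $\int\yg|\nabla h_K|^2\le Cr^{2\alpha+2}\|\rho\|^2_{C^{0,\alpha}}|K|$, which is precisely the ``potential correction'' you invoke in your last paragraph — but it is not inserted by the screening, it must be built by hand. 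With that reorganization (and the closing remark that projecting onto gradients only decreases the energy, so the glued field can be replaced by an admissible gradient field), your outline matches the paper's proof.
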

\begin{remark}
Note that the rate of convergence of the above bounds $o_{R\to\infty}(1)$ as $R\to\infty$ depends on $m_V, \beta$. As $\beta\to 0$ this rate degenerates, as can be seen in the proof. In the present formulation, choosing $\beta>\alpha$ provides no advantage, however it may be possible by refining the proof to obtain more precise estimates of $o_{R\to\infty}(1)$ valid only for such large values of $\beta$.
\end{remark}
\begin{proof}
The proof of this proposition is similar to \cite[Prop. 4.1]{rns} except that we have to be more careful with error terms which blow up as $\eta\to 0$. 

\noindent {\bf Step 1.}
We define a smaller scale $r=R^\lambda$ with $\lambda\in]0,1[$ for now, and with $\lambda$ to be fixed at the end of the proof, and we construct a collection $\mathcal K$ of rectangles which partition  $K_R(a)$,  whose sidelengths are between $r-O\left(\frac{1}{r}\right)$ and $r+O\left(\frac{1}{r}\right)$, and such that for all $K\in \mathcal K$ we have $\int_K \rho(x)\,dx\in \mathbb{N}$. This is possible for example via the partitioning lemma~\cite[Lem. 6.3]{ps}.\\

\noindent {\bf Step 2.} We denote by $x_K$ the center of each $K$ and $\rho_{K}=\fint_{K}\rho(x)\,dx$ and we consider $E$ a minimizer of $\mathcal W$. Since as in \cite[Sec. 7]{ps} and \cite{rs,ss2d} we may obtain $E$ by screening minimizing configurations on larger and larger cubes, we may assume due to Proposition~\ref{separation} that charges contributing to $E$ are well-separated and have multiplicity one. Moreover, if $k=1$, we note that, by the periodicity of $E$, we must have  
$$
\lim_{t\to+\infty}\lim_{R\to+\infty}\frac{1}{R^d}\int_{K_R\times(\R\smallsetminus(-t,t))}|y|^{\gamma}|E|^2=0.
$$

Hence, using \cite[Prop. 6.1]{ps}, we obtain in each $K$ a vector field $E_K$ satisfying
$$
\left\{\begin{aligned}&-\op{div} (\yg E_K) = c_{s,d} \Big( \sum_{p\in \Lambda_K} \delta_{p}-\rho_K\delta_{\mathbb R^d} \Big) & \text{ in}\ K\times\R^k\\ &E_K\cdot\vec{\nu}=0 &\text{ on}\ \partial  K\times \R^k \\
&E_K=0 &\text{ outside } K\times \R^k\end{aligned}\right.
$$
for some discrete subset $ \Lambda_K\subset  K$, and
\begin{equation}\label{ubestimenergy1}
\frac{\mathcal W_\eta(E_K,{ K})}{|K|}\le \min_{\mathcal A_{\rho_K}}\mathcal W+C(g(\eta)+1)o(1)_{r\to +\infty}+o_{\eta\to0}(1), 
\end{equation}
where the $o(1)$ terms depend on the approximation of $\min \mathcal W$ in \cite{ps} as described above.\\

Then, we have to rectify the weight $\rho_K$. For $K\in \mathcal K$, we let $h_K$ solve 
\begin{equation*}
\left\{
\begin{aligned}
	&-\op{div} (\yg \nabla h_K) = c_{s,d} (\rho_K- \rho(x))\delta_{\mathbb R^d}\quad & \text{ in}\ K\times [-r, r]^k\\
	&\partial_\nu h_K=0\quad & \text{ on the rest of }\ \partial (K\times [-r, r]^k)\\
\end{aligned}
\right.
\end{equation*}\
and we put $h_K=0$ in $K\times (\R^k\smallsetminus [-1, 1]^k)$.  
As a consequence of elliptic estimates as in the proof of \cite[Lem. 6.4]{ps}, we have 
\begin{align}
\label{eqestimweightk0}
\int_{K} |\nabla h_K|^2\le C r^2\|\rho-\rho_K\|^2_{L^{\infty}(K)}|K|\le C r^{2\alpha+2}\|\rho\|^2_{C^{0,\alpha}(K)}|K| & \quad\text{for } k=0;\\
\label{eqestimweightk1}
\int_{K\times \R^k} \yg |\nabla h_K|^2\le C r^{1-\gamma} \|\rho-\rho_K\|^2_{L^{\infty}(K)}|K|\le C r^{2\alpha+1-\gamma}\|\rho\|^2_{C^{0,\alpha}(K)}|K| & \quad\text{for } k=1.
\end{align}
We then define $\tilde E$ to be $E_K + \nabla h_K$ in each $K\in \mathcal{K}$. Pasting these together defines a  $\tilde E$ over the whole $K_R(a)$, satisfying
$$
\left\{\begin{aligned}&-\op{div} (\yg \tilde E) = c_{s,d} \Big( \sum_{p\in \Lambda} \delta_{p}-\rho(x)\delta_{\mathbb R^d} \Big) \quad& \text{ in}\ K_R(a)\times\R^k\\ &\tilde E\cdot\vec{\nu}=0 &\text{ on}\ \partial  K_R(a)\times \R^k \end{aligned}\right.
$$
for some discrete set $\Lambda$. Next we have to evaluate $\mathcal W_\eta(\tilde E, K_R(a))$. In each $K\in \mathcal K$, we apply the Cauchy-Schwartz inequality to the squared  $L^2_\gamma$-norm of $\tilde E$, followed by \eqref{ubestimenergy1} and estimating the $\nabla h_K$-term via the weaker bound \eqref{eqestimweightk0} in both cases $k=0,1$, rather than using the more precise \eqref{eqestimweightk1} for $k=1$. This gives:
\begin{align*}
\mathcal W_\eta(\tilde E,K) \le &\,\mathcal W_\eta(E_K,K)+\int_{K\times \R^k} \yg |\nabla h_K|^2
+2\left(\int_{K\times \R^k} \yg |E_K|^2\right)^{1/2}\left(\int_{K\times \R^k} \yg |\nabla h_K|^2\right)^{1/2}\\
&\le|K|\min_{\mathcal{A}_{\rho_K}} \mathcal W +C|K|(g(\eta) + 1)o_{r\to \infty}(1)+|K|o_{\eta\to0}(1)+C |K|r^{2\alpha+2}\|\rho\|^2_{C^{0,\alpha}(K)}\\
&+C|K|r^{\alpha+1}\|\rho\|_{C^{0,\alpha}(K)}\left(\frac{\mathcal W_\eta(E_K,K)}{|K|}+c_{d,s}\rho_K g(\eta)\right)^{1/2}\\
&\le|K|\min_{\mathcal{A}_{\rho_K}} \mathcal W +C|K|(g(\eta) + 1)o_{r\to \infty}(1)+|K|o_{\eta\to0}(1)+C |K|r^{2\alpha+2}\|\rho\|^2_{C^{0,\alpha}(K)}\\
&+C|K|r^{\alpha+1}\|\rho\|_{C^{0,\alpha}(K)}\left[1+(g(\eta) + 1)o_{r\to \infty}(1)+o_{\eta\to0}(1)+c_{d,s}\rho_K g(\eta)\right]^{1/2}
\end{align*}
We now sum over all hypercubes $K$ as above and we use the subadditivity of $A\mapsto \mathcal W_\eta(E,A)$ in order to add the contributions of the above left-hand sides to bound $\mathcal W_\eta(\tilde E,K_R(a))$. On the right hand side all terms except the first give contribution of $|K_R|$ multiplied by the error terms. For the first term, we proceed using the scaling $\min_{\mathcal A_{\rho}}\mathcal W= \rho^{1+s/d}\min_{\mathcal A_1}\mathcal W$ for power kernels and $\min_{\mathcal A_{\rho}}\mathcal W= \rho\min_{\mathcal A_1}\mathcal W - \tfrac{c_{0,d}}{d} \rho\log\rho$ for logarithmic ones. Using this together with the fact that $\rho\in C^{0,\alpha}(K_R(a))$, we conclude that 
\begin{equation}\label{boundholdercube}
\left|\int_{K} \min_{\mathcal A_{\rho(x)}}\mathcal W dx - |K|\min_{\mathcal A_{\rho_K}}\mathcal W \right|\le C|K|r^{\alpha}\|\rho\|_{C^{0,\alpha}(K)} M_{s,d}(\rho),
\end{equation}
where
\begin{equation}\label{msdrho}
M_{s,d}(\rho) =\left\{\begin{array}{ll}\bar\rho^{s/d}\min_{\mathcal A_1}\mathcal W, &\text{ for power-law kernels,}\\ \min_{\mathcal A_1}\mathcal W - \log\underline\rho -1, &\text{ for logarithmic kernels.}\end{array}\right.
\end{equation}
We may then absorb this term into the $r^{1+\alpha}$-terms above and after summing all contributions, using the above bounds, and dividing both sides by $|K_R|$, we obtain:
\begin{align*}
\frac{\mathcal W_\eta(\tilde E,K_R(a))}{|K_R|} \le
&\,\frac{1}{|K_R|}\int_{K_R(a)}\min_{\mathcal{A}_{\rho(x)}} \mathcal W\,dx\\
&+ C (g(\eta) +1) o_{R\to \infty}(1)+o_{\eta\to0}(1)+C r^{2\alpha+2}\|\rho\|^2_{C^{0,\alpha}(K_R(a))}\\
&+C r^{\alpha+1}\|\rho\|_{C^{0,\alpha}(K_R(a))}\left(1+(g(\eta) + 1)o_{R\to \infty}(1) +o_{\eta\to0}(1)+c_{d,s}\overline \rho g(\eta)\right)^{1/2}
\end{align*}
This gives precisely \eqref{upperbound} if we choose 
\[
 0<\lambda\le\frac{\beta}{1+\alpha}.
\]
To conclude, we remark that projecting $E$ onto gradients decreases the energy. The result follows.
\end{proof}

\subsubsection{Lower Bound}

The goal of this section is to find a lower bound for $\frac{\mathcal W_\eta(\tilde E'_n, \bar K_L(a))}{|\bar K_L(a)|}$.

\begin{proposition}\label{proplb}
Let $\tilde E'_n$ be as above and let $\alpha\in]0,1]$ such that $m_V'\in C^{0,\alpha}(K_R(a))$ with $m_V'(x)\le \overline m$ and let $\beta\in]0,1+\alpha[$. Then there exists a constant $C>0$ depending only on $d,s$ such that there holds 
 \begin{align}\label{lowerbound}
 \frac{\mathcal W_\eta(\tilde E'_n, \bar K_L(a))}{|\bar K_L(a)|}\ge&\,\frac{1}{|\bar K_L|}\int_{\bar K_L(a)}\min_{\mathcal{A}_{m_V'(x)}} \mathcal W \,dx - o_{L,n\to\infty}(1)\left(1+g(\eta)\right) - \left(1+o_{L,n\to\infty}(1)\right) o_{\eta\to 0}(1) \nonumber\\
&-CL^{2\beta}\|m'_V\|^2_{C^{0,\alpha}(\bar K_L(a))}-CL^{\beta}\|m_V'\|_{C^{0,\alpha}(\bar K_L(a))}\left(C+c_{d,s}\bar m g(\eta)\right)^{1/2}.
\end{align}
\end{proposition}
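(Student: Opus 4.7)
The plan is to mirror the upper bound of Proposition~\ref{propub} in reverse: partition $\bar K_L(a)$ into small rectangles on which $m_V'$ is approximately constant, locally modify $\tilde E'_n$ so as to reduce to a per-piece lower bound in terms of $\min_{\mathcal A_{m_K}}\mathcal W$, and then reassemble. Concretely, I would use Lemma~\ref{subdivision} to partition $\bar K_L(a)$ into rectangles $K\in\mathcal K$ of sidelengths $\sim r=L^\lambda$ (with $\lambda\in]0,1[$ to be fixed at the end) satisfying $\int_K m_V'\in\mathbb N$, set $m_K:=\fint_K m_V'$, and solve the Neumann problem
\begin{equation*}
-\op{div}(\yg\nabla h_K)=c_{s,d}(m_V'-m_K)\delta_{\mathbb R^d}\ \text{ in }K\times[-r,r]^k,\quad \partial_\nu h_K=0\ \text{ on the rest of }\partial(K\times[-r,r]^k),
\end{equation*}
extended by zero outside. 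Then $F_K:=\tilde E'_n+\nabla h_K$ satisfies the Riesz equation on $K$ with constant background $m_K$ and the same charge set as $\tilde E'_n$ on $K$.

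The core step is to establish the local lower bound $\mathcal W_\eta(F_K,K)/|K|\ge\min_{\mathcal A_{m_K}}\mathcal W-(1+g(\eta))\,o_{L,n\to\infty}(1)-o_{\eta\to0}(1)$. For this I would apply Proposition~\ref{goodbdry} to $F_K$ on $K$ to select a good sub-rectangle, then the second alternative of Proposition~\ref{screening} to produce a field $\hat F_K$ on $K$ with zero normal flux on $\partial K$, background $m_K$, coinciding with $F_K$ outside a thin inner layer, and paying only the screening error \eqref{energybigger2}, which by the parameter choices of Section~\ref{choiceparsec} is of the required size. Because of the zero-flux condition on $\partial K$ and the uniform separation of the charges of $\hat F_K$ guaranteed at the end of Proposition~\ref{screening}, the field $\hat F_K$ can be reflected and tiled periodically in the $x$-directions to a global admissible field $\bar F_K\in\mathcal A_{m_K}$, with no spurious Dirac masses on the faces of $K$ and with asymptotic energy density equal to $\mathcal W_\eta(\hat F_K,K)/|K|$ in the sense of \eqref{defW}. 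By the very definition of $\min_{\mathcal A_{m_K}}\mathcal W$, this yields the stated per-piece lower bound.

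To pass back from $F_K$ to $\tilde E'_n$, the relation $\tilde E'_{n,\eta}=F_{K,\eta}-\nabla h_K$ on $K\times\mathbb R^k$ (the density correction does not touch the charges, so the regularizing pieces cancel) gives by Cauchy--Schwarz
\begin{equation*}
\mathcal W_\eta(\tilde E'_n,K)\ge \mathcal W_\eta(F_K,K)-2\Bigl(\int_{K\times\mathbb R^k}\yg|F_{K,\eta}|^2\Bigr)^{1/2}\Bigl(\int_{K\times\mathbb R^k}\yg|\nabla h_K|^2\Bigr)^{1/2}.
\end{equation*}
The elliptic estimates already used for the upper bound (analogues of \eqref{eqestimweightk0}--\eqref{eqestimweightk1}) give $\int\yg|\nabla h_K|^2\lesssim r^{2\alpha+2}\|m_V'\|_{C^{0,\alpha}(K)}^2|K|$, while charge separation (Proposition~\ref{separation}) bounds $\int\yg|F_{K,\eta}|^2$ by $\mathcal W_\eta(F_K,K)+c_{d,s}\bar m\,g(\eta)|K|$. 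Summing over $K\in\mathcal K$ using additivity of $\mathcal W_\eta(\tilde E'_n,\cdot)$ on the disjoint $K$'s, comparing $\sum_K|K|\min_{\mathcal A_{m_K}}\mathcal W$ with $\int_{\bar K_L}\min_{\mathcal A_{m_V'(x)}}\mathcal W\,dx$ via the scaling identities \eqref{scalingW}--\eqref{scalinglog} and the Hölder estimate \eqref{boundholdercube}, and finally choosing $\lambda=\beta/(1+\alpha)$ exactly as in Proposition~\ref{propub}, produces the two terms $L^{2\beta}\|m_V'\|_{C^{0,\alpha}}^2$ and $L^\beta\|m_V'\|_{C^{0,\alpha}}(C+c_{d,s}\bar m g(\eta))^{1/2}$ appearing in \eqref{lowerbound}.

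The main obstacle is the step justifying that the periodization of $\hat F_K$ produces a lower bound against $\min_{\mathcal A_{m_K}}\mathcal W$ up to only $o_{\eta\to0}(1)+(1+g(\eta))o_{L,n\to\infty}(1)$. One must confirm that no spurious charges appear on $\partial K$, that the separated charges condition is preserved across copies (which is where the last claim in Proposition~\ref{screening} enters), and that the $\eta\to0$ defect in the passage between $\mathcal W_\eta$ and $\mathcal W$ stays within the stated error budget; this is the direct analogue of the lower-bound argument of \cite[Sec.~7]{ps} and of the corresponding step in \cite{rns} for $d=2$.
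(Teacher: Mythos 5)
Your overall architecture (partition via Lemma~\ref{subdivision} at scale $r=L^{\beta/(1+\alpha)}$, local reduction to a constant background, screening to zero Neumann data, reflection/periodization against $\min_{\mathcal A_{m_K}}\mathcal W$, and reassembly with Cauchy--Schwarz and H\"older errors producing the $L^{2\beta}$ and $L^{\beta}$ terms) is the same as the paper's; the paper merely performs the two local modifications in the opposite order (screen first on each small cube, then add $\nabla h$ to rectify the density on the already screened, zero-flux cube), which is harmless. There are, however, two concrete gaps. First, you invoke the \emph{second} alternative of Proposition~\ref{screening}, but that alternative screens \emph{inward}: it keeps the field equal to the original one only in the outer shell $K\setminus K'$, modifies it in $K'\setminus\underline K$, imposes zero flux on $\partial\underline K$, and leaves the field \emph{unspecified} inside $\underline K$ --- which is precisely why it is the right tool for the upper bound, where one plugs an arbitrary zero-flux competitor into $\underline K$. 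Periodizing such a field tells you nothing about $\mathcal W_\eta(F_K,K)$, since the energy of $F_K$ in the bulk never enters the comparison. What the lower bound needs is the \emph{first} alternative: keep $F_K$ on the good inner cube $K'$ selected by Proposition~\ref{goodbdry}, extend it outward to a slightly larger $\bar K\supset K'$ with zero flux on $\partial\bar K$ at cost $err_{sc}$, and periodize $\bar K$; then $\mathcal W_\eta(F_K,K')\ge\mathcal W_\eta(\hat F_K,\bar K)-Cr^d\,err_{sc}\ge|\bar K|\left(\min_{\mathcal A_{m_K}}\mathcal W+o_{\eta\to0}(1)\right)-Cr^d\,err_{sc}$.

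Second, nothing in your argument supplies the a priori bound $\frac{1}{|K|}\int_{K\times\R^k}\yg|\tilde E'_{n,\eta}|^2\le C_1$ on each \emph{individual} small cube, which is required both as hypothesis \eqref{bdsgbdry1} of Proposition~\ref{goodbdry} and to keep the Cauchy--Schwarz cross term of the stated size $|K|\,r^{1+\alpha}\|m_V'\|_{C^{0,\alpha}}\left(C+c_{d,s}\bar m g(\eta)\right)^{1/2}$; the global bound \eqref{boundWeta} only controls an average over $\Sigma'$, not each piece. The paper resolves this with a dichotomy: on each $K_i$ either $\mathcal W_\eta(\tilde E'_n,K_i)\ge C_m|K_i|$ with $C_m=\min_{\mathcal A_{\overline m}}\mathcal W$, in which case that piece already satisfies the desired lower bound and is discarded, or $\mathcal W_\eta(\tilde E'_n,K_i)\le C_m|K_i|$, in which case Remark~\ref{sepcariche} converts this into the needed $L^2_{\yg}$ bound \eqref{bonborneW}. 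You need to insert this reduction at the start; with it and with the first screening alternative in place of the second, the rest of your outline goes through as in the paper.
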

\begin{proof}
Recall that thanks to the minimality of $(x_1,\ldots,x_n)$, the charges are well-separated at a distance $r_0$ which depends only on $d,s$ and $\overline m$.

The proof of the lower bound uses a partitioning argument like the one employed to prove \cite[Thm. 1.11]{rns}, however we remove the bootstrap part of the argument. Let $\bar L$ the sidelength of $\bar K_L(a)$.

First let us apply Lemma~\ref{subdivision} to partition $\bar K_{L}(a)$ into smaller hypercubes ${K_i}(a_i)$ of sidelengths $\sim r = L^{\frac{\beta}{1+\alpha}}$ and such that $\int_{K_i(a_i)}m_V'\in \mathbb N$. It follows that
\begin{equation*}
\mathcal W_\eta(\tilde E'_n, {\bar K_L(a)})= \sum_i \mathcal W_\eta(\tilde E'_n, K_i(a_i)).
\end{equation*}
The goal is to bound from below this sum.
We may assume that 
\begin{equation}\label{assummaxw}
\mathcal W_\eta(\tilde E'_n, K_i(a_i)) \le   \min_{\mathcal A_{\overline m}} \mathcal W|K_i|:=C_m|K_i|, 
\end{equation}
for otherwise, we have a lower bound $\mathcal W_\eta(\tilde E'_n, K_i(a_i))\ge C_m|K_i|$ which will suffice.

Let $K_r(b)$ be one of the above squares $K_i(a_i)$. As a consequence of the separation of charges of $\tilde E_n'$ and of \eqref{assummaxw}, by Remark \ref{sepcariche} it follows that 
\begin{equation}\label{bonborneW}
\int_{K_r(b)\times \R^k}\yg |E'_{n,\eta}|^2\le (C_m+g(\eta)C_{d,s,\overline m}) |K_r|\ ,
\end{equation}
at least for $L,r$ large enough. Therefore the hypotheses of Proposition~\ref{goodbdry} are satisfied in $K_r(b)$ and we may apply it at scale $r$ and find a good boundary cube $K_r'(b)$ which is at distance $l$ from $\partial K_r(b)$. 
By applying Proposition~\ref{screening} to the obtained good boundary we show that there exist $ \bar K_r(b)\supset K'_r(b)$, a subset $\tilde \Lambda \subset \bar K_{L}(b)$ and a vector field $\bar E'_n$ such that 
\[
\left\{
\begin{aligned}
&-\op{div}(\yg \bar E'_n)=c_{d,s}\left(\sum_{p\in\tilde \Lambda}\delta_p - m'_V(x)\delta_{\mathbb R^d}\right) & \text{ in }\bar K_{r}(b)\times\R^k\\
&\bar E'_n\cdot\vec{\nu}=0  & \text{ on }\partial\bar K_{r}(b)\times\R^k\\
&\bar E'_n\cdot\vec{\nu}=\tilde E'_n\cdot \vec{\nu}  & \text{ on }\partial K'_{r}(b)\times [-t,t]^k\\
&\bar E'_n=\tilde E'_n , \ \tilde \Lambda=\Lambda & \text{ in } K'_{r}(b)\times [-t,t]^k\\
\end{aligned}
\right.
\]
and
\begin{align}\label{compenergy0}
 \mathcal W_\eta (\tilde E'_n, K'_r(b))\ge&\, \mathcal W_\eta(\bar E'_n, \bar K_r(b))- C r^d err_{sc}(\eta,r,t,l,1,C_1,C_2).
 \end{align}
 With parameter choices like in Section~\ref{choiceparsec} with $r,t,l,1$ in the place of $L,t,l,\ve_1$ and with $C_1$ equal to the constant $C_m + g(\eta)C_{s,d,\overline m}$ from \eqref{bonborneW}, we obtain that the $err_{sc}$ error above is controlled as $(1+g(\eta))o_{r,n\to\infty}(1)$.
By using the separation of charges as in Remark \ref{sepcariche} we find that
\begin{equation*}
\mathcal W_\eta(\tilde E_n', K_r'(b))\le \mathcal W_\eta(\tilde E_n', K_r(b)) + C g(\eta) r^{d-1}l =  \mathcal W_\eta(\tilde E_n', K_r(b)) + |K_r|g(\eta)o_{r,n\to\infty}(1).
\end{equation*}
After absorbing the above error and summing over all $K_i(a_i)$, we obtain
\begin{align}\label{compenergyebar}
\mathcal W_\eta(\tilde E'_n, \bar K_L(a))\ge&\, \sum_i \min(C_m, \mathcal W_\eta(\bar E_n',\bar K_i(a_i))) - (1+g(\eta))(1+C_1)o_{r,n\to\infty}(1)|\bar K_L(a)|.
\end{align}
Now we correct the background measure of $\bar E_n'$ on each cube $K_i(a_i)$. To do this we proceed as in the proof of Proposition~\ref{propub}. We put $\tilde m= \fint_{K_r(b)}m'_V(x)\,dx$ and we define $h$ to be the solution of 
\begin{equation*}
\left\{
\begin{aligned}
	&-\op{div} (\yg \nabla h) = c_{s,d} (m'_V(x)-\tilde m)\delta_{\mathbb R^d}\quad & \text{ in}\ \bar K_r(b)\times [-r, r]^k\\
	&\partial_\nu h=0\quad & \text{ on}\ \partial \bar K_r(b)\times [-r, r]^k\\
	&\partial_\nu h=0\quad & \text{ on the rest of }\ \partial (\bar K_r(b)\times [-r, r]^k)\\
\end{aligned}
\right.
\end{equation*}
and we put $h=0$ in $\bar K_r(b)\times (\R^k\smallsetminus [-r, r]^k)$. 
Then define $\bar E$ to be $\bar E'_n + \nabla h$ in $\bar K_r(b)$. Hence
\begin{align}
\frac{\mathcal W_\eta(\bar E,\bar K_r(b))}{|\bar K_r|} \le &\, \frac{\mathcal W_\eta(\bar E'_n, \bar K_r(b))}{|\bar K_r|}+Cr^{2+2\alpha}\|m'_V\|^2_{C^{0,\alpha}(\bar K_r(b))}\nonumber\\
&+Cr^{1+\alpha}\|m'_V\|_{C^{0,\alpha}(\bar K_r(b))}\left(C+c_{d,s}\tilde m g(\eta)\right)^{1/2}.\label{compenergy1}
\end{align}
Next, we use the fact that 
$$
\frac{\mathcal W_\eta(\bar E,\bar K_r(b))}{|\bar K_r|}\ge\min_{\mathcal{A}_{\tilde m}} \mathcal W+o_{\eta\to 0}(1).
$$
Indeed, define $\tilde K_{r}(b)$ to be the hypercube obtained by taking all the iterative reflections in the $d$ directions across faces of $\bar K_{r}(b)$ (the sidelengths of $\tilde K_r(b)$ are thus $2$ twice the sidelengths of $\bar K_{r}(b)$). Since we have zero Neumann boundary condition, we can extend $\bar E$ on $\tilde K_{r}(b)\times \R^k$ by reflection across the boundary of $\tilde K_{t}(b)$. Then we periodize $\bar E$ to have a vector-field $E$  defined on $\R^d\times\R^k$. Since all the vector fields and measures are periodic, 
$$
 \min_{\mathcal{A}_{\tilde m}} \mathcal W_\eta\le \mathcal W_\eta(E)
 =\frac{2^d\mathcal W_\eta(\bar E,\bar K_r(b))}{2^d|\bar K_r|}=  \frac{\mathcal W_\eta(\bar E,\bar K_r(b))}{|\bar K_r|}.
 $$
Similarly to the process of obtaining \eqref{boundholdercube}, with the notations $M_{s,d}(m_V')$ like in \eqref{msdrho}, we have
\[
\left|\int_{\bar K_r(b)} m_V'(x)^{1+s/d}dx - |\bar K_r|\tilde m^{1+s/d}\right|\le C|\bar K_r|r^{\alpha}\|m_V'\|_{C^{0,\alpha}(K_r(b))}M_{s,d}(m_V')
\] 
and thus we absorb this into the above $r^{1+\alpha}$-term and we obtain
\begin{align*}
\mathcal W_\eta(\bar E,\bar K_r(b))\ge&\,\int_{\bar K_{r}(b)}\min_{\mathcal{A}_{m_V'(x)}} \mathcal W \,dx-|\bar K_{r}|\left(1+g(\eta)\right)(1+C_1)o_{r,n\to \infty}(1) - |\bar K_{r}|o_{\eta\to 0}(1) \nonumber\\
&-C|\bar K_{r}|r^{2+2\alpha}\|m'_V\|^2_{C^{0,\alpha}(K_{r})}-C|\bar K_{r}|r^{1+\alpha}\|m'_V\|_{C^{0,\alpha}(\bar K_{r})}\left(C+c_{d,s}\bar m g(\eta)\right)^{1/2}.
\end{align*}

Hence, summing over all cubes and using our choice of $C_m$ in \eqref{assummaxw}, using the fact that $\bar K_L(a)$ has larger measure than the union of the $K_i(a_i)$ as well as the direct bounds on $|\bar K_r|/|K_r|$ and on the integral of $m'_V$ over the difference of these cubes, we are led to 
 \begin{align}\label{lowerboundbis}
\mathcal W_\eta(\tilde E'_n, {\bar K_L(a)})\ge&\,\int_{\bar K_L(a)}\min_{\mathcal{A}_{m_V'(x)}} \mathcal W \,dx - O(l/r)|\bar K_L|  -|\bar K_L|\left(1+g(\eta)\right)o_{r,n\to \infty}(1)  \nonumber\\
&-(1+O(l/r))|\bar K_L|o_{\eta\to 0}(1)-C|\bar K_L|r^{2+2\alpha}\|m'_V\|^2_{C^{0,\alpha}(\bar K_L(a))}\nonumber\\
&-C|\bar K_L|r^{1+\alpha}\|m_V'\|_{C^{0,\alpha}(\bar K_L(a))}\left(C+c_{d,s}\bar m g(\eta)\right)^{1/2}. 
\end{align}
Note that under the choice of parameters according to Section~\ref{choiceparsec} we have $O(l/r)=o_{r,n\to\infty}(1)$. This proves \eqref{lowerbound} and thus concludes the proof.
\end{proof}

\subsubsection{End of proof of Proposition~\ref{compgbdry}}

We recall that due to \eqref{ubmaincube} and \eqref{lbmaincube}, under the choice of parameters as in Section~\ref{choiceparsec}, we have
\begin{align*}
\frac{\mathcal W_\eta(\tilde E'_n, \bar K_L(a))}{|\bar K_L(a)|}-err\le 
\frac{\mathcal W_\eta(E'_n,K'_L(a))}{|K'_L(a)|}
\le \sigma_{0}(\underline K_L(a);m'_V)+err+o_{\eta\to 0}(1),
\end{align*}
for $err:=\left(1+g(\eta)\right)(1+C_1)o_{L,n\to \infty}(1)$. Next we use Proposition~\ref{propub} in which we choose $\rho=m_V', K_R(a)=\underline K_L(a)$ and $\beta\in ]0,1[$, and Proposition~\ref{proplb}.

The final result after including the extra errors from \eqref{upperbound}, \eqref{lowerbound} and using \eqref{petitesvar} is that on the cube $K_L'(a)$ which has sidelenghts $\sim L_i\in [\ve_1^{1/d}T_i,T_i]$ we have 
\[
\frac{1}{|\bar K_L(a)|}\int_{\bar K_L(a)} \min_{\mathcal A_{m_V'(x)} }\mathcal W dx -err_1 \le \frac{\mathcal W_\eta(E'_n,K'_L(a))}{|K'_L(a)|}\le\frac{1}{|\underline K_L(a)|}\int_{\underline K_L(a)} \min_{\mathcal A_{m_V'(x)} }\mathcal W dx +err_2, 
\]
where
\begin{eqnarray*}
err_1&:=& \left(1+g(\eta)\right)(1+C_1)o_{L,n\to \infty}(1) + o_{\eta\to 0}(1) + o_{n\to\infty}(1)\left(1+g(\eta)+o_{L\to \infty}(1)\right)^{1/2}\\
err_2&:=&\left(1+g(\eta)\right)(1+C_1)o_{L,n\to\infty}(1) + \left(1+o_{L,n\to\infty}(1)\right) o_{\eta\to 0}(1) + o_{n\to\infty}(1)\left(1+g(\eta)\right)^{1/2}.
\end{eqnarray*}

Using the facts that $m_V'$ is H\"older continuous and that in Proposition~\ref{screening} we obtained $\bar K_L,\underline K_L$ are $t$-close to $K_L$, we further find up to requiring $L$ to be larger,
\begin{equation}\label{basicstepbound}
\frac{1}{|K_L(a)|}\int_{K_L(a)} \min_{\mathcal A_{m_V'(x)} }\mathcal W dx - err_1 \le \frac{\mathcal W_\eta(E'_n,K'_L(a))}{|K'_L(a)|}\le\frac{1}{| K_L(a)|}\int_{K_L(a)} \min_{\mathcal A_{m_V'(x)} }\mathcal W dx +err_2.
\end{equation}

\section{Proof of Theorem~\ref{secondmainthm}}
\subsection{Tools for estimates on crenel boundaries}
By refining the proofs of \cite[Lem. 2.3, Prop. 2.4]{ps}, we obtain an estimate generalizing the discrepancy bound of \cite[Prop. 3.1]{rns}, which relied on the ball construction \cite{ssot}, \cite{jerrard}. It seems to be a difficult open question whether a precise analogue of the ball construction can be done in dimensions $d>2$ for the not conformal energy considered here, and in particular in cases where a suited notion of degree like in \cite{jerrard} is missing. We recall that here, compared to \cite{ps}, we are in the situation of multiplicities equal to one for $p\in\Lambda$, as considered also in \cite{rns} and \cite{ssot}. This condition is a consequence of minimality as proved in \cite[Thm. 5]{ps} (Proposition~\ref{separation} here).
\begin{proposition}\label{prodecr}
Let $\Lambda, E$ be as in \eqref{eqe}.  Consider a compactly supported cutoff function $\chi_A\in C^1(\mathbb R^{d+k},[0,1])$ with $A:=\op{spt}\chi_A$. Let $0<\alpha<\eta<1$ and assume that $\mathcal W_\eta(\chi_A, E)<\infty$. For a set $S\subset \mathbb R^d$ write $S_r:=\{x:\op{dist}(x,S)\le r\}$. Then we may write
\begin{equation}
\label{subdividechangeofw}
\w(\chi_A, E) - \mathcal W_\eta(\chi_A,E) = I + II + III, 
\end{equation}
where for constants $c,C$ depending only on $s,d$ there holds
\begin{equation}
\label{i}
c\left(\min_xm(x)\right)\#\left(\Lambda \cap A_\eta\right) \le \frac{I}{\min(\eta^{d-s},\eta^d|\log\eta|)} \le C\left(\max_xm(x)\right) \#\left(\Lambda \cap A_\eta\right).\end{equation}
If for a set $S\subset \mathbb R^d$ we denote $X_r(S):=\{(p,q):\ p\neq q,\ p,q\in\Lambda\cap S,\ |p-q|<r\}$ then 
\begin{equation}
\label{ii}
0\le II \le c_{s,d}(g(\alpha) - g(\eta))\# X_{2\eta}(A_\eta).
\end{equation}
Finally, for a constant $C$ depending only on $d,s$, 
\begin{equation}
\label{iii2}
|III|\le C\#\left(\Lambda\cap \left(\op{spt}(\nabla\chi_A)\right)_\eta\right)\eta^dg^2(\alpha)\left(\|\nab_x\chi_A(0,x)\|_{L^1(\mathbb R^d)} + \int_{\mathbb R^{d+k}}\yg|\nab \chi_A||E_\alpha| \right).\end{equation}
\end{proposition}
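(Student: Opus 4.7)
My starting point would be to compare $\mathcal W_\alpha$ and $\mathcal W_\eta$ by a controlled integration by parts. Set $g_{\eta,\alpha}:=f_\eta-f_\alpha$; this function is nonnegative (since $f_\alpha\le f_\eta$), is supported in $B(0,\eta)$, is constant equal to $g(\alpha)-g(\eta)$ on $B(0,\alpha)$, and is bounded by $g(\alpha)-g(\eta)$ throughout. Let $\phi:=\sum_{p\in\Lambda}g_{\eta,\alpha}(X-p)$, so that $F:=\nabla\phi=E_\alpha-E_\eta$ satisfies $-\op{div}(\yg F)=c_{d,s}\bigl(\sum_p\delta_p^{(\alpha)}-\sum_p\delta_p^{(\eta)}\bigr)$. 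Expanding $|E_\alpha|^2-|E_\eta|^2=|F|^2+2E_\eta\cdot F$ and subtracting the correction term in the definition of $\mathcal W_\eta$ from that of $\mathcal W_\alpha$ yields the identity
\[
\mathcal W_\alpha(\chi_A,E)-\mathcal W_\eta(\chi_A,E)=\int\chi_A\yg|F|^2+2\int\chi_A\yg E_\eta\cdot F-c_{d,s}\int\chi_A\sum_p\bigl(g(\alpha)\delta_p^{(\alpha)}-g(\eta)\delta_p^{(\eta)}\bigr).
\]

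Next I would integrate by parts both integrals on the right, which is legitimate since $\chi_A$ has compact support and $\phi$ is supported on the union of the balls $B(p,\eta)$. Using $-\op{div}(\yg F)=c_{d,s}(\nu^{(\alpha)}-\nu^{(\eta)})$ for the $|F|^2$ term and $-\op{div}(\yg E_\eta)=c_{d,s}(\nu^{(\eta)}-m\,dx)$ for the cross term, the result is an algebraic sum of: (a) a background term $-2c_{d,s}\int\phi\chi_A m\,dx$; (b) self-interactions of $\phi$ against the smeared Diracs, which combine with the correction term and cancel exactly at every point $p$ on whose ball $B(p,\eta)$ the cutoff $\chi_A$ is constant, leaving only residuals supported near $\op{spt}(\nabla\chi_A)$; (c) cross-interactions at pairs $(p,q)$ with $|p-q|<2\eta$, since $g_{\eta,\alpha}(\cdot-p)$ and $\delta_q^{(\alpha)},\delta_q^{(\eta)}$ have intersecting supports only in that regime; and (d) boundary pieces $-\int\phi\yg F\cdot\nabla\chi_A$ and $-2\int\phi\yg E_\eta\cdot\nabla\chi_A$. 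I would group (a) into $I$, (c) into $II$, and the residuals of (b) together with (d) into $III$.

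I would then bound each group. For $I$: an explicit spherical integration gives $\int_{\mathbb R^{d+k}}g_{\eta,\alpha}(X)\,dX=O(\min(\eta^{d-s},\eta^d|\log\eta|))$, the $|\log\eta|$ appearing only in the marginal case and vanishing otherwise; combined with the two-sided bound $\underline m\le m\le\overline m$ and the fact that $g_{\eta,\alpha}(\cdot-p)\chi_A$ is nonzero only when $B(p,\eta)\cap A\neq\emptyset$ (i.e.\ $p\in\Lambda\cap A_\eta$), this gives \eqref{i}, with the sign fixed by the convention $I=2c_{d,s}\int\phi\chi_A m\,dx$. For $II$: each close-pair contribution is bounded by Cauchy--Schwartz through the single-charge estimate $\int\yg|\nabla g_{\eta,\alpha}(\cdot-p)|^2=c_{d,s}(g(\alpha)-g(\eta))$, producing the upper bound in \eqref{ii}; the nonnegativity of $II$ is recovered by observing that after regrouping the boundary-type and self-interaction contributions, the cross-pair piece reduces to a pairing of the positive jump quantities $\int\phi\,\delta_p^{(\alpha)}$ and $\int\phi\,\delta_p^{(\eta)}$ against themselves. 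For $III$: on each ball $B(p,\eta)$ meeting $\op{spt}(\nabla\chi_A)$, we have the pointwise bound $\|\phi\|_\infty\le g(\alpha)$ and $|B(p,\eta)|=O(\eta^d)$; bounding $-2\int\phi\yg E_\eta\cdot\nabla\chi_A$ by Cauchy--Schwartz against $\int\yg|\nabla\chi_A||E_\alpha|$ (after substituting $E_\eta=E_\alpha-F$ and absorbing the $F$-part into a similar treatment of the $F$-boundary term), and the residual $c_{d,s}g(\eta)\sum_p\int\chi_A(\delta_p^{(\eta)}-\delta_p^{(\alpha)})$ by direct $L^1_x$ estimation of the Dirac pieces against $\|\nabla_x\chi_A(0,\cdot)\|_{L^1}$, produces \eqref{iii2}, with the sum restricted to $p\in\Lambda\cap(\op{spt}(\nabla\chi_A))_\eta$.

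The main technical obstacle I anticipate is the bookkeeping in the integration-by-parts step: the self-interaction parts from $|F|^2$ and from the correction term both scale like $g(\alpha)$ and $g(\eta)$, and it is only after careful regrouping that they cancel up to pieces of the expected small size $g(\alpha)-g(\eta)$ or pieces localized near the boundary $\op{spt}(\nabla\chi_A)$. A second delicate point is the positivity claim $II\ge 0$, since the naive cross-pair expression $e_p\cdot e_q$ is sign-indefinite and only becomes manifestly nonnegative after it is combined with the cross-contributions coming from the other two integration-by-parts steps.
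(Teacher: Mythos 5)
Your proposal is correct and follows essentially the same route as the paper: you introduce the difference of truncations $f_\eta-f_\alpha$ (the paper uses $f_{\alpha,\eta}=f_\alpha-f_\eta$, the opposite sign), expand $|E_\alpha|^2-|E_\eta|^2$, integrate by parts against the equations for the smeared charges, and group the result into the same background term $I$, close-pair term $II$, and boundary/self-interaction residual $III$, with the same $L^1$, support, and Cauchy--Schwarz estimates. The only cosmetic difference is that in your sign convention the nonnegativity of $II$ is immediate from positivity of the integrand (as it is in the paper's grouped form), so the ``regrouping of jump quantities'' you anticipate as delicate is not actually needed.
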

\begin{proof}
Define 
\[
f_{\alpha,\eta}(X):= f_\alpha(X)- f_\eta(X),\quad X\in\mathbb R^{d+k}.
\]
From the definition \eqref{feta} of $f_\eta$ we find that 
\begin{equation}\label{fae}
f_{\alpha,\eta}(X)=\left\{\begin{array}{ll}
g(\eta)-g(\alpha)&\text{ for }|X|\le\alpha,\\
g(\eta)-g(X)&\text{ for }\alpha\le|X|\le \eta,\\
0&\text{ for }|X|\ge \eta,
\end{array}\right.
\end{equation}
and as a consequence of the equation \eqref{divf} that $f_\eta$ satisfies 
\begin{equation}\label{fae2}
-\op{div}(\yg \nabla f_{\alpha,\eta})= c_{d,s}\left(\delta_0^{(\eta)} - \delta_0^{(\alpha)}\right).
\end{equation}
By the definition \eqref{defeeta} of $E_\eta$ we find 
\[
E_\eta=E_\alpha+\sum_{p\in\Lambda} \nabla f_{\alpha,\eta}(X-p).
\]
Thus we may write, using the definition \eqref{Weta} of $\mathcal W_\eta$, 
\begin{eqnarray}
\mathcal W_\alpha(\chi_A,E) - \mathcal W_\eta(\chi_A,E) &=& \int_{\mathbb R^{d+k}}\yg\chi_A\left(|E_\alpha|^2 - \left|E_\alpha + \sum_{p\in\Lambda} \nabla f_{\alpha,\eta}(X-p)\right|^2\right) \nonumber\\
&&- c_{d,s}\int_{\mathbb R^{d+k}}\chi_A\sum_{p\in\Lambda}\left(g(\alpha) \delta_p^{(\alpha)} - g(\eta) \delta_p^{(\eta)}\right).\label{wetadiff}
\end{eqnarray}
The first term on the right in \eqref{wetadiff} can be expanded to give, after using an integration by parts and \eqref{fae2}, \eqref{eqclam}, the following sum:
\begin{multline}\label{rewrite1}
-c_{d,s}\sum_{p,q\in\Lambda} \int_{\mathbb R^{d+k}}\chi_A f_{\alpha,\eta}(X-p)\left(\delta_{q}^{(\eta)} - \delta_{q}^{(\alpha)}\right) + \sum_{p,q\in\Lambda} \int_{\mathbb R^{d+k}}\yg f_{\alpha,\eta}(X-p) \nabla \chi_A \cdot \nabla f_{\alpha,\eta}(X-q)\\
-2c_{d,s}\sum_{p\in\Lambda}\int_{\mathbb R^{d+k}}f_{\alpha,\eta}(X-p)\left(\sum_{q\in\Lambda}\delta_q^{(\alpha)} - m\delta_{\mathbb R^d}\right)\chi_A +2 \sum_{p\in\Lambda} \int_{\mathbb R^{d+k}} f_{\alpha,\eta}(X-p)\yg\nabla \chi_A \cdot E_\alpha.
\end{multline}
We define the following terms which summed together give \eqref{rewrite1}:
\begin{align}
I:=& 2c_{d,s}\sum_{p\in\Lambda}\int_{\mathbb R^d}f_{\alpha,\eta}(x-p)m(x)\chi_A, \label{termi}\\
II:=& -c_{d,s}\sum_{p\neq q\in\Lambda} \int_{\mathbb R^{d+k}}\chi_A f_{\alpha,\eta}(X-p)\left(\delta_{q}^{(\eta)} + \delta_{q}^{(\alpha)}\right), \label{termii}\\
III':=&-c_{d,s}\sum_{p\in\Lambda} \int_{\mathbb  R^{d+k}}\chi_A (g(\eta) - g(\alpha))\delta_p^{(\alpha)},\label{termiii'}\\
III'':=&\sum_{p,q\in\Lambda} \int_{\mathbb R^{d+k}}\yg f_{\alpha,\eta}(X-p) \nabla \chi_A \cdot \nabla f_{\alpha,\eta}(X-q) + 2 \sum_{p\in\Lambda} \int_{\mathbb R^{d+k}} f_{\alpha,\eta}(X-p)\yg\nabla \chi_A \cdot E_\alpha\label{termiii''},
\end{align}
where in the expression $III'$ we used the properties of $f_{\alpha,\eta}(X-p)$ which equals $g(\eta)-g(\alpha)$ on the support of $\delta_p^{(\alpha)}$ and is zero on the support of $\delta_p^{(\eta)}$. 
\par The term $I$ is bounded as in \eqref{i} by noticing that only the points $p\in \Lambda\cap A_\eta$ are such that $\op{spt}f_{\alpha,\eta}\cap A\neq \emptyset$ and that $\|f_{\alpha,\eta}\|_{L^1}\le C_{d,s} \min(\eta^{d-s},\eta^d|\log\eta|)$.
\par Regarding the term $II$ we note that again, each of the terms
\begin{equation}\label{termpqinii}
\int_{\mathbb R^{d+k}}f_{\alpha,\eta}(X-p)\left(\delta_q^{(\eta)}+\delta_q^{(\alpha)}\right)\chi_A
\end{equation}
corresponding to a choice $p,q\in\Lambda$ vanishes in case $(p,q)\notin X_{2\eta}(A_\eta)$ due to the support properties of $f_{\alpha,\eta}(X-p)$ and $ \delta^{(\eta)}_q$, and it has values in the interval $[g(\eta)-g(\alpha), 0]$ due to \eqref{fae} and to the fact that $\delta_q^{(\eta)},\delta_q^{(\alpha)}$ are probability measures. This gives the bounds in \eqref{ii}.
\par The term $III'$ summed to the second line in \eqref{wetadiff} give the contribution
\[
-c_{d,s}\sum_{p\in\Lambda}\int_{\mathbb  R^{d+k}}\chi_A g(\eta)\left(\delta_p^{(\alpha)} - \delta_p^{(\eta)}\right)=c_{d,s}g(\eta)\sum_{p\in\Lambda}\left(\frac{1}{|\partial B_\eta(p)|}\int_{\partial B_\eta(p)}\chi_A - \frac{1}{|\partial B_\alpha(p)|}\int_{\partial B_\alpha(p)}\chi_A\right)
\]
which can be bounded via the first term in \eqref{iii2}. The term $III''$ is also similarly bounded by \eqref{iii2}. We now define $III$ as the sum of $III', III''$ and of the second line of \eqref{wetadiff}, and we then find the bound \eqref{iii2}, concluding the proof.
\end{proof}
From the Proposition~\ref{prodecr}, by approximating the characteristic function $1_{A\times\R^k}$ of a measurable set by $C^1$-functions $\chi_A$ which appropriately avoid the set $\Lambda$, we find the following result:
\begin{corollary}\label{chircor}
\begin{itemize}
\item If $A\subset \mathbb R^d$ is a bounded Borel set such that $\op{dist}(\partial A, \Lambda)\ge \epsilon>0$ and we chose $\eta,\alpha<\epsilon$ then we may find a decomposition $\mathcal W_\alpha(A, E)-\mathcal W_\eta(A, E)= I + II$ satisfying \eqref{i}, \eqref{ii}. 
\item If in the setting of Proposition~\ref{prodecr} we further assume that $|p-q|\ge 2\eta $ for all pairs of points $p\neq q\in \Lambda\cap A$ then we find $\mathcal W_\alpha(A, E)-\mathcal W_\eta(A, E)= I$ satisfying \eqref{i}.
\end{itemize} 
\end{corollary}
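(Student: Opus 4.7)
The plan is to apply Proposition~\ref{prodecr} to a sequence of $C^1$ cutoff functions approximating $1_{A\times\R^k}$ and then to show that the $III$ term vanishes identically for this sequence, while the limit passage is controlled by Remark~\ref{smoothnonsmooth}. Concretely, I would use the mollified characteristic functions $\chi_{A,\epsilon}$ from Remark~\ref{smoothnonsmooth}, with mollification parameter $\epsilon$ and height cutoff $R_\epsilon\to\infty$. The decisive feature is that $\op{spt}\nabla\chi_{A,\epsilon}$ is contained in the union of an $\epsilon$-neighborhood of $\partial A\times\R^k$ (from the horizontal mollifier) and $A_\epsilon\times(\R^k\setminus[-R_\epsilon+\epsilon,R_\epsilon-\epsilon]^k)$ (from the vertical one). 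Under the hypothesis $\op{dist}(\partial A,\Lambda)\ge\epsilon_0$ and the assumption $\eta<\epsilon_0$, any fixed $\epsilon<\epsilon_0-\eta$ gives that the $\eta$-thickening of the first piece is disjoint from $\Lambda\times\{0\}$, and the second piece is disjoint from $\Lambda\times\{0\}$ as soon as $R_\epsilon\ge \eta+1$.

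Consequently $\#(\Lambda\cap(\op{spt}\nabla\chi_{A,\epsilon})_\eta)=0$ for small $\epsilon$, and the bound \eqref{iii2} forces the term $III$ associated to $\chi_{A,\epsilon}$ to vanish identically. Proposition~\ref{prodecr} applied to $\chi_{A,\epsilon}$ then yields
$$\mathcal W_\alpha(\chi_{A,\epsilon},E)-\mathcal W_\eta(\chi_{A,\epsilon},E)=I(\chi_{A,\epsilon})+II(\chi_{A,\epsilon}),$$
where $I(\chi_{A,\epsilon})$, $II(\chi_{A,\epsilon})$ are the explicit integrals \eqref{termi}, \eqref{termii}. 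As $\epsilon\to 0$ the left-hand side converges to $\mathcal W_\alpha(A,E)-\mathcal W_\eta(A,E)$ by Remark~\ref{smoothnonsmooth}. For the right-hand side, since $A$ is bounded the sums over $p\in\Lambda$ in \eqref{termi}--\eqref{termii} contain only finitely many nonzero terms (those with $p$ in the compact set $A_\eta$, which is uniformly enlarged by at most $\epsilon$), and the integrands are bounded by fixed $L^1$ functions independent of $\epsilon$; dominated convergence then gives $I(\chi_{A,\epsilon})\to I$ and $II(\chi_{A,\epsilon})\to II$ where $I$, $II$ are the corresponding integrals against $1_{A\times\R^k}$. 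The bounds \eqref{i} and \eqref{ii} (which only involved counting points in $\Lambda\cap A_\eta$ and pairs in $X_{2\eta}(A_\eta)$) pass directly to the limit, finishing the first item.

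For the second item, I first observe that under the same hypotheses any point $p\in\Lambda\cap A_\eta\setminus A$ would have $\op{dist}(p,A)\le\eta$ while being outside $A$, hence $\op{dist}(p,\partial A)\le\eta<\epsilon_0$, contradicting $\op{dist}(\partial A,\Lambda)\ge\epsilon_0$. Therefore $\Lambda\cap A_\eta\subset\Lambda\cap A$, and the additional assumption $|p-q|\ge 2\eta$ for $p\neq q\in\Lambda\cap A$ renders the set $X_{2\eta}(A_\eta)$ empty. The bound \eqref{ii} then reduces $II$ to zero and only $I$ survives.

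I expect the main obstacle to be purely technical: making sure that the $y$-cutoff in $\chi_{A,\epsilon}$ does not spoil the vanishing of $III$ and that the dominating function used in the limit passage of $I(\chi_{A,\epsilon})$ and $II(\chi_{A,\epsilon})$ is genuinely integrable. Both points are handled by choosing $R_\epsilon$ growing slowly and exploiting the compactness of $A_\eta$ together with the $L^1$ bound $\|f_{\alpha,\eta}\|_{L^1}\le C_{d,s}\min(\eta^{d-s},\eta^d|\log\eta|)$ already used in the proof of \eqref{i}.
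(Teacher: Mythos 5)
Your proposal is correct and follows exactly the route the paper indicates (the paper gives only the one-line instruction to approximate $1_{A\times\R^k}$ by $C^1$ cutoffs ``which appropriately avoid the set $\Lambda$''): mollifying so that $\op{spt}\nabla\chi_{A,\epsilon}$ stays $\eta$-away from $\Lambda$ kills $III$ via \eqref{iii2}, Remark~\ref{smoothnonsmooth} handles the left-hand side, and the observation $\Lambda\cap A_\eta=\Lambda\cap A$ plus the separation hypothesis empties $X_{2\eta}(A_\eta)$ for the second item. Your write-up in fact supplies more detail than the paper does, and the details check out.
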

\subsection{Construction of crenel boundaries}
The main idea of the proof of Theorem~\ref{secondmainthm} is that if we perturb the boundary of $K_\ell(a)$ such that it avoids the charges then the sharp bounds of Proposition~\ref{prodecr} allow to take the $\eta\to0$ limit in Theorem~\ref{mainthm} without uncontrolled error terms. For the perturbation of $\partial K_\ell(a)$ we will need the following tool:
\begin{proposition}[crenel boundaries]\label{creneau}
Let $m_V,m_V', \Sigma,\Sigma', E_n'$ be as in Theorem~\ref{secondmainthm} and $K_\ell(a)\subset \Sigma'$. Let moreover $r_0$ be the minimum point separation of the charges corresponding to $E_n'$, bounded in Proposition \ref{separation}. Then the following hold.
\begin{enumerate}
\item There exists a set $\Gamma$ which can be expressed as the image of a bi-Lipschitz deformation $f:K_\ell(a)\to \Gamma$ such that $\|f - id\|_{L^\infty}\le 1$, and for which 
\begin{equation}\label{creneauevita}
\min_{p\in\Lambda\cap \Gamma}\op{dist}(p,\partial \Gamma)\ge \frac{r_0}{8}.
\end{equation}
\item There exists a constant $C$ depending only on the dimension such that if
\begin{equation}\label{conditioneta}
r_1<Cr_0^dL^{-d+1}
\end{equation}
then there exists a universal constant $C$ and a cube $K_\ell'(a)$ with $\op{dist}(\partial K_\ell'(a),\partial K_\ell(a))<1$ and such that moreover 
\begin{equation}\label{klevita}
\min_{p\in\Lambda\cap K_\ell'}\op{dist}(p,\partial K'_\ell)\ge r_1.
\end{equation}
\end{enumerate}
\end{proposition}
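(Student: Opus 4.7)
The plan splits by part, with Part (1) requiring more care. The idea for Part (1) is to deform each face of $K_\ell(a)$ locally by disjoint bumps placed around the near-boundary charges. Fix a face $F\subset\partial K_\ell(a)$ with outward unit normal $n$ and let $\Lambda_F:=\{p\in\Lambda:\op{dist}(p,F)\le r_0/4\}$. For any two distinct $p,q\in\Lambda_F$, the estimate $|(p-q)\cdot n|\le r_0/2$ combined with $|p-q|\ge r_0$ and Pythagoras yield $|P_F(p)-P_F(q)|\ge \sqrt{3}\,r_0/2$, where $P_F$ denotes orthogonal projection onto the hyperplane containing $F$. Hence pairwise disjoint open disks $D_p\subset F$ of radius $r_0/4$ may be placed around each projection, and on each disk I would choose a Lipschitz bump $\psi_p$ with $\psi_p(P_F(p))=1$ and $\|\nabla\psi_p\|_\infty\le C/r_0$.

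Let $\phi$ be a Lipschitz cutoff of the normal distance to $F$, equal to $1$ on $F$ and $0$ at distance $r_0/2$. On the inward slab of half-thickness $r_0/2$ define
\[
f(x):=x-\epsilon_p\,\psi_p(P_F(x))\,\phi(\op{dist}(x,F))\,(\op{dist}(p,F)+r_0/8)\,n,
\]
with sign $\epsilon_p\in\{\pm 1\}$ chosen so that $p$ ends up on the side of the new boundary opposite to its original side; elsewhere in $K_\ell(a)$ set $f=\mathrm{id}$. Doing this on each face yields a map whose displacement is bounded by $r_0/4+r_0/8=3r_0/8<1$ (using $r_0\le 1/2$ as in Proposition~\ref{separation}), with gradient perturbation uniformly $O(1)$; bumps on different faces have disjoint supports provided $\ell>r_0$, so $f$ is globally bi-Lipschitz. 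By construction each $p\in\Lambda_F$ is at distance exactly $r_0/8$ from $\partial\Gamma$. For a charge $q\notin\Lambda_F$ whose projection lands inside some bump disk, a second Pythagorean computation using $|p-q|\ge r_0$ and $|P_F(p)-P_F(q)|<r_0/4$ gives $\op{dist}(q,F)\ge(\sqrt{15}-1)\,r_0/4$, hence $\op{dist}(q,\partial\Gamma)\ge (\sqrt{15}-1)\,r_0/4-3r_0/8>r_0/8$; charges whose projections avoid every bump are untouched. This establishes \eqref{creneauevita}.

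For Part (2) the plan is a short measure-theoretic pigeonhole on translations. For $v\in[-\tfrac12,\tfrac12]^d$ put $K_\ell'(a):=K_\ell(a+v)$, giving Hausdorff distance $|v|_\infty<1$ between boundaries. A charge $p$ violates \eqref{klevita} for this translate exactly when $v\in p-a-(\partial K_\ell(0))^{(r_1)}$; restricted to the translation cube, this bad set is contained in a union of $2d$ slabs of thickness $2r_1$ transverse to the coordinate axes, and thus has Lebesgue measure $\le C(d)\,r_1$. Only charges within distance $2$ of $\partial K_\ell(a)$ can contribute, and the separation $|p-q|\ge r_0$ from Proposition~\ref{separation} gives a packing bound of $\le C\,\ell^{d-1}/r_0^d$ such charges. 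Summing, the total bad measure is $\le C'\,r_1\,\ell^{d-1}/r_0^d$, strictly less than $|[-\tfrac12,\tfrac12]^d|=1$ under \eqref{conditioneta} with $C$ small enough; any admissible $v$ yields the desired cube. The main obstacle is securing a uniform bi-Lipschitz constant for $f$ in Part (1) despite arbitrary placements of near-boundary charges, and this is handled by the $\sqrt{3}\,r_0/2$ projection separation together with the normal-slab confinement via $\phi$.
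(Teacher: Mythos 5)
Your argument is correct in substance but follows a genuinely different route from the paper's, at least for Part (1). The paper's construction is purely set-theoretic: it takes $P$ to be the set of charges $p$ with $K_{r_0/2}(p)\cap\partial K_\ell\neq\emptyset$, observes that the cubes $K_{r_0/2}(p)$, $p\in P$, are pairwise disjoint (by the separation of Proposition~\ref{separation}) and contained in $K_{\ell+1}(a)\setminus K_{\ell-1}(a)$, and defines $\Gamma:=K_\ell\cup\bigcup_{p\in P}K_{r_0/2}(p)$ -- a ``crenellated'' cube in which every near-boundary charge sits at the center of its own added cube; the verification that this is a bi-Lipschitz, $L^\infty$-small deformation of $K_\ell$ is explicitly left to the reader. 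Your bump construction instead builds the deformation $f$ explicitly, which is more work but makes the bi-Lipschitz claim quantitative (the normal derivative stays in $[1/4,7/4]$ by your bounds on $\phi'$ and the displacement, so injectivity along normal fibers is clear); the price is that you must chase constants. Two loose ends there: (i) near edges and corners a charge can lie within $r_0/4$ of several faces at once, so the supports of bumps attached to different faces are \emph{not} disjoint merely because $\ell>r_0$, and $\op{dist}(x,F)$ is no longer the normal distance there -- this needs a separate (if routine) treatment, of the same ``tedious'' nature the paper skips; (ii) with a linear profile for $\psi_p$ the deformed boundary patch over $D_p$ can pass within about $r_0/12$ of the charge $p$ itself (away from the apex), so to get the literal constant $r_0/8$ for charges that remain in $\Gamma$, or a two-sided bound as needed in Corollary~\ref{chircor}, you should give $\psi_p$ a plateau near $P_F(p)$ or shrink the offset. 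For Part (2) both proofs are the same pigeonhole principle; the paper runs it over the one-parameter family of concentric dilates $K_{\ell+\tau}(a)$, $\tau\in[-1,1]$, covering the bad set $T_{r_1}$ by at most $|P|\le Cr_0^{-d}\ell^{d-1}$ intervals of length $2r_1$, whereas you average over translations $v\in[-\tfrac12,\tfrac12]^d$. Your version works equally well, but note that the paper's yields a cube still centered at $a$ (matching the notation $K_\ell'(a)$ in the statement), while yours yields a translate $K_\ell(a+v)$; this is immaterial for the application but worth flagging.
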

\begin{proof}
\textbf{Part 1.} This follows Step 1 of the proof of \cite[Prop. 5.6]{rs}. Consider the set $P$ points $p$ such that $K_{r_0/2}(p)\cap \partial K_\ell\neq\emptyset$. Then all cubes $K_{r_0/2}(p),\ p\in P$ are disjoint and included in the set $K_{\ell+1}(a)\setminus K_{\ell-1}(a)$. 
Then define $\Gamma:=K_\ell\cup\bigcup_{p\in P} K_{r_0/2}(p)$. The fact that $\Gamma$ is a bi-lipschitz and close to the identity deformation of $K_\ell$ is a straightforward but tedious argument that we leave to the reader.\\

\textbf{Part 2.} We prove that if \eqref{conditioneta} holds for a small enough $C$ depending on $d$ only, then $\Gamma$ can be chosen to be a hyperrectangle. Note first that due to the separation condition, if $r_0<1$ then the $r_0/2$-balls with centers in $P$ are disjoint and contained in $K_{\ell+1}\setminus K_{\ell-1}$, therefore by a straightforward volume comparison the number of charges in $P$ satisfies
\begin{equation}\label{boundnumcharg}
|P|\le \frac{|K_{\ell+1}\setminus K_{\ell-1}|}{|B_{r_0/2}|} \le C r_0^{-d} \ell^{d-1},
\end{equation}
where $C_{\eqref{boundnumcharg}}$ is a packing constant which depends only on the dimension. Let now
\[
T_{r_1}:=\left\{\tau\in[-1,1]:\ \exists p\in \Lambda, \partial K_{\ell+\tau} \cap B_{r_1}(p)\neq \emptyset\right\},
\]
Then $T_{r_1}$ can be covered by at most $|P|$ intervals of size $2r_1$ and thus due to the bound \eqref{boundnumcharg}, we find that $|T_{r_1}|\le C_{\eqref{boundnumcharg}}r_1 r_0^{-d}\ell^{d-1}$. Now fix $C_{\eqref{conditioneta}}$ depending on the packing constant $C_{\eqref{boundnumcharg}}$ only, such that for $r_1$ satisfying \eqref{conditioneta} we have $C_{\eqref{boundnumcharg}}r_1 r_0^{-d}L^{d-1}<1$. Therefore $[-1,1]\setminus T_{r_1}\neq \emptyset$, furnishing the desired cube $K_\ell'(a)$.
\end{proof}

\subsection{Proof of Theorem~\ref{secondmainthm} given Theorem~\ref{mainthm}}
For each $K_\ell(a_n)$ as in Theorem~\ref{secondmainthm} we consider a modification $\Gamma_n$ as obtained by applying Proposition~\ref{creneau} (the case of cubes $K_\ell'(a_n)$ is completely analogous). Then firstly, we obtain via Proposition~\ref{prodecr} and Corollary \ref{chircor} that for all $0<\alpha<\eta<r_0/2$, where $r_0$ is the separation constant of Proposition~\ref{separation}, there holds
\begin{equation}\label{erroreta}
\left|\mathcal W_\eta(E_n',\Gamma_n) - \mathcal W_\alpha(E_n',\Gamma_n)\right|\le C\bar m \min(\eta^{d-s},\eta^d|\log\eta|) |\Gamma_n|,
\end{equation}
Second, recall that by the result \eqref{estimationproofth} together with the generalization of such bounds, which is done as in Section~\ref{proofnocrenel}, we have that for the choices $\sigma=\pm1$ there holds
\begin{equation}\label{boundsoncube}
\left|\frac{\mathcal W_\eta(E'_n,K_{\ell + \sigma}(a_n))}{|K_{\ell + \sigma}(a_n)|}-\frac{1}{| K_{\ell + \sigma}|}\int_{K_{\ell + \sigma}(a_n)} \min_{\mathcal A_{m_V'(x)} }\mathcal W dx\right|\le (1+g(\eta))(1+C_1)o_{\ell,n\to\infty}(1)+o_{\eta\to0}(1).
\end{equation}
Moreover, if $M_{s,d}$ is the shorthand used in the bound \eqref{msdrho}, we obtain
\begin{equation}\label{boundeasygamman}
\left|\int_{K_{\ell+1}(a_n)\setminus\Gamma_n}\min_{\mathcal A_{m_V'(x)}}\mathcal W dx\right|\le M_{s,d}(m_V)|K_{\ell+1}(a_n)\setminus\Gamma_n|\le C M_{s,d}\ell^{d-1}.
\end{equation}
Using the bounds \eqref{boundsoncube}, the fact that $K_{\ell-1}(A_n)\subset \Gamma_n\subset K_{\ell+1}(a_n)$, the definition of $\mathcal W_\eta$ (and in particular the fact that the two terms defining $\mathcal W_\eta(E_n',A)$ as in \eqref{Weta} are additive and monotone under inclusion with respect to $A$) and Remark \ref{sepcariche}, we obtain
\begin{equation}\label{changetogamma}
\left|\frac{\mathcal W_\eta(E_n',K_{\ell+1}\setminus\Gamma_n)}{|\Gamma_n|}\right|\le (1+g(\eta))(1+C_1)o_{\ell,n\to\infty}(1)+o_{\eta\to0}(1).
\end{equation}

By summing up \eqref{erroreta}, \eqref{boundsoncube}, \eqref{boundeasygamman} and \eqref{changetogamma}, we find that for all $0<\alpha<\eta<r_0/2$ there holds
\begin{equation}\label{secondmainthmfinal}
\left|\frac{\mathcal W_\alpha(E'_n,\Gamma_n)}{|\Gamma_n|}-\frac{1}{| \Gamma_n|}\int_{\Gamma_n} \min_{\mathcal A_{m_V'(x)} }\mathcal W dx\right|\le (1+g(\eta))(1+C_1)o_{\ell,n\to\infty}(1)+o_{\eta\to0}(1).
\end{equation}
Therefore for all $\epsilon>0$ we may fix $\eta<r_0/2$ such that the rightmost term above is $\le \epsilon$ and then let $\alpha\to 0$ obtaining that the limit in \eqref{mainlimitbis} is $\le\epsilon$. As $\epsilon>0$ is arbitrary this concludes the proof of \eqref{mainlimitbis}.
\section{Discrepancy bounds}
In this section we show how to deduce the result of Theorem~\ref{discrepancybound} from the one of Theorem~\ref{mainthm}. We consider a scale $\underline \ell\le\tfrac{1}{4}\ell$ to be more precisely fixed later, and we find that due to Theorem~\ref{mainthm} there holds 
\begin{equation*}
\mathcal W_\eta(E_n',K_{\underline\ell}(b)) \le \int_{K_{\underline \ell}(b)}\min_{\mathcal A_{m'_V(x)}}\mathcal W dx + |K_{\underline\ell}|o_{\eta\to0, n,\underline \ell\to\infty}(1).
\end{equation*}
Due to the charge separation condition and to Remark \ref{sepcariche} and to the bounds on $m_V'$, we find, using the scaling of $c\mapsto\min_{\mathcal A_c}\mathcal W$ (see \eqref{msdrho}) that 
\begin{equation}\label{c1bar}
\int_{K_{\underline \ell}(b)}|E_n'|^2 \le \int_{K_{\underline \ell}(b)}\min_{\mathcal A_{m'_V(x)}}\mathcal W dx + C_{d,\overline m}(1+g(\eta))|K_{\underline\ell}|\le \bar C_1|K_{\underline \ell}(b)|,
\end{equation}
where $\bar C_1$ depends only on $\underline m, \overline m, s, d$. By Besicovitch's covering theorem, we find a cover of $\partial K_{\ell - 2\underline \ell}(a)$ by the union of $J_d$ families $\mathcal F_j$ of cubes $K_{\underline \ell}(b),b\in\partial K_{\ell - 2\underline\ell}$, such that the cubes $K_{2\underline \ell}(b)$ belonging to a given $\mathcal F_j$ are disjoint. Then for each $j\in J_d$ and each $K\in\mathcal F_j$, from \eqref{c1bar} by a mean value theorem there exist $\underline\ell_1\in[\underline \ell,2\underline\ell]$ (note that $\underline\ell_1$, unlike $\underline\ell$, depends on the center $b$, but we omit this dependence to make notations lighter) such that 
\begin{equation}\label{goodslice}
\int_{\partial K_{\underline\ell_1}(b)\times\mathbb R^k} |E_n'|^2\le C\bar C_1\underline\ell^{d-1}.
\end{equation}
Then from \eqref{goodslice} we obtain by Cauchy-Schwartz inequality
\begin{equation}\label{bek=0}
\left|\int_{\partial K_{\underline \ell_1}(b)} E_n'\cdot \nu\right|\le|\partial K_{\underline\ell_1}(b)|^{1/2}\left(\int_{\partial K_{\underline\ell_1}(b)}|E_n'|^2\right)^{1/2}\le C \bar C_1\underline \ell^{d-1}.
\end{equation}
As a consequence of the properties of our cover, we also obtain that the cubes $K_{\underline\ell_1}(b)$ cover $\partial K_{\ell-2\underline \ell}(a)$. In particular, 
\[
\partial R_{\ell,\underline\ell}:=\partial\left(K_{\ell-2\underline \ell}(a)\cup\bigcup_{j\in J_d}\bigcup_{K_{\underline \ell}\in\mathcal F_j}K_{\underline\ell_1}\right) \subset \partial\left(\bigcup_{j\in J_d}\bigcup_{K_{\underline \ell}\in\mathcal F_j}K_{\underline\ell_1}\right).
\]
It then follows that, using \eqref{bek=0} and the fact that the cubes $K_{\underline\ell_1}$ corresponding to a single $\mathcal F_j$ are disjoint, 
\begin{equation}\label{unionbek=0}
\left|\int_{\partial R_{\ell,\underline\ell}}E_n'\cdot \nu\right|\le\sum_{j\in J_d}\sum_{K_{\underline\ell}\in\mathcal F_j}\left|\int_{\partial K_{\underline\ell_1}}E_n'\cdot\nu\right|\le|J_d|\max_{j\in J_d}|\mathcal F_j| C \bar C_1\underline\ell^{d-1}.
\end{equation}
As the $K_{\underline \ell}$ cover $\partial K_{\ell-2\underline \ell}(a)$, also for the larger cubes $K_{\underline\ell_1}$ the sets $\partial K_{\ell-2\underline\ell}(a)\cap K_{\underline\ell_1}$ corresponding to the union of all the $\mathcal F_j$'s cover $\partial K_{\ell - 2\underline\ell}(a)$.
We also have that, as the centers of such $K_{2\underline\ell}$ belong to $\partial K_{\ell-2\underline\ell}(a)$, there holds
\[
\mathcal H^{d-1}(\partial K_{\ell-2\underline \ell}(a)\cap K_{2\underline\ell})\le C_d \underline\ell^{d-1}. 
\]
Therefore, as the $K_{2\underline \ell}(b)$ corresponding to each single family $\mathcal F_j$ are disjoint, by summing the above inequalities over each fixed $\mathcal F_j$ separately, we find the lower bound below, while the upper bound is straightforward:
\[
\max_{j\in J_d}|\mathcal F_j|C_d \underline\ell^{d-1}\le\mathcal H^{d-1}(\partial K_{\ell-2\underline\ell}(a))\le C(\ell-2\underline\ell)^{d-1}.
\]
Using this and \eqref{unionbek=0}, by integrating the equation \eqref{hnp} satisfied by $E_n'$ over $R_{\ell,\underline\ell}$ and using Stokes' theorem, we find that (for a new constant $C$ depending on the above $C_d$ and using the previous choice $\underline\ell<\tfrac14\ell$ as above)
\begin{equation}\label{almostdiscrep}
\left|\nu_n'(R_{\ell,\underline\ell}) -\int_{R_{\ell,\underline\ell}}\mu_V'\right|=\left|\int_{\partial R_{\ell,\underline\ell}}E_n'\cdot \nu\right|\le C \bar C_1 |J_d| (\ell-2\underline\ell)^{d-1} \le C \bar C_1(\ell-2\underline\ell)^{d-1}\le 2^{1-d}\bar C_1\ell^{d-1}.
\end{equation}
To reach the desired bound, as \eqref{almostdiscrep}, \eqref{discrepdecay} involve additive quantities, by triangle inequality we just need to estimate $|\nu_n' - \int\mu_V'|$ over $K_{\ell}(a)\setminus R_{\ell,\underline\ell}$. To this aim we separately bound from above $\nu_n'(K_{\ell}(a)\setminus R_{\ell,\underline\ell})$ and $\int_{K_{\ell}(a)\setminus R_{\ell,\underline\ell}}\mu_V'$ by using Remark \ref{sepcariche} and the bounds on $m_V$, together with the fact that $|K_{\ell}(a)\setminus R_{\ell,\underline\ell}|\le C \underline\ell\ell^{d-1}$. Therefore we obtain 
\begin{equation}\label{discrep2}
\left|\nu_n'(K_\ell(a)) -\int_{K_\ell(a)}\mu_V'\right|\le C (\bar C_1 + (1+g(\eta))\underline\ell)\ell^{d-1}.
\end{equation}
Here $\bar C_1$ depends only on the bounds in \eqref{mainlimit} at scale $\underline\ell$, thus $\underline\ell$ can be chosen so that the energy error in \eqref{mainlimit} at scale $\underline\ell$ is uniformly bounded on all $\underline\ell$-cubes. As the right hand side of \eqref{discrep2} does not have any further dependece on $\underline\ell$, we find \eqref{discrepdecay}.


\end{document}